\newtheorem{theorem}{Theorem}
\newtheorem{lemma}[theorem]{Lemma}
\newtheorem{observation}[theorem]{Observation}
\theoremstyle{definition}
\newtheorem{claim}[theorem]{Claim}
\newtheorem{remark}[theorem]{Remark}
\newcommand{\set}[1]{\left\{ #1 \right\}}
\newcommand{\pset}{{\mathcal{P}}}
\newcommand{\dset}{{\mathcal{D}}}
\newcommand{\fset}{{\mathcal{F}}}
\newcommand{\eps}{{\varepsilon}}
\newcommand{\poly}{\mathsf{poly}}
\renewcommand{\cong}{\mathsf{cong}}
\newcommand{\dist}{\textnormal{\textsf{dist}}}
\newcommand\vol{\mathsf{cost}}
\def\card#1{\left| #1 \right|}
\def\deg{\textrm{deg}}
\def\pr#1{\mathrm{Pr}\left[ #1 \right]}
\def\ex#1{{\mathbb{E}}\left[ #1 \right]}
\newcommand{\TS}{\mathsf{TS}}
\newcounter{note}
\newcommand{\zesn}{\mathsf{0EwSN}}
\begin{document}

\begin{titlepage}
	
	\title{On $(1+\eps)$-Approximate Flow Sparsifiers}

%\iffalse
	\author{Yu Chen\thanks{EPFL, Lausanne, Switzerland. Email: {\tt yu.chen@epfl.ch}. Supported by ERC Starting Grant 759471.} \and Zihan Tan\thanks{Rutgers University, NJ, USA. Email: {\tt zihantan1993@gmail.com}. Supported by a grant to DIMACS from the Simons Foundation (820931).}} 
%\fi
	
	\maketitle

	\thispagestyle{empty}
	\begin{abstract}
		
Given a large graph $G$ with a subset $|T|=k$ of its vertices called terminals, a \emph{quality-$q$ flow sparsifier} is a small graph $G'$ that contains $T$ and preserves all multicommodity flows that can be routed between terminals in $T$, to within factor $q$. The problem of constructing flow sparsifiers with good (small) quality and (small) size has been a central problem in graph compression for decades.

A natural approach of constructing $O(1)$-quality flow sparsifiers, which was adopted in most previous constructions, is contraction.
Andoni, Krauthgamer, and Gupta constructed a sketch of size $f(k,\eps)$ that stores all feasible multicommodity flows up to a factor of $(1+\eps)$, raised the question of constructing quality-$(1+\eps)$ flow sparsifiers whose size only depends on $k,\eps$ (but not the number of vertices in the input graph $G$), and proposed a contraction-based framework towards it using their sketch result. 

In this paper, we settle their question for contraction-based flow sparsifiers, by showing that quality-$(1+\eps)$ contraction-based flow sparsifiers with size $f(\eps)$ exist for all $5$-terminal graphs, but not for all $6$-terminal graphs.
Our hardness result on $6$-terminal graphs improves upon a recent hardness result by Krauthgamer and Mosenzon on exact (quality-$1$) flow sparsifiers, for contraction-based constructions. Our construction and proof utilize the notion of \emph{tight spans} in metric geometry, which we believe is a powerful tool for future work.

	\end{abstract}
\end{titlepage}

\renewcommand{\baselinestretch}{0.75}\normalsize
\tableofcontents
\renewcommand{\baselinestretch}{1.0}\normalsize

\newpage

\section{Introduction}

Graph compression is a paradigm of converting large graphs into smaller ones that faithfully preserve crucial features, such as flow/cut values and distances. It involves reducing the size of graphs prior to subsequent computation, and thereby significantly saving computational resources. 
This paradigm has proved powerful in designing faster and better approximation algorithms on graphs.

We study a sub-paradigm of graph compression called \emph{vertex sparsification}, and more specifically, we study flow-approximating\footnote{A closely related notion is cut-approximating vertex sparsifiers. We review the previous work on cut sparsifiers and discuss the connection between cut and flow sparsifiers in \Cref{sec: related}.} vertex sparsifiers first introduced in \cite{hagerup1998characterizing,moitra2009approximation,leighton2010extensions}.
In this setting, we are given a large graph $G$ together with a set $T$ of $k$ designated vertices called \emph{terminals}, and the goal is to compute a small graph $G'$ that contains $T$ and preserves all multicommodity flows that can be routed between terminals in $T$.
Informally\footnote{A formal definition is provided in \Cref{sec: prelim}.}, we say that $G'$ is a \emph{flow sparsifier} of $G$ with respect to $T$, with \emph{quality $q$} for some real number $q>1$, iff every multicommodity flow on $T$ that is routable in $G$ can be routed in $G'$, and every multicommodity flow on $T$ that is routable in $G'$ can be routed in $G$ if the capacities of edges in $G$ are increased by factor $q$.
Ideally, we would like to construct flow sparsifiers $G'$ with good (small) quality and small size (measured by the number of vertices in $G'$).

In a restricted case where $V(G')=T$ is required (that is, the sparsifier may only contain terminals), 
it was shown by Leighton and Moitra \cite{leighton2010extensions} that the quality-$O(\log k/\log\log k)$ flow sparsifiers exist, and Charikar, Leighton, Li and Moitra \cite{charikar2010vertex} showed that they can be computed efficiently.
On the negative side, a lower bound of $\Omega(\log\log k)$ on quality was proved in \cite{leighton2010extensions}, and this bound was later improved to $\Omega(\sqrt{\log k/\log\log k})$ by Makarychev and Makarychev \cite{makarychev2010metric}.
Thus, the next question, which has also been a central question on flow/cut sparsifiers over the past years, is:

\vspace{-10pt}
\[\emph{Can better quality sparsifiers be achieved by allowing a small number of Steiner vertices}? 
\]
\vspace{-10pt}

A natural way of constructing flow sparsifiers is by contraction. That is, we compute a partition $\fset$ of the vertices in $V(G)$ into disjoint sets, and then contract each set $F\in \fset$ into a supernode to obtain $G'$.
We call such a graph $G'$ a \emph{contraction-based flow sparsifier}.
To the best of our knowledge, most previous constructions of flow sparsifiers with Steiner nodes are contraction-based.
Chuzhoy \cite{chuzhoy2012vertex} showed that there exist $O(1)$-quality contraction-based flow sparsifiers with size $C^{O(\log\log C)}$, where $C$ is the total capacity of all terminal-incident edges (assuming each edge has capacity at least $1$).
Andoni, Gupta, and Krauthgamer \cite{andoni2014towards} showed the construction of quality-$(1+\eps)$ flow sparsifiers for quasi-bipartite graphs and exact (quality-$1$) contraction-based sparsifiers for planar graphs where all terminals lie on the same face (where they used the results of \cite{krauthgamer2013mimicking}). 
For general graphs, they constructed a sketch of size $f(k,\eps)$ that stores all feasible multicommodity flows up to factor $(1+\eps)$, raising the hope for quality-$(1+\eps)$ flow sparsifiers of size $f(k,\eps)$ for general graphs. 
On the negative side, the only lower bound, due to Krauthgamer and Mosenzon \cite{krauthgamer2023exact}, showed that there exist $6$-terminal graphs whose quality-$1$ flow sparsifiers must have an arbitrarily large size. Their construction can be also modified to show that quality-$(1+\eps)$ flow sparsifiers for $k$-terminal networks must contain at least $f(k,\eps)=\Omega(k/\poly(\eps))$ vertices for $k\ge 6$, and they raised proving any upper bound of $f(k,\eps)$ as a challenging open question.

\subsection{Our Results}

In this paper, we make progress on the size of quality-$(1+\eps)$ contraction-based flow sparsifiers for general graphs. We show that, every $5$-terminal network admits a quality-$(1+\eps)$ contraction-based flow sparsifier whose size depends only on $\eps$, while this is not true for all $6$-terminal networks.
Our main results for $5$-terminal and $6$-terminal networks are in sharp contrast with each other, and are formally stated in the following theorems.

\begin{theorem}
\label{main: upper}
Let $\eps>0$ be any real number. Then every graph with $5$ terminals admits a quality-$(1+\eps)$ contraction-based flow sparsifier on $2^{\poly(1/\eps)}$ vertices.
\end{theorem}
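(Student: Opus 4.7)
The natural first move is to reformulate flow sparsification via the \emph{flow polytope} $F(G)\subseteq \mathbb{R}_{\geq 0}^{\binom{T}{2}}$ consisting of all demand vectors routable in $G$. A contraction $G'$ of $G$ is a quality-$(1+\eps)$ flow sparsifier iff $F(G)\subseteq F(G')\subseteq (1+\eps)\, F(G)$. By LP duality, this is equivalent to preserving, for every semimetric $\ell$ on $T$, the minimum cost of a $0$-extension of $\ell$ from $T$ to $V(G)$. Thus the task reduces to building, from $G$, a single contraction that nearly preserves this family of $0$-extension values.

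The plan is to use \emph{tight spans} to build a single partition of $V(G)$ that certifies the above simultaneously for all $\ell$. For any metric $\ell$ on $T$, the tight span $\tset_\ell$ is the injective hull of $(T,\ell)$; a minimum $0$-extension relative to $\ell$ factors through a map $V(G)\to \tset_\ell$. Because $|T|=5$, every such tight span has dimension at most $2$, and the space of semimetrics on $5$ points decomposes into only $O(1)$ cells on which the tight span has a fixed combinatorial type. This bounded complexity is what makes the $k=5$ case tractable; for $k=6$ the tight span can become $3$-dimensional and the space of combinatorial types is much richer, which is what drives the companion hardness result in this paper.

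Concretely, I would pick a set $\{\ell_1,\dots,\ell_N\}$ of $N=2^{\poly(1/\eps)}$ semimetrics on $T$ forming an $\eps$-net (in an appropriately scale-invariant sense), discretize each tight span $\tset_{\ell_i}$ at scale $\eps$ (which, being at most $2$-dimensional, has size $\poly(1/\eps)$), and define $\Phi:V(G)\to \prod_i \tset_{\ell_i}$ by sending $v$ to the tuple of its quantized positions under the optimal $0$-extension for each $\ell_i$. Let $\fset$ be the partition of $V(G)$ into level sets of $\Phi$ and let $G'$ be the contraction of $G$ along $\fset$; then $|V(G')|\leq 2^{\poly(1/\eps)}$ by construction. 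The inclusion $F(G)\subseteq F(G')$ is automatic for any contraction, while for the reverse direction one reroutes any flow in $G'$ through a designated representative of each fiber: two vertices in the same fiber agree in every quantized tight-span coordinate, and a Lipschitz argument on tight spans then shows the rerouting inflates cost by at most a $(1+\eps)$ factor.

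The main obstacle will be justifying that a single partition suffices for \emph{all} semimetrics after choosing only $2^{\poly(1/\eps)}$ representatives. This requires two ingredients: (i) a stability statement saying that if $\ell,\ell'$ are close then their tight spans and their optimal $0$-extensions are close, so that $\eps$-net control over $\{\ell_i\}$ transfers to all $\ell$; and (ii) a careful bound on $N$ that exploits the $O(1)$-many combinatorial types of $5$-point tight spans rather than naively enumerating over the full $10$-dimensional cone of metrics. Combining these without blowing past $2^{\poly(1/\eps)}$ is the technical heart of the argument, and I expect the low dimensionality of each tight span to be exactly what drives the single-exponential bound.
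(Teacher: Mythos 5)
Your high-level intuition is in the right direction -- tight spans, the dichotomy between $2$-dimensional tight spans for $5$ points and $3$-dimensional ones for $6$ points -- but the proposal diverges from the paper's route and has several genuine gaps.

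First, a quantitative issue in your size bound. The cone of semimetrics on $5$ points, after scale-normalization, is a compact $O(1)$-dimensional set, so an $\eps$-net has size $\poly(1/\eps)$, not $2^{\poly(1/\eps)}$. With the $N=2^{\poly(1/\eps)}$ you state, the number of level sets of $\Phi\colon V(G)\to\prod_{i=1}^{N}\tset_{\ell_i}$ is $\bigl(\poly(1/\eps)\bigr)^{N}$, which is doubly exponential in $\poly(1/\eps)$, so the claim ``$|V(G')|\le 2^{\poly(1/\eps)}$ by construction'' does not follow as written.

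Second, and more fundamentally, you explicitly leave unresolved the step you yourself flag as the technical heart: the stability statement (ingredient (i)) that transfers control from the net $\{\ell_i\}$ to all semimetrics $\ell$. This is not a small gap -- it is the substance of the reduction, and it is not a routine Lipschitz fact because minimum $0$-extensions are highly non-unique. The paper avoids building this reduction from scratch: it invokes Proposition 4.2 of Andoni--Gupta--Krauthgamer (stated here as \Cref{thm: AGK}), which already converts ``for every edge-length $\ell$ and demand $\dset$ the $\zesn_{\textsf{ave}}$ instance has a solution of size $f(k)$ and cost ratio $q$'' into ``quality-$(1+\eps)q$ contraction-based sparsifier of size $f(k)^{(O(\log k/\eps))^{k^2}}$.'' All of the $\eps$-dependence in \Cref{main: upper} comes from that black box; the paper's own contribution for $5$ terminals is per-metric and $\eps$-free.

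Third, your plan for the per-metric step -- discretize each $2$-dimensional tight span at scale $\eps$, giving $\poly(1/\eps)$ cells -- is weaker than what the paper actually proves. \Cref{thm: upper 5} says every $5$-terminal $\zesn$ instance has a solution with $|\fset|\le 30$ (a constant independent of $\eps$) and cost \emph{exactly} equal to $\sum_{(u,v)\in E}c(u,v)\,\dist_\ell(u,v)$ in expectation. This is achieved not by a deterministic grid but by (a) a non-expansive projection of all vertices onto $\TS(D)$ (\Cref{lem:proj_dis}, the analogue of the ``Lipschitz argument'' you allude to, but proved rather than assumed), followed by (b) a \emph{randomized} $\ell_1$-separable decomposition of the tight span into $16$--$22$ blocks which, in expectation, does not expand any edge (\Cref{lem: non-expanding}). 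The randomized separability of $\ell_1$ metrics is precisely what lets the $5$-terminal case work with constant size and no quality loss in the $\zesn$ step, and this idea does not appear in your proposal. Without it, discretizing each tight span and taking a joint refinement over a net cannot match the size bound unless you also control the number of coordinates as noted above, and even then you would obtain an $(1+\eps)$-approximate rather than exact $\zesn$ solution, which changes how the losses compound through the reduction.

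In short: the two genuine missing ideas are (1) the AGK reduction that isolates all $\eps$-dependence and lets you argue per-fixed-$\ell$, and (2) the non-expansive projection onto $\TS(D)$ combined with the randomized $\ell_1$-separable decomposition that gives an exact constant-size $\zesn$ solution for every length function. Your proposal gestures at (2) via ``quantized tight-span coordinates'' but replaces the key randomized partition with an $\eps$-grid, and it replaces (1) with a net-over-metrics scheme whose size analysis is off and whose correctness hinges on a stability lemma you acknowledge is unproven.
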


Previously, it was shown that every $4$-terminal network admits an exact sparsifier with $O(1)$ vertices (see e.g., \cite{andoni2014towards}), while there is no known upper or lower bound for exact or $(1+\eps)$ flow sparsifiers $5$-terminal networks.

\begin{theorem}
\label{main: lower}
For every positive integer $N$, there exists a graph $G$ with $6$ terminals, such that any contraction-based flow sparsifier of $G$ on at most $N$ vertices has quality at least $1+10^{-18}$.
\end{theorem}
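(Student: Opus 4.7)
The plan starts by dualizing the quality of a contraction-based sparsifier into an extension-cost ratio for metrics on the terminals. For any graph $H$ with terminals $T$ and any metric $d_T$ on $T$, let $f_H(d_T)=\min_{d}\mathsf{cost}(H,d)$ where $d$ ranges over all metrics on $V(H)$ extending $d_T$. Standard LP duality between multicommodity flow and metric embedding gives that for any contraction $G'$ of $G$ obtained from a partition $\fset$ of $V(G)$,
\[
q(G,G')=\sup_{d_T}\,\frac{f^{\fset}_{G}(d_T)}{f_{G}(d_T)},
\]
where $f^{\fset}_{G}(d_T)$ restricts extensions to metrics that are constant on each part of $\fset$. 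It thus suffices to exhibit, for every $N$, a $6$-terminal graph $G=G_N$ such that for every partition $\fset$ of $V(G_N)$ with $|\fset|\le N$, some terminal metric $d_T$ witnesses this ratio being at least $1+10^{-18}$.

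\textbf{Construction via tight spans.} I would choose a terminal metric $d_T^*$ on the six terminals whose tight span $T^*$ is three-dimensional---this is precisely where the $5$-versus-$6$-terminal dichotomy lives, since tight spans of any $\le 5$ points have dimension at most $2$. Build $G_N$ as a ``tight-span gadget'': internal vertices form a $\delta$-net $P_\delta$ of $T^*$ with $\delta=\delta(N)$ small enough, joined by edges whose capacities encode the local geodesic structure of $T^*$. The capacities are tuned so that the \emph{unique} cost-optimal extension of $d_T^*$ assigns to each internal vertex the metric value of its own tight-span coordinate, and so that the extension-cost function is \emph{strongly convex} near this optimum: perturbing the labels by a vector of magnitude $\Delta$ increases the cost by $\Omega(\Delta^2)$.

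\textbf{Pigeonhole and quantitative loss.} Choosing $|P_\delta|$ sufficiently larger than $N$, any partition of $V(G_N)$ into $\le N$ parts must merge two net points $p,q$ with tight-span positions differing by at least a fixed constant $\eta_0>0$; this is because the $3$-dimensional $T^*$ cannot be covered by $N$ ``thin'' one-dimensional strands without a transverse jump. Any extension constant on $\fset$ must label $p$ and $q$ identically, and the strong-convexity bound above yields $f^{\fset}_G(d_T^*)\ge(1+c_0)\,f_G(d_T^*)$ for an absolute constant $c_0>0$. A careful optimization of the gadget parameters then makes $c_0\ge 10^{-18}$.

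\textbf{Main obstacle.} The main difficulty lies in the \emph{quantitative} step: turning the qualitative ``merging hurts'' of the Krauthgamer--Mosenzon exact ($q=1$) lower bound into a uniform $(1+10^{-18})$ bound that holds for every adversarial partition and every $N$. The heart of the argument is a rigidity lemma for $6$-point tight spans: near-optimal metric extensions must lie close to the true tight-span coordinates with a universal modulus. This is where the geometry of the tight span, and not merely its dimension, is genuinely used, and it is from this step that the specific constant $10^{-18}$ emerges via an explicit analysis of the worst local merge.
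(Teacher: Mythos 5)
Your high-level framing is right---the duality reduction and the 3-dimensionality of $6$-point tight spans are exactly the paper's starting points---but the quantitative core of your argument has a genuine gap in two places.

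First, the \emph{strong convexity} claim is not available. The $0$-extension cost $\sum_{(u,v)}c(u,v)\,\delta(F(u),F(v))$ is \emph{linear} in the metric $\delta$; when you instead view it as a function of positions in $\TS(D)$ equipped with $(\cdot,\cdot)_{\TS}=\ell_\infty$, it is a sum of $\ell_\infty$-distances and hence piecewise linear and (merely) convex. No tuning of capacities can make a sum of distances grow as $\Omega(\Delta^2)$ under a perturbation of magnitude $\Delta$; in $\ell_1$ or $\ell_\infty$ geometry there are always flat directions. The paper never uses (and could not use) such a convexity modulus; instead it tracks, edge by edge, the \emph{loss} $(f(v_j),f(v_{j+1}))_{\TS}+(t,f(v_j))_{\TS}-(t,f(v_{j+1}))_{\TS}\ge 0$ and shows that these first-order slacks, summed over the paths in all four critical directions, must total $\Omega(L^2)$.

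Second, even granting some lower bound on the cost of forcibly identifying a single far pair $p,q$, your pigeonhole step only yields an \emph{additive} increment, not a multiplicative one. Merging one pair of net points at distance $\eta_0$ affects only the $O(1)$ edges incident to them, so the cost goes up by $O(\eta_0)$, while $\opt$ scales with $|P_\delta|$; the ratio tends to $1$ as you refine the net, which is the opposite of what you need. The reason the paper does not collapse this way is that with at most $N$ parts, the image $\{f(v)\}$ has at most $N$ distinct points, so the coordinate $p(v)[x]$ takes at most $N$ values; the paper then shows (\Cref{lem:line}, via \Cref{clm:triangle} and the partition of $[D,L-D]$ into ``smooth'' and ``steep'' segments in \Cref{clm:partition}) that \emph{every} line of the lattice in direction $3$ accumulates loss $\Omega(\eps L)$, giving $\Omega(L^2)$ total loss against $\opt=O(L^2)$ and hence a constant multiplicative gap. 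Your ``rigidity lemma'' sketch (near-optimal extensions lie near tight-span coordinates) is also not what is proved, nor would it alone bridge this additive-to-multiplicative gap.

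Finally, your duality identity glosses over a real technical point. The LP dual only forces the \emph{weighted average} $\sum_{t,t'}\dset(t,t')\,\delta(F(t),F(t'))$ to be preserved, not the full terminal metric; this is why the paper must prove the lower bound for $\zesn_{\textsf{ave}}$ rather than for $\zesn$. That requires extra work (the auxiliary heavy ``enforcing'' paths, the extra $a$--$e$ demand, and \Cref{lem:adjust} that rounds an almost-metric solution into one that exactly respects the tight-span structure), which is absent from your sketch.
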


Our results settle the open question on the size upper bound of $(1+\eps)$ flow sparsifiers by \cite{krauthgamer2023exact}, for contraction-based constructions. Our lower bound in \Cref{main: lower} can also be viewed as improving upon the previous lower bound of \cite{krauthgamer2023exact} (where they showed an arbitrarily large size lower bound for exact flow sparsifiers of $6$-terminal networks, and we proved the same result for quality-$(1+\eps)$ flow sparsifiers), again for contraction-based constructions. 
Compared with the results in \cite{andoni2014towards} (that constructed a sketch of size $f(k,\eps)$ for $(1+\eps)$-approximately storing all multicommodity flows), our lower bound in \Cref{main: lower} illustrates that contraction-based constructions, which covered most previous algorithmic results, are not optimal data structures for preserving the flow structure of graphs to within factor $(1+\eps)$.

\subsection{Technical Overview}
\label{sec: tech_overview}

The size lower bound (for exact flow sparsifiers) in \cite{krauthgamer2023exact} was proved by analyzing  the slope of the ``feasible demand polytope'' of certain graphs. However, it seems hard for their approach to give strong lower bounds for quality-$(1+\eps)$ flow sparsifiers. We employ a different approach, more similar to the ones in \cite{moitra2009approximation,leighton2010extensions} and \cite{andoni2014towards}. 
%which is looking at dual of the multicommodity flow LP and analyzing the some version of $0$-extension problem.

In \cite{moitra2009approximation} and \cite{leighton2010extensions}, the connection between the \emph{$0$-Extension} problem and the construction of flow/cut sparsifier without Steiner nodes was established. In a $0$-Extension instance, we are given an undirected edge-capacitated graph $G=(V,E,c)$, a set $T\subseteq V$ of its vertices called \emph{terminals}, and a metric $D$ on terminals, and the goal is to find a mapping $f: V\to T$ that maps each vertex to a terminal in $T$, such that each terminal is mapped to itself (i.e., $f(t)=t$ for all $t\in T$), and the sum $\sum_{(u,v)\in E}c{(u,v)}\cdot D(f(u),f(v))$ is minimized. Typically, the integrality gap of its \emph{semi-metric relaxation LP} was shown to be an upper bound of the best quality achievable by flow sparsifiers (without Steiner nodes). In order to investigate flow sparsifiers with Steiner vertices, we study the following variant of $0$-Extension, called \emph{$0$-Extension with Steiner Nodes}\footnote{We remark that the version we provide in this section is not the most standard ``Steiner node" generalization of the $0$-Extension problem. We present this version here because it is easier to form a connection between this version and flow sparsifiers. We provide the most standard generalization and some discussions in \Cref{apd: comparison}}.

\paragraph{$0$-Extension with Steiner Nodes ($\zesn$).}
In an instance of the $\zesn$ problem, the input consists of
\begin{itemize}
	\item an edge-capacitated graph $G=(V,E,c)$, with length $\set{\ell_e}_{e\in E}$ on its edges; and
	\item a set $T\subseteq V$ of $k$ terminals.
\end{itemize}
A solution consists of
\begin{itemize}
	\item a partition $\fset$ of $V$, such that distinct terminals of $T$ belong to different sets in $\fset$; for each vertex $u\in V$, we denote by $F(u)$ the cluster in $\fset$ that contains it;
	\item a semi-metric $\delta$ on the clusters in $\fset$, such that for all pairs $t,t'\in T$, $\delta(F(t),F(t'))= \dist_{\ell}(t,t')$, where $\dist_{\ell}(\cdot,\cdot)$ is the shortest-path distance (in $G$) metric induced by edge length $\set{\ell_e}_{e\in E(G)}$.
\end{itemize}
We define the \emph{cost} of a solution $(\fset,\delta)$ as $\vol(\fset,\delta)=\sum_{(u,v)\in E}c(u,v)\cdot\delta(F(u),F(v))$, and its \emph{size} as $|\fset|$.
The goal is to compute a solution $(\fset,\delta)$ with small size and cost. 
Typically, the following ratio is a central measure to be minimized, called \emph{average stretch}:
\[\rho=\frac{\sum_{(u,v)\in E}c(u,v)\cdot\delta(F(u),F(v))}{\sum_{(u,v)\in E}c(u,v)\cdot\ell_{(u,v)}}.\]

In $\zesn$, instead of forcing all vertices to be mapped to terminals, we allow them to be mapped to non-terminals (or Steiner nodes), which are clusters in $\fset$ that do not contain terminals. 
We are also allowed to manipulate the distances between these non-terminals, conditioned on not destroying the shortest-path distance metric (in $G$) on terminals. We remark that a similar variant was proposed in \cite{andoni2014towards}, and we provide a detailed comparison and discussion between them in \Cref{sec: reduction} and \Cref{apd: comparison}.

Similar to  \cite{moitra2009approximation,leighton2010extensions}, using the framework of \cite{andoni2014towards}, we can show the following connection between $\zesn$ and the best quality achievable by contraction-based flow sparsifiers (with Steiner nodes):
Let $G$ be a graph, $\eps>0$ be any real number and $T$ be a set of $k$ terminals. If for every set of edge lengths $\set{\ell_e}$, the corresponding $\zesn$ instance admits a solution of size $f(k)$ and average stretch $\rho$, then $G$ has a contraction-based flow sparsifier with size $g(k,\eps)$ and quality $(1+\eps)\cdot \rho$, where $f$ and $g$ are functions that do not depend on the size of $G$. (See \Cref{sec: reduction}.)

\begin{figure}[h]
	\centering
	\scalebox{0.1}{\includegraphics{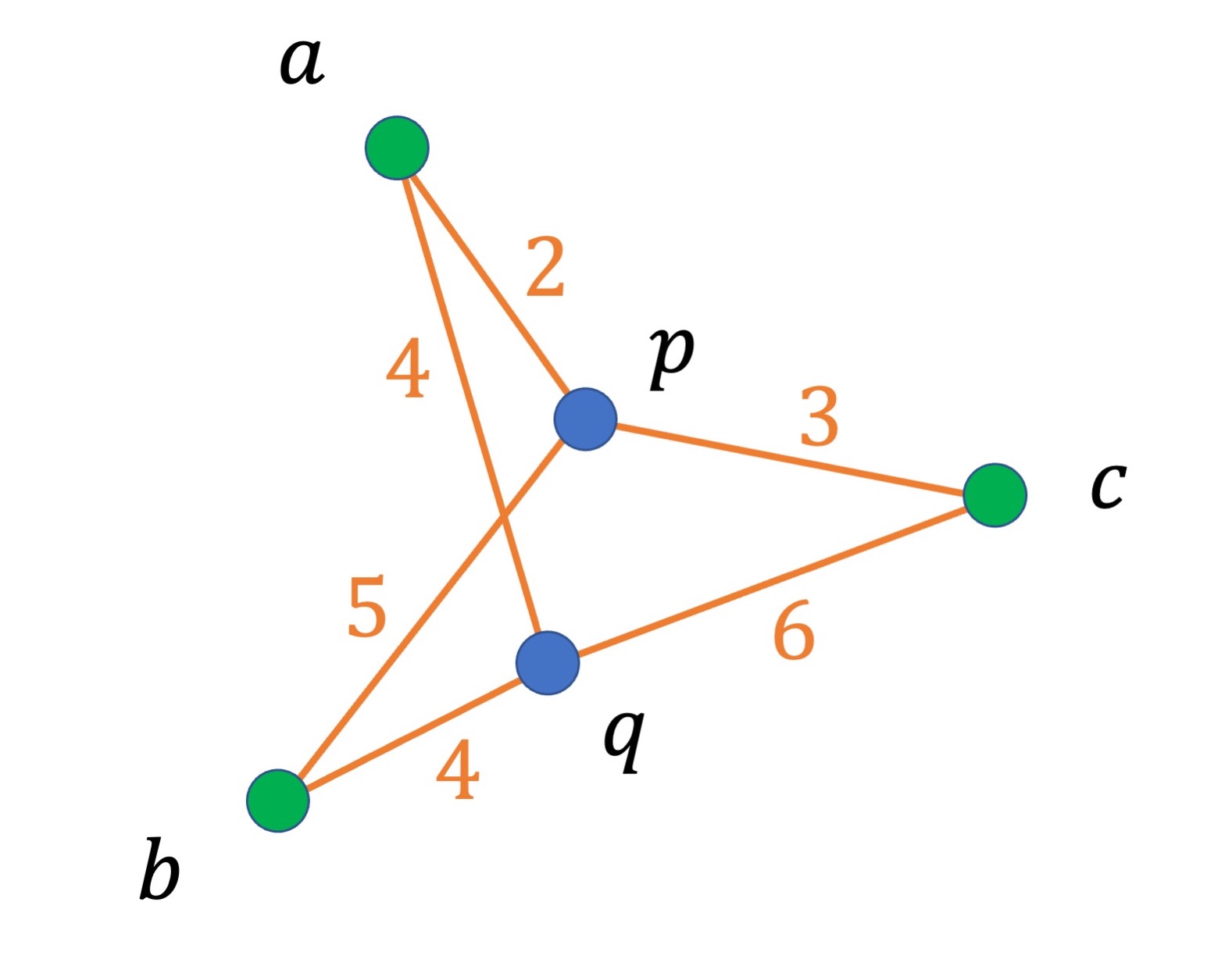}}
	\caption{A $\zesn$ instance with terminals $a,b,c$. Every edge has capacity $1$ and length marked in yellow.\label{fig: projection}}
\end{figure}

We then turn to study the $\zesn$ problem. For some intuition, let us consider the simple instance in \Cref{fig: projection}. Recall that the goal is to manipulate the edge length such that the sum of all edge length is minimized (as in this case all capacities are the same) while the distance between terminals are preserved.
Observe that in the current graph, all terminal shortest paths are supported by vertex $p$ (that is, $a$-$b$ shortest path is $(a,p,b)$, $a$-$c$ shortest path is $(a,p,c)$, and $b$-$c$ shortest path is $(b,p,c)$). So we should not modify the lengths of edges $(a,p),(b,p),(c,p)$, as shortening one by $\delta$ would force us to lengthening the other two by $\delta$ each, causing the total sum to increase. However, the lengths of $q$-incident edges can be manipulated without distorting the terminal distances. For example, we can shorten $(b,q)$ from $4$ to $3$ and $(c,q)$ from $6$ to $5$, so $(a,q,b)$ becomes another $a$-$b$ shortest path and $(b,q,c)$ becomes another $b$-$c$ shortest path (and then we cannot shorten any single edge without increasing others). At this moment, $q$ can in fact be identified with the point on edge $(p,b)$ that is at distance $3$ from $b$ and $2$ from $p$.

This simple example actually illustrates the first step (out of two) of our algorithm for the $\zesn$ problem, which we call the \emph{projection} step. The main idea is to repeatedly reduce the distances from non-terminals to terminals in some simple way until we cannot do so anymore without destroying the terminal-induced metric. Clearly, at the end of this process, for every non-terminal $q$, and every terminal $t$, there should be another terminal $t'$ such that $\dist(q,t)+\dist(q,t')=\dist(t,t')$ holds, and it is this tight constraint that prevents  us from further shortening the $q$-$t$ distance. In fact, in the area of metric geometry, there is indeed a notion called \emph{tight span} (first proposed and studied in \cite{dress1984trees}), characterizing the resulting distances that we may get from the projection step.

\paragraph{Tight Span.}
Let $D$ be a metric on a set $T$ of points. The \emph{tight span} of $D$, denoted as $\TS(D)$, is defined as
\[
\TS(D)=\bigg\{x=(x_t)_{t\in T} \in (\mathbb{R}^+)^{T} \text{ }\bigg|\text{ } x_t=\max_{t'\in T}\set{D(t,t')-x_{t'}}\bigg\}.
\]
So elements in $\TS(D)$ are $|T|$-dimensional vectors with coordinates indexed by points in $T$.
Intuitively, we can think of an element $x\in \TS(D)$ as an ``imaginary point'' in the metric space $(T,D)$, that is at distance $x_t$ to each $t\in T$. 
The distances $\set{x_t}_{t\in T}$ need to satisfy triangle inequalities with the distances in $\set{D(t,t')}_{t,t'}$. That is,
for all pairs $t,t'$ of points in $T$, $x_t+x_{t'}\ge D(t,t')$ must hold. Equivalently, for all $t\in T$, $x_t\ge \max_{t'\in T}\set{D(t,t')-x_{t'}}$.
Moreover, for $x$ to be in the tight span, it is additionally required that, \emph{for each point $t\in T$, at least one inequality in $\set{x_t+x_{t'}\ge D(t,t')}_{t'}$ is tight} (and therefore the name), giving that 
$x_t=\max_{t'\in T}\set{D(t,t')-x_{t'}}$.

We then show that, we can efficiently project all non-terminals into the tight span of the terminal metric, such that the length of every edge does not increase (where the tight span is equipped with the geodesic distance, or equivalently the $\ell_{\infty}$ norm). This means that we only need to solve $\zesn$ instance that are ``embedded into the tight span''. (See \Cref{sec: tight span}.)

Now constructing a solution to $\zesn$ (with a bounded size) is essentially partitioning the tight span into a finite number of components (and then each component is contracted as a Steiner node) that does not cut too many edges. A desirable property of a metric space for such a good partitioning to exist is the following property called \emph{separability}. Specifically, we say that a metric space $(X,D)$ (where $|X|=+\infty$) is \emph{separable}, iff there exists a finite subset $\bar X\subseteq X$ and a (randomized) mapping $f: X\to \bar X$, such that
for every pair $x,x'\in X$, 
$$\mathbb{E}[D(f(x),f(x'))]=D(x,x').$$
For example, a line metric (where $X=[a,b]$ and $D(x,x')=|x-x'|$) is separable, as we can simply define $\bar X=\set{a,b}$ and choose a random threshold value $c\in [a,b]$ and set $f(x)=a$ iff $x\le c$ and $f(x)=b$ iff $x> c$.
Similarly, a rectangle $\overline{abcd}$ with an axis-aligned $\ell_1$ metric is also separable (with $X=\set{a,b,c,d}$ and $f$ determined by two random threshold values).

The crucial distinction between our results for $5$-terminal and $6$-terminal networks is due to the following fact that we prove in \Cref{sec: upper} and \Cref{sec: lower}:
\emph{The tight span of all $5$-point metrics are separable, but there are $6$-point metrics whose tight spans are not separable.}

To see why the tight spans of all $5$-point metrics are separable, a vague explanation is that $5$-point tight spans are ``at most $2$-dimensional'', and $2$-dimensional rectangles with $\ell_{\infty}$ norm, after properly changing the coordinate systems, become $2$-dimensional $\ell_1$ spaces, and are therefore separable from the above discussion. (See \Cref{sec: upper}.)
The intuition for $5$-point tight spans to be $2$-dimensional is as follows. Recall that the elements in $5$-point tight spans are $5$-dimensional vectors, and for such a vector $x$ to be in the tight span, it is additionally required that, for each coordinate $t\in T$, at least one triangle inequality in $\set{x_t+x_{t'}\ge D(t,t')}_{t'}$ is tight. Therefore, in a proper region of the tight span, at least three triangle inequalities need to be tight, reducing the ``degree of freedom'' of the $5$-dimensional vector to be $2$ (and therefore making the region essentially $2$-dimensional). Vaguely, if there are only two tight inequalities, say $T=\set{a,b,c,d,e}$ $x_a+x_b=D(a,b)$, $x_c+x_d=D(c,d)$, then nothing prevents us from further reducing $x_e$, the distance between $x$ and $e$.

On the other hand, the tight span for $6$-point metrics can be $3$-dimensional, as in a proper region, all $6$-dimensional vectors may satisfy the tight constraints $x_a+x_b=D(a,b)$, $x_c+x_d=D(c,d)$, and $x_e+x_f=D(e,f)$, and so no coordinate can be further reduced. It appears that the geodesic metric in $3$-dimensonal tight span regions differs fundamentally from $2$-dimensonal regions (which are $\ell_1$ metrics), and are therefore not separable. We manage to leverage this non-separability to construct hard instances for $\zesn$ and eventually generalize them to provide a similar lower bound for the version in \cite{andoni2014towards}, which in turn leads to lower bounds for contraction-based flow sparsifiers. The construction and the analysis of the hard instances are the most technical components of the paper.
(See \Cref{sec: lower}.)

\subsection{Related Work}
\label{sec: related}

\paragraph{Cut sparsifiers.}
Cut sparsifiers are closely related to flow sparsifiers.
Given a graph $G$ and a set $T\subseteq V(G)$ of terminals, a cut sparsifier of $G$ with respect to $T$ is a graph $G'$ with $T\subseteq V(G')$, such that for every partition $(T_1,T_2)$ of $T$, the size of the minimum cut separating $T_1$ from $T_2$ in $G$ and the size of the minimum cut separating $T_1$ from $T_2$ in $G'$, are within some small multiplicative factor $q$, which is also called the \emph{quality} of the sparsifier.
Quality-$q$ flow sparsifiers are also quality-$q$ cut sparsifiers, but the converse is not true.

In a special case where $V(G')=T$, Moitra \cite{moitra2009approximation} showed that every graph with $k$ terminals admits a cut sparsifier with quality $O(\log k/\log\log k)$, and the strongest lower bound is $\Omega(\sqrt{\log k}/\log\log k)$ \cite{makarychev2010metric,charikar2010vertex}. 
In another special case where $q=1$, it was shown that every $k$-terminal graph admits an exact cut sparsifier of size at most $2^{2^k}$ \cite{hagerup1998characterizing,khan2014mimicking}, and the strongest lower bound is $2^{\Omega(k)}$ \cite{khan2014mimicking,krauthgamer2013mimicking,karpov2017exponential}. Closing this gap remains a very interesting open problem.
If we further assume that each terminal has degree $1$, then Chuzhoy \cite{chuzhoy2012vertex} has shown the construction of $O(1)$-quality cut sparsifier of size $O(k^3)$, and Kratsch and Wahlstrom \cite{kratsch2012representative} have constructed quality-$1$ cut sparsifiers of size $O(k^3)$ via a matroid-based approach.

There are also some other recent work on (i) constructing better cut sparsifiers for special types of graphs, for example trees \cite{goranci2017vertex}, planar graphs \cite{krauthgamer2017refined,karpov2017exponential,goranci2017improved}; (ii) preserving terminal min-cut values up to some threshold \cite{chalermsook2021vertex,liu2020vertex}; and
(iii) dynamic cut/flow sparsifiers and their use in dynamic graph algorithms \cite{durfee2019fully,chen2020fast,goranci2021expander}.

\paragraph{Distance sparsifiers.} There are also rich lines of work for constructing vertex/edge sparsifiers for preserving distances (e.g. spanners, emulators, distance-preserving minor, distance oracles, etc). We refer the readers to the comprehensive survey \cite{ahmed2020graph}.

\subsection{Organization}
The rest of the paper is organized as follows. We start with some preliminaries and formal definitions in \Cref{sec: prelim}.
We describe a high-level framework in \Cref{sec: reduction} for proving our results \Cref{main: upper} and \Cref{main: lower}, reducing them to proving upper and lower bounds of the $\zesn$ problem.
Then in \Cref{sec: tight span}, we recall the notion tight span, which is crucial for our algorithm and lower bound result on the $\zesn$ problem, presented in \Cref{sec: upper} and \Cref{sec: lower}, respectively.

\section{Preliminaries}
\label{sec: prelim}

By default, all logarithms are to the base of $2$. For a real number $x$, we denote $(x)^+=\max\set{x,0}$.
%All graphs in this paper are undirected, and are allowed to have parallel edges but not self-loops.

Let $G=(V,E,\ell)$ be an edge-weighted graph, where each edge $e\in E$ has weight (or \emph{length}) $\ell_e$. 
For a vertex $v\in V$, we denote by $\deg_G(v)$ the degree of $v$ in $G$.
%For each pair $S,T\subseteq V$ of disjoint subsets, we denote by $E_{G}(S,T)$ the set of edges in $G$ with one endpoint in $S$ the other endpoint in $T$.
For a pair $v,v'$ of vertices in $G$, we denote by $\dist_{G}(v,v')$ (or $\dist_{\ell}(v,v')$) the shortest-path distance between $v$ and $v'$ in $G$.
%We define the \emph{diameter} of $G$ as $\diam(G)=\max_{v,v'\in V}\set{\dist_G(v,v')}$, and we define the \emph{girth} of $G$, denoted by $\gir(G)$, as the minimum weight of any cycle in $G$.
We may omit the subscript $G$ in the above notations when the graph is clear from the context.

\iffalse
\paragraph{Chernoff Bound.}
We will use the following form of Chernoff bound (see e.g.\cite{dubhashi2009concentration}).
\begin{lemma}[Chernoff Bound] \label{prop:chernoff}
	Let $X_1,\dots X_n$ be independent random variables taking values in $\{0,1\}$. Let $X$ denote their sum and let $\mu=\ex{X}$ denote the sum's expected value. Then for any $\delta>0$, 
	$$
	\pr{X > (1+\delta) \mu} < \left( \frac{e^{\delta}}{(1+\delta)^{1+\delta}}\right)^{\mu}.
	$$
\end{lemma}
\fi

\paragraph{Demands, congestion, and quality of a sparsifier.}
Let $G$ be a graph and let $T$ be a subset of its vertices called \emph{terminals}. A demand $\dset$ on $T$ is a function that assigns to each (unordered) pair $t,t'\in T$ a real number $\dset(t,t')\ge 0$. Let $F$ be a multi-commodity flow on $G$. We say that $F$ \emph{routes} $\dset$, iff for every pair $t,t'\in T$, $F$ sends $\dset(t,t')$ units of flow from $t$ to $t'$ (or from $t'$ to $t$) in $G$. The \emph{congestion} of flow $F$ in $G$, denoted by $\cong_G(F)$, is defined to be the maximum amount of flow sent via a single edge in $G$.
The \emph{volume} of flow $F$, denoted by $\mathsf{vol}(F)$, is defined to be the sum, over all edges in $G$, the total amount of flow sent via it, so clearly $\cong_G(F)\ge \mathsf{vol}(F)/|E(G)|$.
The \emph{congestion} of $\dset$ in $G$, denoted by $\cong_G(\dset)$, is defined to be the minimum congestion of any flow that routes $\dset$ in $G$.

Let $H$ be a graph with $T\subseteq V(H)$. We say that $H$ is a \emph{flow sparsifier} of $G$ with respect to $T$ with \emph{quality} $q\ge 1$, iff for any demand $\dset$ on $T$, 
\[\cong_H(\dset)\le \cong_G(\dset)\le q\cdot\cong_H(\dset).\]

A graph $H$ is a \emph{contraction-based flow sparsifier of $G$ with respect to $T$}, iff there exists a partition $\fset$ of vertices in $G$ into subsets where different terminals in $T$ lie in different sets in $\fset$, and $H$ is obtained from $G$ by contracting vertices in each set $F$ in $\fset$ into a single node $u_F$, keeping parallel edges and discard self-loops. For each $t\in T$, if $F(t)$ is the (unique) cluster in $\fset$ that contains  $t$, then the node $u_{F(t)}$ in $H$ is identified with $t$.

\section{Reduction to Variants of $0$-Extension}
\label{sec: reduction}

Recall that we are given a graph $G$ and a set $T$ of its vertices called terminals. Let $\dset$ be a demand on the $T$.
Let $H$ be a flow sparsifier of $G$ with respect to $T$.
We use the following LP for computing $\lambda_H(\dset):=(\cong_H(\dset))^{-1}$. For every pair $t,t'\in T$, we denote by $\pset_{t,t'}$ the collection of all $t$-$t'$ paths in $H$ connecting $t$ to $t'$ in $H$, and $\pset=\bigcup_{t,t'\in T}\pset_{t,t'}$.
\begin{eqnarray*}
	\mbox{(LP-Primal)}\quad	 \lambda_H(\dset)= \quad & \text{maximize} &\lambda \\
	&\sum_{P\in \pset_{t,t'}}x_P\geq \lambda\cdot \dset_{t,t'} &\forall t,t'\in T\\
	&\sum_{P\in \pset: e\in P}x_P\leq c(e) &\forall e\in E(H)\\
	&x_P\geq 0&\forall P\in \pset
\end{eqnarray*}
Taking its dual, we obtain the following LP.
\begin{eqnarray*}
	\mbox{(LP-Dual)}\quad	 \quad & \text{minimize} & \sum_{e\in E(H)}c(e)\cdot \ell_e \\
	&\sum_{t,t'\in T}\delta_{t,t'}\cdot \dset_{t,t'}\geq 1 &\\ 
	&\sum_{e\in P}\ell_e\geq \delta_{t,t'} &\forall t,t'\in T, \forall P\in \pset_{t,t'}\\
	& \delta_{t,t'} \ge 0 &\forall t,t'\in T\\
	&\ell_e\geq 0&\forall e\in E(H)
\end{eqnarray*}
%By strong duality, the optimal objective values of (LP-Primal) and (LP-Dual) are the same. Therefore, in order to show that $\cong_H(\dset)=O(1/\log\log n)$, it suffices to show that $\lambda_H(\dset)=\Omega(\log\log n)$, which is equivalent to showing that the optimal value of (LP-Dual) is $\Omega(\log \log n)$. 

We formulate the following two graph-theoretic problems out of (LP-Dual) (one of them has been implicitly defined in \cite{andoni2014towards}), that can be viewed variants of the classic $0$-Extension problem, and then we reduce \Cref{main: upper} and \ref{main: lower} to proving upper and lower bounds of these variants.
We provide a comparison between these variants and the classic $0$-Extension problem in \Cref{apd: comparison}.

\subsection{Two Variants of $0$-Extension}
\label{sec: variants}

\paragraph{$0$-Extension with Steiner Nodes.}
In an instance of the \emph{$0$-Extension with Steiner Nodes} problem ($\zesn$), the input consists of
\begin{itemize}
	\item an edge-capacitated graph $G=(V,E,c)$, with length $\set{\ell_e}_{e\in E}$ on its edges; and
	\item a set $T\subseteq V$ of $k$ terminals.
\end{itemize}
A solution consists of
\begin{itemize}
	\item a partition $\fset$ of $V$, such that distinct terminals of $T$ belong to different sets in $\fset$; for each vertex $u\in V$, we denote by $F(u)$ the cluster in $\fset$ that contains it;
	\item a semi-metric $\delta$ on the clusters in $\fset$, such that \underline{for all pairs $t,t'\in T$, $\delta(F(t),F(t'))= \dist_{\ell}(t,t')$}, where $\dist_{\ell}(\cdot,\cdot)$ is the shortest-path distance (in $G$) metric induced by edge length $\set{\ell_e}_{e\in E(G)}$.
\end{itemize}

We define the \emph{cost} of a solution $(\fset,\delta)$ as $\vol(\fset,\delta)=\sum_{(u,v)\in E}c(u,v)\cdot\delta(F(u),F(v))$, and define its \emph{size} to be $|\fset|$.
The goal is to compute a solution $(\fset,\delta)$ with small size and cost.

\paragraph{The ``average'' version of $\zesn$.}
In \cite{andoni2014towards}, the following variant of $0$-Extension (referred to as $\zesn_{\textsf{ave}}$) was proposed.
Its input consists of
\begin{itemize}
	\item an edge-capacitated graph $G=(V,E,c)$,  with length $\set{\ell_e}_{e\in E}$ on its edges;
	\item a set $T\subseteq V$ of $k$ terminals; and
	\item \underline{a \emph{demand} $\dset: T\times T\to \mathbb{R}^+$ on terminals}.
\end{itemize}
A solution consists of 
\begin{itemize}
	\item a partition $\fset$ of $V$, such that distinct terminals of $T$ belong to different sets in $\fset$; for each vertex $u\in V$, we denote by $F(u)$ the cluster in $\fset$ that contains it;
	\item a semi-metric $\delta$ on the clusters in $\fset$, such that:\\ \underline{$\sum_{t,t'}\dset(t,t')\cdot\delta(F(t),F(t'))\ge \sum_{t,t'} \dset(t,t')\cdot\dist_{\ell}(t,t')$},\\
	where $\dist_{\ell}(\cdot,\cdot)$ is the shortest-path distance (in $G$) metric induced by edge length $\set{\ell_e}_{e\in E(G)}$.
\end{itemize}
The cost of a solution $(\fset,\delta)$ is $\vol(\fset,\delta)=\sum_{(u,v)\in E}c(u,v)\cdot\delta(F(u),F(v))$, and its \emph{size} is $|\fset|$.
%The goal is to compute a solution $(\fset,\delta)$ with size at most $f(k)$ and minimum cost. 

The difference between two variants are underlined. In $\zesn_{\textsf{ave}}$, it is required that some ``average'' terminal distance does not decrease, while in $\zesn$ it is required that all pairwise distances between terminals are preserved.
Clearly, the requirement in $\zesn$ is stronger, which implies that a valid solution to $\zesn$ is also a valid solution to the same $\zesn_{\textsf{ave}}$ instance (with an arbitrary $\dset$).

\subsection{Proof of \Cref{main: upper}}

In \cite{andoni2014towards}, the following theorem was proved ((LP1) and Proposition 4.2).
\begin{theorem}
\label{thm: AGK}
Given a graph $G$ with a set $T$ of $k$ terminals, if for every length $\set{\ell_e}_{e\in E(G)}$ and every demand $\dset$, the instance $(G,T,\ell,\dset)$ of $\zesn_{\textnormal{\textsf{ave}}}$ has a solution $(\fset,\delta)$ with size $|\fset|\le f(k)$ and cost 
\[\sum_{(u,v)\in E}c(u,v)\cdot\delta(F(u),F(v))\le q\cdot \sum_{(u,v)\in E}c(u,v)\cdot\dist_{\ell}(u,v),\] 
then there is a quality-$(1+\eps) q$ flow sparsifier $H$ for $G$ w.r.t $T$ with $|V(H)|\le (f(k))^{(O(\log k/\eps))^{k^2}}$.
\end{theorem}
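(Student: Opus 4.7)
My plan is to prove the theorem by combining LP duality with an $\eps$-net argument over demands/metrics, and then taking a common refinement of the partitions returned by $\zesn_{\textsf{ave}}$ on each net point. By (LP-Dual), the quality of a contraction-based sparsifier $H=G/\fset$ can be rephrased as follows: for every demand $\dset$ on $T$ and every ``length-feasible'' semi-metric $\delta$ on $V(H)$ (i.e., one satisfying $\sum_{t,t'}\delta(t,t')\cdot\dset(t,t')\ge 1$), we need $\sum_{(u,v)\in E(G)} c(u,v)\cdot\delta(F(u),F(v))$ to be at most $(1+\eps)q$ times the optimal dual value on $G$. Equivalently, for every demand $\dset$, writing $\dist_\ell$ for the shortest-path metric in $G$ under any lengths $\ell$, the contracted metric on $V(H)$ must preserve the weighted average $\sum_{t,t'}\dset(t,t')\cdot\dist_\ell(t,t')$ up to factor $(1+\eps)q$. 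This is precisely the guarantee demanded by $\zesn_{\textsf{ave}}$ on input $(G,T,\ell,\dset)$.

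The next step is to reduce the ``for every $(\ell,\dset)$'' requirement to a finite net. After a standard rescaling, one may assume terminal distances $\dist_\ell(t,t')$ lie in a bounded range; then one discretizes both the demand vector $\dset\in\mathbb{R}_+^{\binom{T}{2}}$ and the induced terminal metric $\dist_\ell\in\mathbb{R}_+^{\binom{T}{2}}$ by powers of $(1+\eps/\Theta(\log k))$ in each of the $\binom{k}{2}\le k^2$ coordinates. Rounding each coordinate loses at most a $(1+\eps)$ multiplicative factor in the dual objective, since both the ``denominator'' $\sum c(u,v)\dist_\ell(u,v)$ and the ``numerator'' $\sum c(u,v)\delta(F(u),F(v))$ are linear in $\ell$, and the normalization constraint $\sum \dset\cdot\delta\ge 1$ changes by only a $(1\pm\eps)$ factor when $\dset$ is rounded. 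The resulting net $\mathcal{N}$ has size $|\mathcal{N}|\le (O(\log k/\eps))^{k^2}$, using that the relevant dynamic range of coordinates is at most polynomial in $k$.

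For every pair $(\ell,\dset)\in\mathcal{N}$, invoke the hypothesis to obtain a partition $\fset_{(\ell,\dset)}$ of $V(G)$ of size at most $f(k)$ and a semi-metric $\delta_{(\ell,\dset)}$ with cost at most $q\cdot\sum_{(u,v)}c(u,v)\dist_\ell(u,v)$. Let $\fset^\star$ be the common refinement $\bigwedge_{(\ell,\dset)\in\mathcal{N}}\fset_{(\ell,\dset)}$; it has size at most $(f(k))^{|\mathcal{N}|}\le (f(k))^{(O(\log k/\eps))^{k^2}}$ and, crucially, each $\fset_{(\ell,\dset)}$ is coarser than $\fset^\star$, so every semi-metric $\delta_{(\ell,\dset)}$ on $\fset_{(\ell,\dset)}$ lifts to a semi-metric on $H^\star:=G/\fset^\star$ with the same cost. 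Thus for every $(\ell,\dset)\in\mathcal{N}$, feeding $\delta_{(\ell,\dset)}$ as the dual assignment on $H^\star$ certifies $\cong_{H^\star}(\dset)\le q\cdot\cong_G(\dset)$, and the rounding step extends this to all $(\ell,\dset)$ at an additional $(1+\eps)$ factor, giving the claimed quality $(1+\eps)q$.

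The main obstacle is the discretization: one has to ensure that the $\eps$-net truly controls the worst-case $(\ell,\dset)$ in LP-Dual, including degenerate regimes where the terminal metric is highly non-uniform. The two cleanest ways to handle this are either (i) to argue that the worst $(\ell,\dset)$ can be assumed to have bounded aspect ratio via a rescaling-and-truncation lemma (distances beyond the relevant scale contribute negligibly to both sides), or (ii) to work directly with the terminal metric $\dist_\ell$ rather than the edge lengths $\ell$, since only the $k\times k$ matrix of terminal distances enters the dual objective once the optimal $\delta$ has been chosen to equal shortest-path distances in $H^\star$. Either route then yields the exponent $(O(\log k/\eps))^{k^2}$ advertised in the statement.
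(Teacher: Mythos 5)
The paper does not give its own proof of this statement: it is cited verbatim from Andoni--Gupta--Krauthgamer (their (LP1) and Proposition~4.2). That said, your high-level plan---LP-dualize the quality condition, build a finite net over the $O(k^2)$-dimensional demand/terminal-metric data, invoke the $\zesn_{\textsf{ave}}$ hypothesis on each net point, and take the common refinement to obtain a single partition of size at most $f(k)^{|\mathcal{N}|}$---is exactly their strategy, and your LP-duality bookkeeping (lift $\delta$ to the refined partition at no cost, verify the two dual constraints via the triangle inequality for $\delta$ and the $\zesn_{\textsf{ave}}$ normalization) is sound. One sign error worth fixing: the nontrivial direction is $\cong_G(\dset)\le q\cdot\cong_{H^\star}(\dset)$, since for a contraction $\cong_{H^\star}\le\cong_G$ holds for free; exhibiting a feasible dual for $H^\star$ upper-bounds $\lambda_{H^\star}(\dset)$ and hence gives the former, not the converse inequality you wrote.

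The substantive gap is in the discretization. You assert that ``the relevant dynamic range of coordinates is at most polynomial in $k$'' and then read off $|\mathcal{N}|\le (O(\log k/\eps))^{k^2}$, but this dynamic-range control is precisely the nontrivial step and you never establish it. For example, under the LP-optimal $\ell^*$ corresponding to a single-pair demand, $\dist_{\ell^*}(t,t')=0$ for every terminal pair on the same side of the relevant min-cut, and demand vectors can likewise have zero or arbitrarily small coordinates, so a multiplicative geometric net over $\binom{k}{2}$ coordinates does not cover the relevant $(\ell^*,\dset)$ pairs without a truncation lemma showing that near-zero entries may be rounded away with only a $(1+\eps)$ loss in the dual value. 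You flag this obstacle yourself and propose two fixes (rescale-and-truncate, or work directly with terminal metrics) but execute neither. There is also a second unresolved point: the $\zesn_{\textsf{ave}}$ hypothesis is quantified over full edge-length vectors $\ell$ on $E(G)$, not over $\binom{k}{2}$-dimensional terminal metrics, so for each net metric $D^\#$ you must also exhibit an $\ell^\#$ on $G$ realizing (approximately) $D^\#$ as $\dist_{\ell^\#}$; your option (ii) alludes to this but gives no construction. Finally, $\sum_{(u,v)}c(u,v)\dist_\ell(u,v)$ is piecewise-linear (concave), not linear, in $\ell$, so the claimed $(1+\eps)$ loss from rounding $\ell$ needs the monotonicity of $\dist_\ell$ in $\ell$ to be invoked explicitly. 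Until the truncation/realizability step is written down, the exponent $(O(\log k/\eps))^{k^2}$ is asserted rather than proved, even though the overall architecture of the argument matches the cited source.
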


Our first main result is the following theorem, whose proof is deferred to \Cref{sec: upper}.

\begin{theorem}
	\label{thm: upper 5}
	For each instance $(G,T,\ell)$ of $\zesn$ with $|T|\le 5$, there exists a solution $(\fset,\delta)$, such that $|\fset|\le 30$, and $\vol(\fset,\delta)=\sum_{(u,v)\in E}c(u,v)\cdot\dist_{\ell}(u,v)$.
\end{theorem}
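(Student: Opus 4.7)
The plan is to follow the three-step strategy sketched in \Cref{sec: tech_overview}: project every vertex of $G$ into the tight span $\TS(D)$ of the terminal metric $D=\dist_\ell|_{T\times T}$, show that for $|T|\le 5$ this space decomposes into a bounded number of at-most-$2$-dimensional cells carrying an $\ell_\infty$ metric, and finally apply a separable random partition inside each cell to produce the clusters $\fset$ and the semi-metric $\delta$. The first step I would just quote from \Cref{sec: tight span}: it yields a map $\pi\colon V\to\TS(D)$ fixing each terminal $t$ at its canonical point $(D(t,t'))_{t'\in T}$ and satisfying $\|\pi(u)-\pi(v)\|_\infty\le\ell_{(u,v)}$ on every edge, hence $\|\pi(u)-\pi(v)\|_\infty\le\dist_\ell(u,v)$ for all pairs. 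From here the task reduces to clustering the points $\{\pi(v):v\in V\}\subseteq \TS(D)$ while preserving pairwise $\ell_\infty$ distances in expectation.

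For the structural step, I would stratify $\TS(D)$ according to the family $M(x)=\{\{t,t'\}: x_t+x_{t'}=D(t,t')\}$ of tight pairs at $x$. The tight-span definition forces $M(x)$ to cover every terminal, so $|M(x)|\ge\lceil|T|/2\rceil=3$ when $|T|=5$, and therefore the affine span of any maximal cell has dimension at most $5-3=2$. A case analysis enumerates the combinatorial types of $M(x)$ that can arise on a $2$-dimensional maximal cell---essentially the $30$ covers of $T$ by exactly three pairs (a doubly-covered vertex, $5$ options, paired with two of the remaining four, $\binom{4}{2}=6$ options, with the last pair on the two survivors). For each type I would verify that the intrinsic metric on the corresponding cell is $\ell_\infty$ (equivalently $\ell_1$ after a rotation of coordinates) and that adjacent cells glue compatibly along lower-dimensional faces.

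For the final step, inside each $2$-dimensional cell I would apply the standard separable rounding for $\ell_\infty$ rectangles: pick independent uniform thresholds along the two parameterizing axes, splitting the cell into four sub-rectangles with a designated corner representative each; the resulting rounding $r$ satisfies $\E[\|r(x)-r(y)\|_\infty]=\|x-y\|_\infty$. After choosing thresholds coherently across cells and merging sub-regions on shared faces, the total number of parts stays at $30$ or below. Declare each part a cluster of $\fset$, place each terminal in its own part (possible since terminals sit at distinct canonical points), and assign every non-terminal $v$ to the part containing $\pi(v)$; take $\delta$ to be the $\ell_\infty$ distance between representatives, which is automatically a semi-metric and preserves terminal distances by construction. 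Separability then gives
\begin{equation*}
\E[\vol(\fset,\delta)] = \sum_{(u,v)\in E} c(u,v)\cdot\|\pi(u)-\pi(v)\|_\infty \le \sum_{(u,v)\in E} c(u,v)\cdot\dist_\ell(u,v),
\end{equation*}
and a derandomization (or averaging argument) picks out a deterministic solution meeting the stated cost bound.

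The hard part will be the structural analysis: I must rule out $5$-point metrics whose tight spans contain a genuinely $3$-dimensional region (this would destroy $\ell_\infty$-separability and in fact produce exactly the obstruction exhibited for $6$ terminals in \Cref{sec: lower}), and enumerate all $2$-dimensional cell types cleanly enough to keep the total cluster count below $30$. A secondary technical concern is making the random thresholds chosen inside each cell consistent on shared lower-dimensional faces, so that the global partition is well-defined and the separability identity still holds for pairs of points that lie in different cells.
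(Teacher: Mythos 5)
Your overall strategy matches the paper's exactly: project every vertex onto $\TS(D)$ via the non-expanding map of \Cref{subsec: proj}, observe that a five-point tight span is at most two-dimensional, and then hit it with a separable random decomposition whose parts become the clusters of $\fset$ with $\delta$ the induced $\ell_\infty$ distance between representatives. The dimension-counting heuristic (a cover of $T$ by tight pairs has at least three pairs, so the active cell has codimension $\ge 3$) is the right intuition and appears, informally, in \Cref{sec: tech_overview}.

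Where you diverge from the paper is precisely the part you flag as ``the hard part,'' and that is where your argument currently has a genuine gap rather than just unfinished bookkeeping. Knowing that a maximal cell is two-dimensional does not by itself tell you that its intrinsic geodesic metric is $\ell_1$ in some coordinates, nor how adjacent cells of different combinatorial types are glued, nor how to choose the threshold cuts consistently across shared faces so that the separability identity $\E[\|r(x)-r(y)\|_\infty]=\|x-y\|_\infty$ survives for pairs straddling several cells. The paper sidesteps your enumeration entirely by invoking the known classification of five-point tight spans into exactly three combinatorial types, due to Koolen, Lesser, and Moulton \cite{koolen2009optimal}, and then constructing, \emph{for each type}, an explicit global partition with axis-parallel random cuts and verifying the non-expansion lemma (\Cref{lem: non-expanding}) case by case, including the cross-cell cases. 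Your count of $30$ is also for the wrong quantity: $5\cdot\binom{4}{2}=30$ is the number of three-pair covers of a $5$-set, not the number of clusters your partition would produce (each two-dimensional cell yields up to four sub-regions, many cells share faces and get merged, and the one-dimensional pendants contribute their own parts); the paper's concrete partitions have $|\fset|=16$, $22$, and $21$ for the three types, and the theorem's bound of $30$ is just a clean over-estimate. To turn your sketch into a proof you would either have to reproduce Koolen et al.'s classification from scratch, or find an argument that works uniformly across combinatorial types --- neither of which is done here.
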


We now use \Cref{thm: upper 5} to complete the proof of  \Cref{main: upper}.
Let $G$ be any graph with a set $T$ of $|T|\le 5$ terminals.
Take any instance $(G,T,\ell,\dset)$ of $\zesn_{\textnormal{\textsf{ave}}}$ and consider the instance $(G,T,\ell)$ of $\zesn$. From \Cref{thm: upper 5}, there is a solution $(\fset,\delta)$ to instance $(G,T,\ell)$, such that $|\fset|\le 30$, and $\vol(\fset,\delta)\le 1\cdot \sum_{(u,v)\in E}c(u,v)\cdot\dist_{\ell}(u,v)$. From the above discussions, we know that $(\fset,\delta)$ is also a solution to instance $(G,T,\ell,\dset)$.
Therefore, from \Cref{thm: AGK}, there is a quality-$(1+\eps)$ flow sparsifier for $G$ with respect to $T$ with at most $30^{O(1/\eps)^{25}}=2^{\poly(1/\eps)}$ vertices.

\subsection{Proof of \Cref{main: lower}}

Our second main result is the following theorem, whose proof is deferred to \Cref{sec: lower}.

\begin{theorem}
\label{thm: lower 6}
For any $N$, there exists an instance $(G,T,\ell,\dset)$ of $\zesn_{\textnormal{\textsf{ave}}}$ with $|T|= 6$, such that any solution $(\fset,\delta)$ with $|\fset|\le N$ satisfies that $\vol(\fset,\delta)\ge (1+10^{-18})\cdot \sum_{(u,v)\in E}c(u,v)\cdot\dist_{\ell}(u,v)$.
\end{theorem}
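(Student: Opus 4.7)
The plan is to instantiate the hard instance outlined in \Cref{sec: tech_overview}: construct the graph from the $3$-dimensional tight span of a carefully chosen $6$-point metric, and leverage its non-separability to obtain the $(1+10^{-18})$ cost lower bound. Concretely, I would take the metric $D$ on $T=\{a_i,b_i\}_{i=1,2,3}$ with $D(a_i,b_i)=2$ and $D(x,y)=1$ for every other pair. Parametrizing $x_{a_i}=1+p_i$, $x_{b_i}=1-p_i$, one checks directly from the tight-span definition that $\TS(D)$ contains the $3$-dimensional polytope $P=\{p\in\mathbb{R}^3:|p_i|+|p_j|\le 1\text{ for all }i\ne j\}$ endowed with the geodesic ($\ell_\infty$) metric, the six terminals sit at the points $\pm e_i$, and the eight off-axis vertices $(\pm\tfrac12,\pm\tfrac12,\pm\tfrac12)$ of $P$ form an equilateral $8$-point sub-metric of common distance $1$. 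The graph $G$ is then taken to be a fine $\ell_\infty$ king-graph on $P$ at resolution $1/M$, with edge lengths $1/M$ and uniform capacities, so that $\dist_G$ matches the geodesic metric on grid points. The demand $\dset$ will place weight on all $\binom{6}{2}=15$ terminal pairs (or a carefully chosen subset) so that the averaged preservation constraint $\sum\dset(t,t')\delta(F(t),F(t'))\ge\sum\dset(t,t')\dist_\ell(t,t')$ engages the full $3$-dimensional tight-span structure rather than only a lower-dimensional slice.

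The heart of the proof will be a quantitative non-separability lemma: for every partition $\fset$ of $V(G)$ into $\le N$ parts and every semi-metric $\delta$ on $\fset$ satisfying the averaged terminal constraint, one has $\vol(\fset,\delta)\ge(1+\eps_0)\sum_{(u,v)\in E}c(u,v)\dist_\ell(u,v)$ for an absolute constant $\eps_0\ge 10^{-18}$. My strategy is by contradiction via compactness: a sequence of $\zesn_{\textnormal{\textsf{ave}}}$ solutions with ratio tending to $1$ as $M\to\infty$ would pass, along a subsequence, to a limit Borel partition of $P$ together with a limit semi-metric $\delta^{\star}$ on the parts that reproduces, on every short straight segment of $P$, the corresponding $\ell_\infty$-length; in particular $\delta^{\star}$ would reproduce the geodesic metric on the eight off-axis vertices. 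But by LP duality on the objective $\sum c(u,v)\delta(F(u),F(v))$ with the single scalar averaged constraint, the minimizing $\delta^{\star}$ on a fixed partition lies in the cone generated by cut semi-metrics of $\fset$, hence is $\ell_1$-embeddable; this contradicts the classical fact that the equilateral $n$-point metric ($n\ge 5$) is not $\ell_1$-embeddable. Quantifying the hypermetric gap for the $8$-point equilateral metric, and tracking it through the LP and compactness arguments, yields the explicit $\eps_0$.

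The main obstacle will be the two-stage reduction from the arbitrary-$\delta$ setting to the cut-metric setting. First, LP duality on a fixed partition $\fset$ must identify cut semi-metrics as the relevant extreme points of the feasible cone of $\delta$; second, the equilateral-metric obstruction precludes exact realization of the geodesic metric on the eight off-axis vertices via a cut combination. Carrying the quantitative gap through both stages, while controlling constants and verifying that the demand $\dset$ indeed forces the $3$-dimensional obstruction rather than degenerating into a lower-dimensional projection, is where the bulk of the technical work will lie. A secondary subtlety is that the eight obstruction vertices are non-terminals, so one needs enough ``flow-like'' coupling between terminal pairs and the obstruction vertices for the local non-embeddability to propagate into a global cost lower bound; the high connectivity of the king-graph together with a uniform $\dset$ should be enough for this. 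The explicit $10^{-18}$ is expected to emerge as the product of the hypermetric gap for the $8$-point equilateral metric and universal loss factors from the LP duality and compactness steps.
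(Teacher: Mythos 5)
There is a genuine gap in the proposal, in fact two, and they sit at the heart of the argument.

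\textbf{The obstruction you invoke does not exist.} You claim the $8$-point equilateral metric (all pairwise distances equal) is not $\ell_1$-embeddable, but the equilateral metric on any number of points \emph{is} $\ell_1$-embeddable: send the $i$-th point to $\tfrac12 e_i\in\ell_1^n$, so that every pair is at $\ell_1$-distance $1$. Equivalently, the equilateral metric is a uniform positive combination of the $n$ singleton-cut metrics $\delta_{\{i\}}$, so it lies squarely inside the cut cone. (The standard small non-$\ell_1$-embeddable example is the $K_{2,3}$ shortest-path metric on $5$ points, which violates a pentagonal inequality; that metric is not equilateral.) With the equilateral metric as the ``obstruction,'' the contradiction you are driving toward never arises.

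\textbf{The LP duality step is wrong.} You assert that for a fixed partition $\fset$, minimizing $\sum_{(u,v)}c(u,v)\delta(F(u),F(v))$ subject to the one averaged constraint and to $\delta$ being a semi-metric forces the minimizer $\delta^\star$ into the cut cone of $\fset$, hence into $\ell_1$. That is not what LP optimality gives you. The feasible region is (a translate of) the metric cone on $|\fset|$ points intersected with one halfspace, and the optimum is attained at an extreme ray of that cone, not at a cut metric. For $|\fset|\ge 5$ the metric cone strictly contains the cut cone, so the minimizer need not be $\ell_1$-embeddable, and nothing in the formulation of $\zesn_{\textsf{ave}}$ restricts $\delta$ to cut combinations. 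This is precisely what makes the theorem non-trivial: the adversary gets to pick \emph{any} semi-metric on the clusters, and one must rule out all of them, including non-$\ell_1$ ones. A secondary issue is that the compactness step, even if the rest were correct, would only give a qualitative lower bound; extracting the explicit $10^{-18}$ would require independently quantitative versions of every step, which you have not sketched.

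By contrast, the paper never passes through $\ell_1$-embeddability at all. It fixes a concrete $6$-point metric $D$ whose tight span is a triangular prism glued to a rectangle, realizes it by a graph $G$ that is a union of carefully chosen geodesic paths in a fine grid on $\TS(D)$, and bounds $\vol(\fset,\delta)-\opt$ directly as a sum of per-vertex ``loss'' terms $\ell_i(v,t)$. After projecting onto a $2$-dimensional slice of the tight span and massaging these losses, it reduces the problem to a combinatorial claim (\Cref{lem:line}, \Cref{clm:triangle}, \Cref{clm:partition}) showing that any $N$-cluster partition accrues $\Omega(1)$ relative loss along a line of the grid, with constants tracked explicitly. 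The $\zesn$ lower bound is then transferred to $\zesn_{\textsf{ave}}$ by adding weight-$L^2$ three-vertex paths along tight terminal triangles, plus one extra $a$--$e$ demand; \Cref{obs:equ} and \Cref{lem:adjust} show that any cheap $\zesn_{\textsf{ave}}$ solution can be modified into a good $\zesn$ solution at a $(1+O(\eta))$ multiplicative cost, at which point the $\zesn$ lower bound applies. Your high-level intuition that a $3$-dimensional tight span is the source of hardness matches the paper, but the mechanism you propose (non-$\ell_1$-embeddability via LP duality and compactness) does not hold up, whereas the paper's mechanism is an explicit, quantitative loss-counting argument that never needs to constrain the shape of $\delta$.
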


We now use \Cref{thm: lower 6} to complete the proof of \Cref{main: lower}.
For the given $N$, let $(G,T,\ell,\dset)$ be the instance of $\zesn_{\textnormal{\textsf{ave}}}$ given by \Cref{thm: lower 6}.
%Consider the routing of demand $\dset$ on $G$ and $H$. 
Denote $L=\sum_{t,t'}\dset(t,t')\cdot \dist_{\ell}(t,t')$. Let $\dset'=\dset/L$.
Let $H$ be any flow sparsifier of $G$ with $|V(H)|\le N$. 
We will show $\cong_G(\dset')\ge (1+10^{-18})\cdot \cong_H(\dset')$, which by definition of flow sparsifiers implies \Cref{main: lower}.

Consider (LP-Dual) for computing $(\cong_G(\dset'))^{-1}$ and $(\cong_H(\dset'))^{-1}$.
\begin{itemize}
\item For $G$, if we set $\delta_{t,t'}=\dist_{\ell}(t,t')$, then it is easy to verify that $(\ell,\delta)$ is a feasible solution to (LP-Dual($G$)), and therefore $(\cong_G(\dset'))^{-1}\le \sum_{(u,v)\in E}c(u,v)\cdot\dist_{\ell}(u,v)$.
\item For $H$, take any solution $(\ell',\delta')$ to (LP-Dual($H$)), assuming without loss of generality $\delta'_{t,t'}=\dist_{\ell}(t,t')$, let $\fset$ be the partition of $V(G)$ that forms $H$ from $G$ (so $|\fset|\le N$), and consider the solution $(\fset,\delta')$. It is easy to verify that $(\fset,\delta')$ is a feasible solution to $(G,T,\ell,\dset)$, as
\[\sum_{t,t'}\dset'(t,t')\cdot\delta'_{t,t'}\ge 1=\sum_{t,t'} \dset'(t,t')\cdot\dist_{\ell}(t,t').\]
From \Cref{thm: lower 6}, $\sum_{(u,v)\in E}c(u,v)\cdot\delta'(F(u),F(v))\ge (1+10^{-18})\cdot\sum_{(u,v)\in E}c(u,v)\cdot\dist_{\ell}(u,v)$. This means that the value of  (LP-Dual($H$)) on $(\ell',\delta')$ is at least $(1+10^{-18})\cdot (\cong_G(\dset'))^{-1}$. As $(\ell',\delta')$ is arbitrary, this implies that $\cong_G(\dset')\ge (1+10^{-18})\cdot \cong_H(\dset')$.
\end{itemize} 

%To the best of our knowledge, there is no known upper bound better than $O(\log k/\log\log k)$ for any function $f$.

%We remark that our lower bound for $\zesn(f)$ does not yet imply a size lower bound for $O(\log\log k)$-quality flow sparsifiers. However, if the same lower bound can be proved for a slightly generalized version of $\zesn(f)$ proposed in \cite{andoni2014towards}, then it will imply an $\Omega(k\log^{1-\eps} k)$ size lower bound for $O(\log\log k)$-quality flow sparsifiers. We provide a detailed discussion in \Cref{sec: compare}.

\newcommand{\Poly}{\mathcal{P}}
\newcommand{\proj}{\mathsf{proj}}
\def\norm#1{\left\| #1 \right\|}

\section{Tight Span and its Projection}
\label{sec: tight span}

A key ingredient to our algorithm and lower bound  is the theory of \emph{tight span} in metric geometry. The notion of tight span was first proposed and studied in \cite{dress1984trees}. As a central notion in T-theory \cite{dress1996t}, it has proved powerful in the study of phylogenetic analysis and optimal reconstruction of metrics
\cite{dress2001hereditarily,dress2006hereditarily,huber2008characterizing,koolen2009optimal,dress2012basic,huber2021optimal}.

%\subsection{Tight Span and its Projection}
%\label{subsec: tight span}

Let $D$ be a metric on a set $T$ of points. The \emph{tight span} of $D$, denoted as $\TS(D)$, is defined as
\[
\TS(D)=\bigg\{x=(x_t)_{t\in T} \in (\mathbb{R}^+)^{T} \text{ }\bigg|\text{ } x_t=\max_{t'\in T}\set{D(t,t')-x_{t'}}\bigg\}.
\]
So elements in $\TS(D)$ are $|T|$-dimensional vectors with coordinates indexed by points in $T$.
Intuitively, we can think of an element $x\in \TS(D)$ as an ``imaginary point'' in the metric space $(T,D)$, that is at distance $x_t$ to each $t\in T$. 
The distances $\set{x_t}_{t\in T}$ need to satisfy triangle inequalities with the distances in $\set{D(t,t')}_{t,t'}$. That is,
for all pairs $t,t'$ of points in $T$, $x_t+x_{t'}\ge D(t,t')$ must hold. Equivalently, for all $t\in T$, $x_t\ge \max_{t'\in T}\set{D(t,t')-x_{t'}}$.
Moreover, for $x$ to be in the tight span, it is additionally required that, \emph{for each point $t\in T$, at least one inequality in $\set{x_t+x_{t'}\ge D(t,t')}_{t'}$ is tight} (and therefore the name), giving that 
$x_t=\max_{t'\in T}\set{D(t,t')-x_{t'}}$.
We say that $t$ and $t'$, as coordinates of the vector $x$, are \emph{involved} in the triangle inequality $x_t+x_{t'}\ge D(t,t')$. So in other words, $x\in \TS(D)$ means every coordinate of $x$ is involved in some tight inequality.

\subsection{Examples}

For a metric $D$ on two points $\set{a,b}$, it is easy to see that the tight span is the one-dimensional set $\TS(D)=\set{(r,D(a,b)-r)\mid 0\le r\le D(a,b)}$. We now give examples for metrics on three or four points.

\paragraph{Example 1.} A metric $D$ on three points $\set{a,b,c}$ and its tight span $\TS(D)$ is shown in \Cref{fig: TS_3}. Specifically, $\TS(D)$ is the union of three $1$-dimensional sets, $\overline{oa}$, $\overline{ob}$, and $\overline{oc}$, where
\begin{itemize}
	\item 	 $\overline{oa}=\set{(r,7-r,5-r)\mid 0\le r\le 2}$; and the tight constraints are $x_a+x_b=7$ and $x_a+x_c=5$;
	\item $\overline{ob}=\set{(7-r,r,8-r)\mid 0\le r\le 5}$; and
	the tight constraints are $x_a+x_b=7$ and $x_b+x_c=8$;
	\item $\overline{oc}=\set{(5-r,8-r,r)\mid 0\le r\le 3}$; and
	the tight constraints are $x_a+x_c=5$ and $x_b+x_c=8$.
\end{itemize}

\begin{figure}[h]
	\centering
	\subfigure%[A metric $D$ on points $a,b,c$.]
	{\scalebox{0.1}{\includegraphics{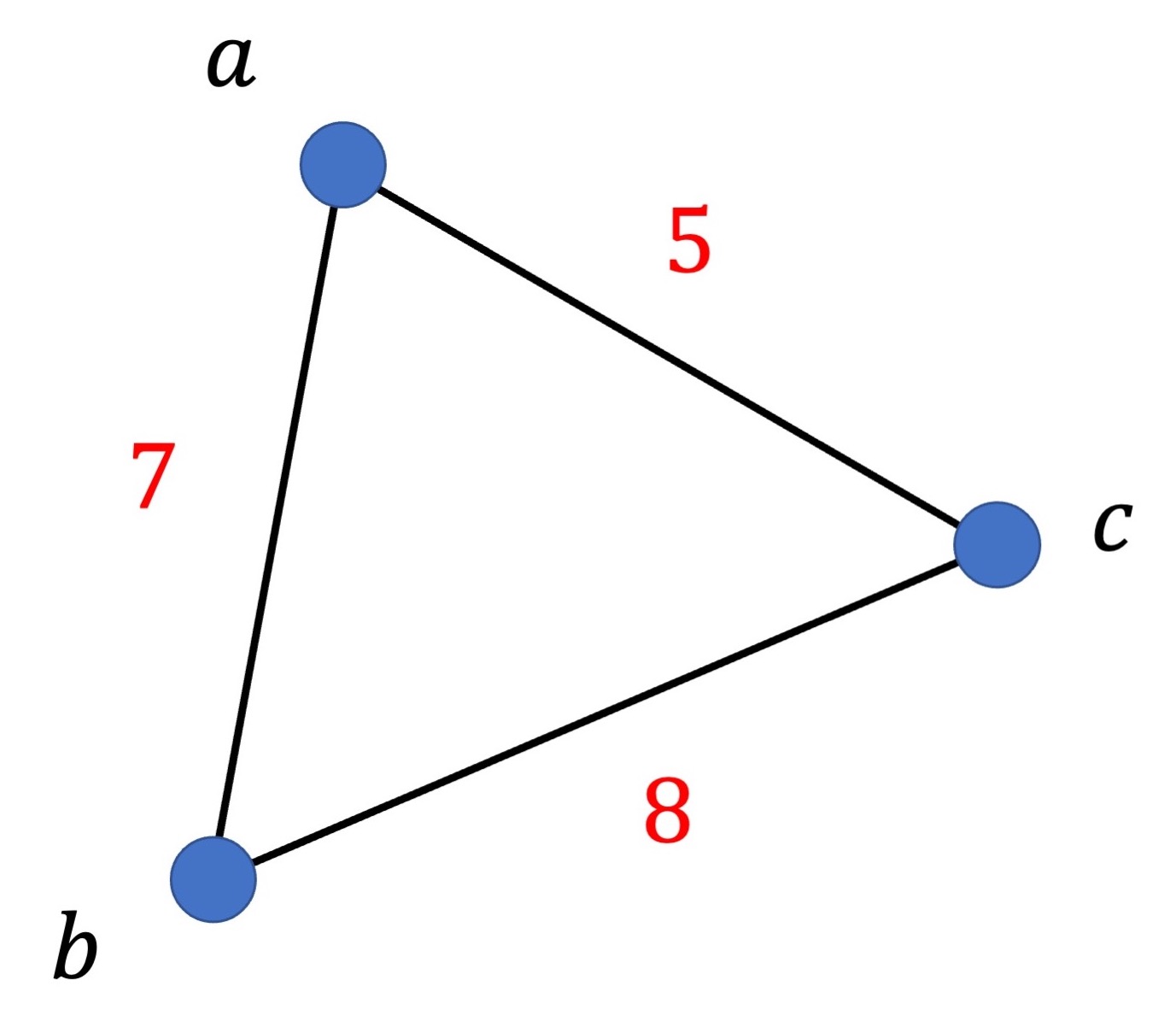}}\label{fig: case11}}
	\hspace{0.7cm}
	\subfigure%[The tight span $\TS(D)$.]
	{
		\scalebox{0.1}{\includegraphics{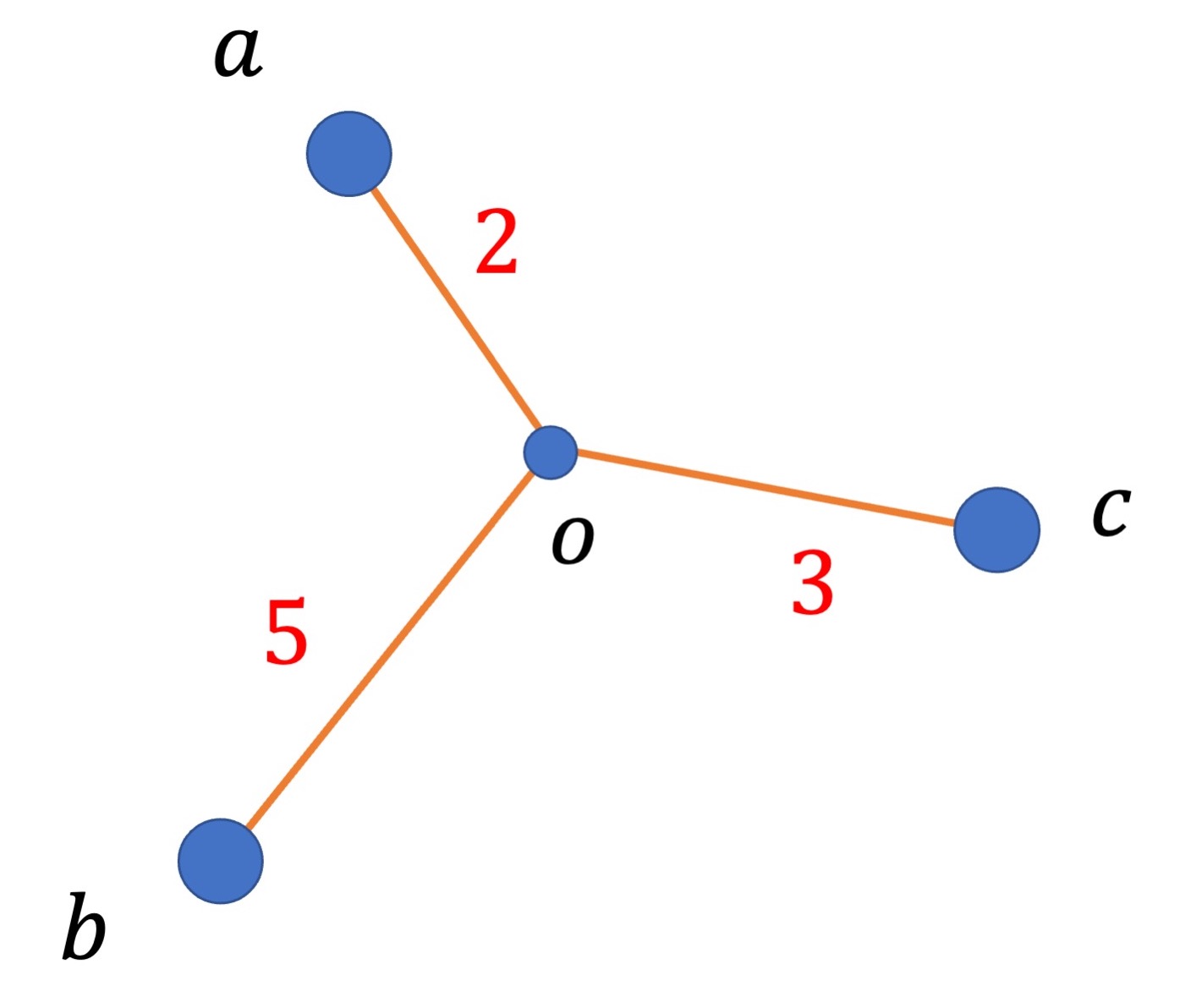}}\label{fig: case12}}
	\caption{An illustration of a metric $D$ on three points (left) and its tight span $\TS(D)$ (right).\label{fig: TS_3}}
\end{figure}

\paragraph{Example 2.} A metric on four points and its tight span is shown in \Cref{fig: TS_4}. Specifically, $\TS(D)$ is the union of four $1$-dimensional sets, $\overline{aa'},\overline{bb'},\overline{cc'},\overline{dd'}$, where
\begin{itemize}
	\item 	 $\overline{aa'}=\set{(r,7-r,8-r,4-r)\mid 0\le r\le 1.5}$; with $x_a+x_b=7$, $x_a+x_c=8$, and $x_a+x_d=4$;
	\item 	 $\overline{bb'}=\set{(7-r,r,6-r,8-r)\mid 0\le r\le 2.5}$; with $x_a+x_b=7$, $x_b+x_c=6$, and $x_b+x_d=8$;
	\item 	 $\overline{cc'}=\set{(8-r,6-r,r,5-r)\mid 0\le r\le 1.5}$; with $x_a+x_c=8$, $x_b+x_c=6$, and $x_c+x_d=5$;
	\item 	 $\overline{dd'}=\set{(4-r,8-r,5-r,r)\mid 0\le r\le 0.5}$; with $x_a+x_d=4$, $x_b+x_d=8$, and $x_c+x_d=5$;
\end{itemize}
and a $2$-dimensional set $\overline{a'b'c'd'}$, which is defined to be
\[\bigg\{\big(1.5+t+s,2.5+(3-t)+s, 1.5+(3-t)+(2-s), 0.5+t+(2-s)\big)\text{ }\bigg|\text{ } 0\le t\le 3, 0\le s\le 2\bigg\},\]
and the tight constraints are $x_a+x_c=8$ and $x_b+x_d=8$.

\begin{figure}[h]
	\centering
	\subfigure%[A metric $D$ on points $a,b,c$.]
	{\scalebox{0.1}{\includegraphics{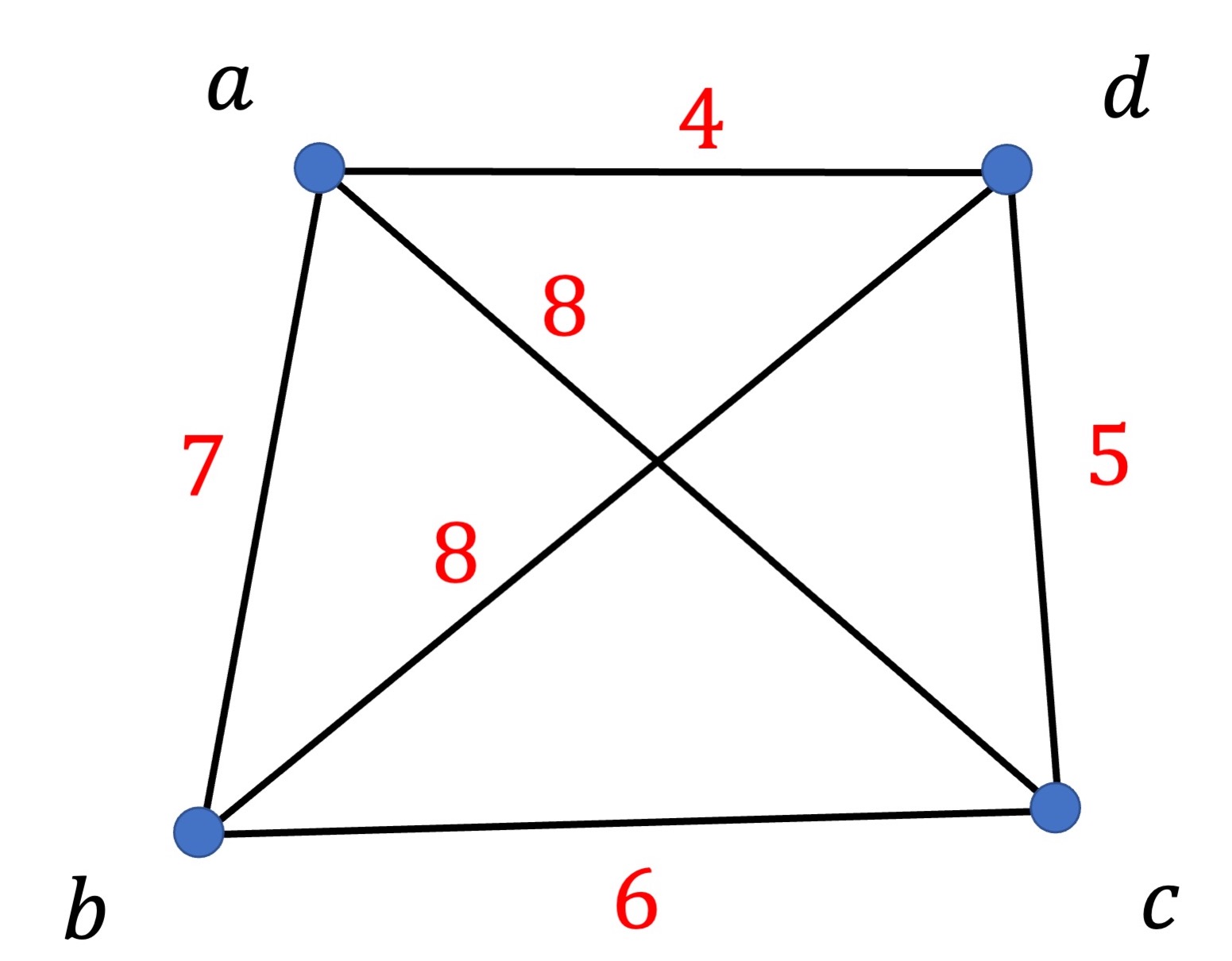}}}
	\hspace{0.7cm}
	\subfigure%[The tight span $\TS(D)$.]
	{
		\scalebox{0.12}{\includegraphics{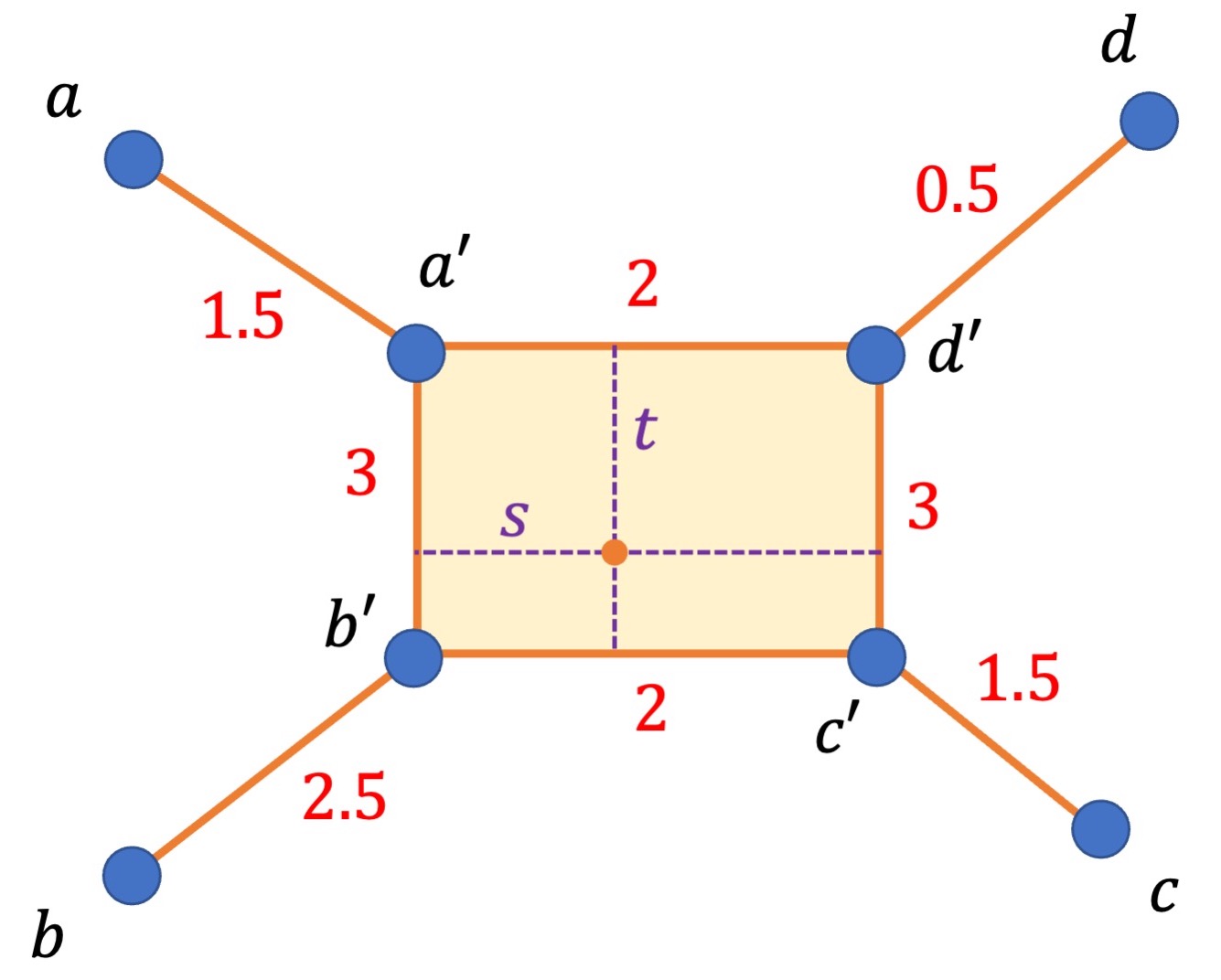}}}
	\caption{An illustration of a metric $D$ on four points (left) and its tight span $\TS(D)$ (right).\label{fig: TS_4}}
\end{figure}

\subsection{Projection onto the tight span}
\label{subsec: proj}

%Let $D$ be a metric on $T$.
We say that a vector $x\in \mathbb{R}^T$ is \emph{valid} iff it satisfies all triangle inequalities in $\set{x_t+x_{t'}\ge D(t,t')}_{t,t'}$. 
We now describe a mapping that maps all valid vectors to vectors in $\TS(D)$, that is non-expanding under $\ell_{\infty}$ norm.
The existence of such a mapping was proved in Section 5.2 of \cite{dress2012basic} (there the mapping is called a ``contraction''). Here we describe an algorithm that computes the mapping.

For a valid vector that does not lie in $\TS(D)$, we describe a process that \emph{projects $x\in \mathbb{R}^T$ onto $\TS(D)$}, at a vector which we denote by $\proj(x)$. Intuitively, $\proj(x)$ is obtained from $x$ by decreasing the value of its coordinates at the same rate until some triangle inequalities become tight, freezing the coordinates that are involved in such inequalities and continuing on the remaining coordinates. 

We now describe the process in detail. Throughout, we maintain a vector $\hat x\in \mathbb{R}^T$, that is initialized to be $x$. Recall that coordinates of $x$ and $\hat x$ are indexed by points in $T$. Throughout the algorithm, all coordinates are either \emph{active} or \emph{inactive}. Initially, they are all active. The algorithm performs iterations until all coordinates become inactive.

In an iteration, we compute 
\begin{itemize}
\item for each pair $t,t'$ of active coordinates, $\Delta_{t,t'}=\frac{1}{2}\cdot (\hat x_t+\hat x_{t'}-D(t,t'))$; and
\item for each active $t$ and inactive $t'$, $\Delta_{t,t'}=\hat x_t+\hat x_{t'}-D(t,t')$;
\end{itemize}
and we compute $\Delta_t=\min_{t' \in T}\set{\Delta_{t,t'}}$ and $\Delta=\min_{t \in T}\set{\Delta_t}$. Then, for each active coordinate $t$,
\begin{itemize}
\item update $\hat x_t\leftarrow \hat x_t-\Delta$; and
\item if $\Delta_t=\Delta$ holds, mark $t$ inactive.
\end{itemize}

When all coordiates are inactive, we stop and return $\hat x$ as $\proj(x)$.
This completes the description of the algorithm. 

We now prove some of its properties. We start by showing in the following claim that the projection $\proj(x)$ computed by the algorithm indeed lies in $\TS(D)$. The proof is straightforward and is deferred to \Cref{apd: Proof of clm: feasible}. Intuitively, the moment that a coordinate $t$ becomes inactive, some inequality in $\set{\hat x_t+\hat x_{t'}-D(t,t')}_{t'}$ becomes tight and stays tight until the end.

\begin{claim}
\label{clm: feasible}
$\proj(x)$ lies in $\TS(D)$.
\end{claim}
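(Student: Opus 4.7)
The plan is to maintain two invariants throughout the run of the projection algorithm and read off the claim once the algorithm halts. The first invariant is \emph{validity}: at the beginning of every iteration, $\hat{x}_t+\hat{x}_{t'}\ge D(t,t')$ holds for all $t,t'\in T$ (specializing to $t'=t$ covers nonnegativity $\hat{x}_t\ge 0$). The second is a \emph{witness} invariant: once a coordinate $t$ is marked inactive, there is some $t'\in T$ with $\hat{x}_t+\hat{x}_{t'}=D(t,t')$, and the values $\hat{x}_t,\hat{x}_{t'}$ never change again.

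To verify validity is preserved in one iteration, I would case-split on each pair $(t,t')$. Two inactive coordinates are frozen, so nothing changes. For an active--inactive pair, only $\hat{x}_t$ drops by $\Delta$, and by construction $\Delta\le \Delta_{t,t'}=\hat{x}_t+\hat{x}_{t'}-D(t,t')$. For an active--active pair, both coordinates drop by $\Delta$, a total decrease of $2\Delta\le 2\Delta_{t,t'}=\hat{x}_t+\hat{x}_{t'}-D(t,t')$. In all cases the constraint survives. Nonnegativity of $\Delta$ itself follows from validity, since every $\Delta_{t,t'}$ is then nonnegative.

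For the witness invariant, suppose $t$ becomes inactive in an iteration, i.e., $\Delta_t=\Delta$. Then there is $t'$ with $\Delta_{t,t'}=\Delta$, and a short computation shows $\hat{x}_t+\hat{x}_{t'}=D(t,t')$ right after the update: in the active--inactive case $\hat{x}_t$ alone drops by exactly $\hat{x}_t+\hat{x}_{t'}-D(t,t')$, and in the active--active case the combined drop $2\Delta$ equals $\hat{x}_t+\hat{x}_{t'}-D(t,t')$. If $t'$ was active entering this iteration, then $\Delta_{t'}\le \Delta_{t',t}=\Delta$, forcing $\Delta_{t'}=\Delta$, so $t'$ is marked inactive in the same iteration. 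Hence at the end of this iteration both $t$ and $t'$ are inactive, their values are frozen, and the equality is preserved from then on.

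Finally, each iteration marks at least one coordinate inactive, so the algorithm terminates in at most $|T|$ iterations. When it halts, validity gives $\proj(x)_t+\proj(x)_{t'}\ge D(t,t')$ for all pairs, while the witness invariant supplies, for each $t$, a $t'$ achieving equality---equivalently, $\proj(x)_t=\max_{t'\in T}\set{D(t,t')-\proj(x)_{t'}}$. These are exactly the defining conditions of $\TS(D)$. The only slightly subtle point, resolved by the $\Delta_{t'}\le \Delta_{t',t}$ observation above, is guaranteeing that a witness $t'$ for a freshly frozen coordinate is itself frozen by the end of the same iteration; once that is in hand the proof is straightforward bookkeeping.
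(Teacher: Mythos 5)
Your proof is correct and follows essentially the same two-invariant structure as the paper's argument: maintain validity of all triangle inequalities via a case split on active/inactive pairs, and show that whenever a coordinate is frozen a tight witness inequality is created and preserved. You are slightly more explicit than the paper on one point — namely, that if the witness $t'$ was still active when $t$ froze, then $\Delta_{t'}\le\Delta_{t',t}=\Delta$ forces $\Delta_{t'}=\Delta$, so $t'$ is frozen in the same iteration — but this is exactly the implicit justification behind the paper's assertion that both coordinates become inactive together, so the approaches are substantively identical.
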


We next prove the following crucial property of the projection function $\proj$, that states that $\proj$ is non-expanding with respect to $\ell_{\infty}$ norm.

\begin{lemma} \label{lem:proj_dis}
For any valid vectors $x,x'$, $\norm{\proj(x)-\proj(x')}_{\infty} \le \norm{x-x'}_{\infty}$.
\end{lemma}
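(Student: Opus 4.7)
My plan is to prove the lemma by contradiction, exploiting the algorithmic structure of the projection. For each coordinate $t \in T$, let $s_t \ge 0$ denote the total amount by which $t$ is decreased throughout the projection of $x$, so that $\proj(x)_t = x_t - s_t$; define $s'_t$ analogously for $x'$. The key structural fact I will establish is the following \emph{partner property}: at the moment a coordinate $t$ becomes inactive in the projection of $x$, it does so because some constraint $\hat x_t + \hat x_{t^\dagger} \ge D(t, t^\dagger)$ has just become tight; the partner $t^\dagger$ is either already frozen or freezes simultaneously with $t$, so $s_{t^\dagger} \le s_t$ and $\proj(x)_t + \proj(x)_{t^\dagger} = D(t, t^\dagger)$. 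Since the algorithm maintains the invariant $\hat x, \hat x' \in \Poly(D)$ (an easy consequence of the definition of $\Delta$), I will also have $\proj(x')_t + \proj(x')_{t^\dagger} \ge D(t, t^\dagger)$.

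Setting $L = \norm{x-x'}_\infty$ and $M = \norm{\proj(x)-\proj(x')}_\infty$, I would assume toward contradiction that $M > L$, and pick a coordinate $t_0$ achieving the max, say with $\proj(x)_{t_0} - \proj(x')_{t_0} = M$ (WLOG this sign). Unfolding via $\proj(x)_{t_0} = x_{t_0} - s_{t_0}$ and $\proj(x')_{t_0} = x'_{t_0} - s'_{t_0}$ gives $s'_{t_0} - s_{t_0} = M - (x_{t_0} - x'_{t_0}) \ge M - L > 0$.

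Next I build a sequence $t_0, t_1, t_2, \ldots$ by letting $t_{i+1}$ be any partner of $t_i$ in the projection of $x$ when $i$ is even, and in the projection of $x'$ when $i$ is odd. Combining the tight constraint on the designated side with the polytope inequality on the other side yields $M \le |\proj(x')_{t_{i+1}} - \proj(x)_{t_{i+1}}|$; by maximality of $M$ this must be an equality with the sign flipped, so a short induction establishes $\proj(x)_{t_i} - \proj(x')_{t_i} = (-1)^i M$. The same unfolding as for $t_0$ then gives $s'_{t_i} - s_{t_i} \ge M - L > 0$ for even $i$, and $s_{t_i} - s'_{t_i} \ge M - L > 0$ for odd $i$.

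Finally, chaining these sign gaps with the partner inequalities $s_{t_{2k}} \ge s_{t_{2k+1}}$ and $s'_{t_{2k+1}} \ge s'_{t_{2k+2}}$ yields
\[s'_{t_{2k}} > s_{t_{2k}} \ge s_{t_{2k+1}} > s'_{t_{2k+1}} \ge s'_{t_{2k+2}},\]
hence the strict descent $s'_{t_0} > s'_{t_2} > s'_{t_4} > \cdots$. Since $\{t_{2k}\}_{k \ge 0} \subseteq T$ lies in a finite set, pigeonhole delivers the contradiction. The main obstacle I anticipate is formalizing the partner property with the correct ordering $s_{t^\dagger} \le s_t$: this requires a short case analysis of the algorithm's freezing rule depending on whether both endpoints of the newly-tight constraint were active (in which case $\Delta_{t, t^\dagger} = \Delta$ forces them to freeze together) or only one was (in which case the inactive endpoint had already frozen strictly earlier). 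The rest of the argument is straightforward algebra on the $s_t, s'_t$ variables.
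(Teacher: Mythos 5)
Your proposal is correct and rests on the same two pillars as the paper's proof: the partner property (which the paper records as Claim~\ref{clm:project}) and the alternation trick of combining a tight constraint in one projection with the corresponding triangle inequality in the other, applied to a coordinate attaining $\norm{\proj(x)-\proj(x')}_\infty$. The only real difference is presentational: the paper collapses your iteration into two steps by choosing, up front, a Pareto-minimal maximizer of $|\alpha_t-\beta_t|$ with respect to the pair $(\hat\alpha_t,\hat\beta_t)$ (asserting ``it is easy to see that such $t$ exists''), while you run the partner chain $t_0,t_1,t_2,\dots$ explicitly and close it by pigeonhole on the finite set $T$; your version makes the implicit use of finiteness fully explicit, at the cost of a slightly longer argument.
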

\begin{proof}
We start by proving the following claim.
\begin{claim} \label{clm:project}
For any coordinate $t\in T$, there is another coordinate $t'\in T$, such that 
$$\big(\proj(x)\big)_t+\big(\proj(x)\big)_{t'} = D(t,t'), \text{ and } x_t-\big(\proj(x)\big)_t \ge x_{t'}-\big(\proj(x)\big)_{t'}.$$
\end{claim}

\begin{proof}
%When an index $i$ is removed from $S$ and $x_i$ is fixed, we have $d_i-x_i = \Delta = \Delta_i \le \Delta_{i,j}$ for any $j$. Thus, for any index $j$ inside $S$ after this iteration, $\Delta_{i,j}$ will only increase. Thus for any $i$ and $j$, $\Delta_{i,j}$ does not decrease in each iteration, and so are $\Delta_i$ and $\Delta$.
Consider the iteration that coordinate $t$ becomes inactive. From the algorithm, there exists another $t'$ such that in that iteration $\Delta=\Delta_{t,t'}(=\Delta_{t}=\Delta_{t'})$. From similar arguments in the proof of \Cref{clm: feasible}, we know that after this iteration, $\hat x_t+\hat x_{t'}=D(t,t')$ holds and coordinate $t'$ is inactive, and they will stay so until the end of the algorithm.
This means that coordinate $t'$ becomes inactive no later than $t'$. From the algorithm, in each iteration, every active coordinate is decreased by the same amount ($\Delta$ in that iteration) and every inactive coordinate stays the same. Therefore, the total decrease in $\hat x_t$ is at least the total decrease in $\hat x_{t'}$, implying that $x_t-\big(\proj(x)\big)_t \ge x_{t'}-\big(\proj(x)\big)_{t'}$. 
\end{proof}

For convenience, we denote $\alpha=\proj(x)$ and $\beta=\proj(x')$, and for each coordinate $t\in T$, we denote
$\hat \alpha_t=x_t-\alpha_t$ and 
$\hat \beta_t=x'_t-\beta_t$. 
%In order to prove \Cref{lem:proj_dis}, it suffices to show that, for every $t\in T$, $|\alpha_t-\beta_t| \le \norm{x-x'}_{\infty}$. 
We choose $t$ such that (it is easy to see that such $t$ exists)
\begin{itemize}
\item maximizes $|\alpha_t-\beta_t|$; and
\item among all  $t'$ that maximize $|\alpha_t-\beta_t|$, there is no other $t'$ with $\hat\alpha_{t} \ge \hat\alpha_{t'}$ and $\hat\beta_{t} > \hat\beta_{t'}$, or $\hat\alpha_{t} > \hat\alpha_{t'}$ and $\hat\beta_{t} \ge \hat\beta_{t'}$.
\end{itemize}
Assume without loss of generality that $\alpha_t\ge \beta_t$. It suffices to show that, for every $t\in T$, $|\alpha_t-\beta_t| \le \norm{x-x'}_{\infty}$. 
First, by definition, 
\[
\alpha_t-\beta_t = (x_t-\hat \alpha_t)- (x'_t-\hat \beta_t) \le  |x_t-x'_t| - \hat \alpha_t + \hat \beta_t\le \norm{x-x'}_{\infty} - \hat \alpha_t + \hat \beta_t.
\]
By \Cref{clm:project}, there exists another coordinate $t'\in T$ such that $\alpha_t+\alpha_{t'} = D(t,t')$ and $\hat \alpha_t \ge \hat \alpha_{t'}$. 
From triangle inequality $\beta_t+\beta_{t'}\ge D(t,t')$,
	$$\alpha_t-\beta_t =(D(t,t') - \alpha_{t'})-\beta_t \le (D(t,t') - \alpha_{t'}) - (D(t,t') - \beta_{t'}) = \beta_{t'} - \alpha_{t'}.$$
	and using similar arguments and the fact that $\hat \alpha_t \ge \hat \alpha_{t'}$, we can show that
	$$
	\beta_{t'} - \alpha_{t'} \le \norm{x-x'}_{\infty} - \hat\beta_{t'} + \hat\alpha_{t'} \le \norm{x-x'}_{\infty} - \hat\beta_{t'} + \hat\alpha_{t}.
	$$
Altogether, we get that $2\cdot (\alpha_t-\beta_t) \le 2\norm{x-x'}_{\infty} - \hat\beta_{t'} + \hat\beta_{t}$.
Now 
\begin{itemize}
\item if $\hat\beta_{t} \le \hat\beta_{t'}$, then $2\cdot (\alpha_t-\beta_t) \le 2\norm{x-x'}_{\infty}$ and we are done;
\item otherwise, $\hat\beta_{t} > \hat\beta_{t'}$, so the coordinate $t$ satisfies $\hat\alpha_{t} \ge \hat\alpha_{t'}$, $\hat\beta_{t} > \hat\beta_{t'}$ and $\alpha_t-\beta_t \le \beta_{t'}-\alpha_{t'}$, and this contradicts our choices of coordinate $t$.
\end{itemize}  
%Via similar analysis, we can either prove $\beta_{t'}-\alpha_{t'} \le \norm{x-x'}_{\infty}$ (which implies that $\alpha_t-\beta_t \le \norm{x-x'}_{\infty}$) or find another coordinate $t''$ such that $\hat\beta_{t} > \hat\beta_{t'} \ge \hat\beta_{t''}$, $\hat\alpha_{t} \ge \hat\alpha_{t'} > \hat\alpha_{t''}$ and $\alpha_t-\beta_t \le \beta_{t'}-\alpha_{t'} \le \alpha_{t''}-\beta_{t''}$. And in the later case, we apply the same analysis on $\alpha_{t''}-\beta_{t''}$. In the end, we will be able to show that
%If we keep doing the same analysis, we either prove $\alpha_t - \beta_t \le \norm{x-x'}_{\infty}$ or construct an infinite list of index $i_1,i_2,\dots$ such that for any $p$, $\Delta^u_{i_p}>\Delta^u_{i_{p+1}}$ and $\Delta^v_{i_p}>\Delta^v_{i_{p+1}}$, which is not possible. 
\end{proof}

\begin{remark}
Although $\proj(x)$ is called the \emph{projection} of $x$ into $\TS(D)$, it is in general \underline{not true} that $\proj(x)\in \arg_{y\in \TS(D)}\min\set{||y-x||_{\infty}}$.
For example, let $D$ be a metric on three points $\set{a,b,c}$, where $D(a,b)=D(b,c)=D(c,a)=4$. Then $x=(1,3,5)$ is a valid vector, %(that is, it satisfies all triangle inequalities), 
and its projection onto $\TS(D)$ (calculated from the algorithm above) is $\proj(x)=(1,3,3)$, so $||x-\proj(x)||_{\infty}=2$. However, the point $a=(0,4,4)$ satisfies that $||x-a||_{\infty}=1$.
\end{remark}

\subsection{Metric $(\cdot,\cdot)_{\TS}$} 

At the end of this section, we define a metric $(\cdot,\cdot)_{\TS}$ over all points in $\TS(D)$ as follows. For every pair $x,y\in \TS(D)$, we define $$(x,y)_{\TS}=||x-y||_{\infty}.$$ 
We introduce this definition not merely for simplifying the notations, but for essentially relating $(x,y)_{\TS}$ to the ``graph structure'' of the tight span.

For example, from \Cref{fig: TS_3}, the tight span for a $3$-point metric exhibits a $3$-leg star, and the metric $(\cdot,\cdot)_{\TS}$ is indeed the ``shortest-path distance'' in this ``graph''. Specifically, for a point $x=(1,6,4)$ on line $\overline{oa}$ and another point $y=(3,6,2)$ on line $\overline{oc}$, their shortest path is $x$-$o$-$y$ whose length is $(x,o)_{\TS}+(o,y)_{\TS}=2$. Therefore, in this case $\TS(D)$ is  the union of line-metrics $\overline{oa},\overline{ob},\overline{oc}$ with a common line-endpoint $o$.

As another example (illustrated in \Cref{fig: TS_4}), the tight span of a $4$-point metric is essentially the union of a $2$-dimensional $\ell_1$-metric space (the rectangle $\overline{a'b'c'd'}$) with four $1$-dimensional line-metric pendants. Intuitively, if we replace the rectangle with a ``tightly knotted grid'' (aligned with axis $a'b'$ and $a'd'$) and replace the line pendants with paths (with vertices densely packed in it), then the metric $(\cdot,\cdot)_{\TS}$ is indeed the shortest-path distance in this ``graph''.

\newcommand{\opt}{\mathsf{OPT}}

\section{Graphs with $5$ Terminals: Proof of \Cref{thm: upper 5}}
\label{sec: upper}

In this section we provide the proof of \Cref{thm: upper 5}. 
Recall that we are given an instance $(G,T,\ell)$ of $\zesn$.
Denote $\opt=\sum_{(u,v)\in E}c(u,v)\cdot\dist_{\ell}(u,v)$. The goal is to show that there exists a solution $(\fset,\delta)$ with $|\fset|\le 30$ and $\vol(\fset,\delta)\le \opt$. 
We denote by $D$ the metric on $T$ induced by the shortest-path distance $\dist_{\ell}(\cdot,\cdot)$.
The proof consists of two steps. In the first step, we project vertices of $G$ onto the tight span $\TS(D)$ using the algorithm in \Cref{subsec: proj}. In the second step, we construct the solution $(\fset,\delta)$ by properly decomposing the tight span $\TS(D)$.

%We denote by $\delta^*$ the optimal solution to (LP-Metric), so $\opt=\sum_{(u,v)\in E(G)}\delta^*(u,v)$.

\subsection{Step 1. Project the vertices onto $\TS(D)$}

For each $v\in V(G)$, we define vector $x^v=(\dist_{\ell}(v,t))_{t\in T}$. That is, $x^v$ is a $|T|$-dimensional vector, such that each coordinate is indexed by a terminal $t\in T$, and the value of this coordinate is the shortest-path distance between $v$ and $t$. Clearly, $x^v$ is valid, as all triangle inequalities $\dist_{\ell}(v,t)+\dist_{\ell}(v,t')\ge \dist_{\ell}(t,t')=D(t,t')$ hold for all $t,t'\in T$. Moreover, for every pair $u,v$ of vertices in $G$, 
$$\norm{x^u-x^v}_{\infty}=\max_{t\in T}\set{|\dist_{\ell}(u,t)-\dist_{\ell}(v,t)|}\le \dist_{\ell}(u,v).$$

We then compute the tight span $\TS(D)$ of $D$, and use the algorithm in \Cref{subsec: proj} to project each vector $x^v$ onto $\TS(D)$. Denote $p^v=\proj(x^v)$. From \Cref{lem:proj_dis}, for all edge $(u,v)\in E(G)$,
$$(p^u,p^v)_{\TS}=\norm{p^u-p^v}_{\infty}\le \norm{x^u-x^v}_{\infty}\le \dist_{\ell}(u,v).$$

\subsection{Step 2. Construct a solution by partitioning $\TS(D)$}

From the previous results in \cite{koolen2009optimal}, the tight span of metrics on five points can be classified into three types. We will not discuss the classification in details, but will focus on the ``shape/structure'' of each type, and show the construction of a low-cost solution for them.
For readability, we provide full details for one type here, and defer the analysis of other two types to \Cref{apd: type 2 and 3}

\paragraph{Type 1.}
The first type of tight spans on $5$ vertices $\set{a,b,c,d,e}$, as illustrated in \Cref{fig: TS_5_1}, consist of five $1$-dimensional sets $L_a,L_b,L_c,L_d,L_e$, and five $2$-dimensional sets $R_a,R_b,R_c,R_d,R_e$.  
The tight span is determined by parameters $\set{l_a,l_b,l_c,l_d,l_e,l_{ab},l_{bc},l_{cd},l_{de},l_{ea}}$, which are uniquely determined by the distances in $D$. We do not discuss the calculation of these parameters here as it is irrelevant from our construction, and we refer the interested readers to \cite{koolen2009optimal}.

The set $L_a$ is the $a$-$a'$ line metric whose length is $l_a$. That is, the point $x$ in $L_a$ at distance $r$ from $a$ is at distance $D(a,t)-r$ from $t$ (in metric $(\cdot,\cdot)_{\TS}$), for every $t\in \set{b,c,d,e}$.
The tight inequalities for $x$ in $L_a$ are
$\set{x_a+x_t=D(a,t)}_{t\in \set{b,c,d,e}}$.
The other 1-dimensional sets can be defined similarly.

The set $R_a$ is the $o$-$ab$-$a'$-$ea$ rectangle with $\ell_1$ metric. That is, the point $x$ in $R_a$ at distance $r$ from the $a'$-$ab$ line and at distance $r'$ from the $a'$-$ea$ line is, in metric $(\cdot,\cdot)_{\TS}$, at distance 
\begin{itemize}
\item $r+r'+l_a$ from $a$;
\item $r+(l_{ea}-r')+l_{bc}+l_b$ from $b$;
\item $(l_{ab}-r)+(l_{ea}-r')+l_{bc}+l_{cd}+l_c$ from $c$;
\item $(l_{ab}-r)+(l_{ea}-r')+l_{cd}+l_{de}+l_d$ from $d$;
\item $(l_{ab}-r)+r'+l_{de}+l_e$ from $e$.
\end{itemize}
The tight inequalities for $x\in R_a$ are
$x_a+x_c=D(a,c)$, $x_a+x_d=D(a,d)$, and $x_b+x_e=D(b,e)$.
The other 2-dimensional sets can be defined similarly.

\begin{figure}[h]
	\centering
	\subfigure[A type-1 tight span on the set $\set{a,b,c,d,e}$, with its five $2$-dimensional sets $R_a$,$R_b$,$R_c$,$R_d$,$R_e$ shown in yellow, gray, pink, blue, and green, respectively.]
	{\scalebox{0.12}{\includegraphics{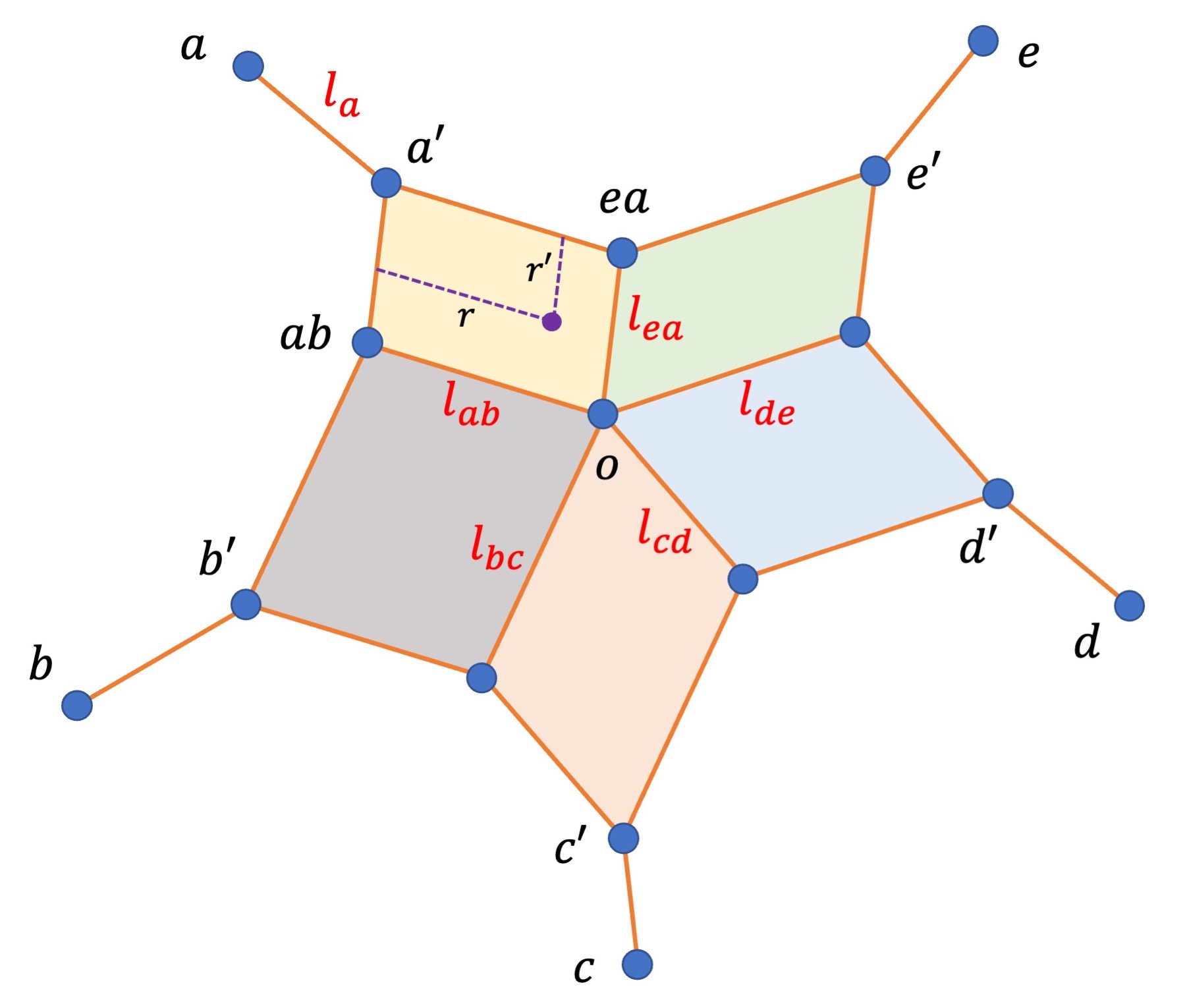}}}
	\hspace{0.7cm}
	\subfigure[The construction of $\fset$ based on a random partition of $\TS(D)$, where the region corresponding to the sets $F_o,F_{bc},F_{d'}$ are shaded.]
	{
		\scalebox{0.12}{\includegraphics{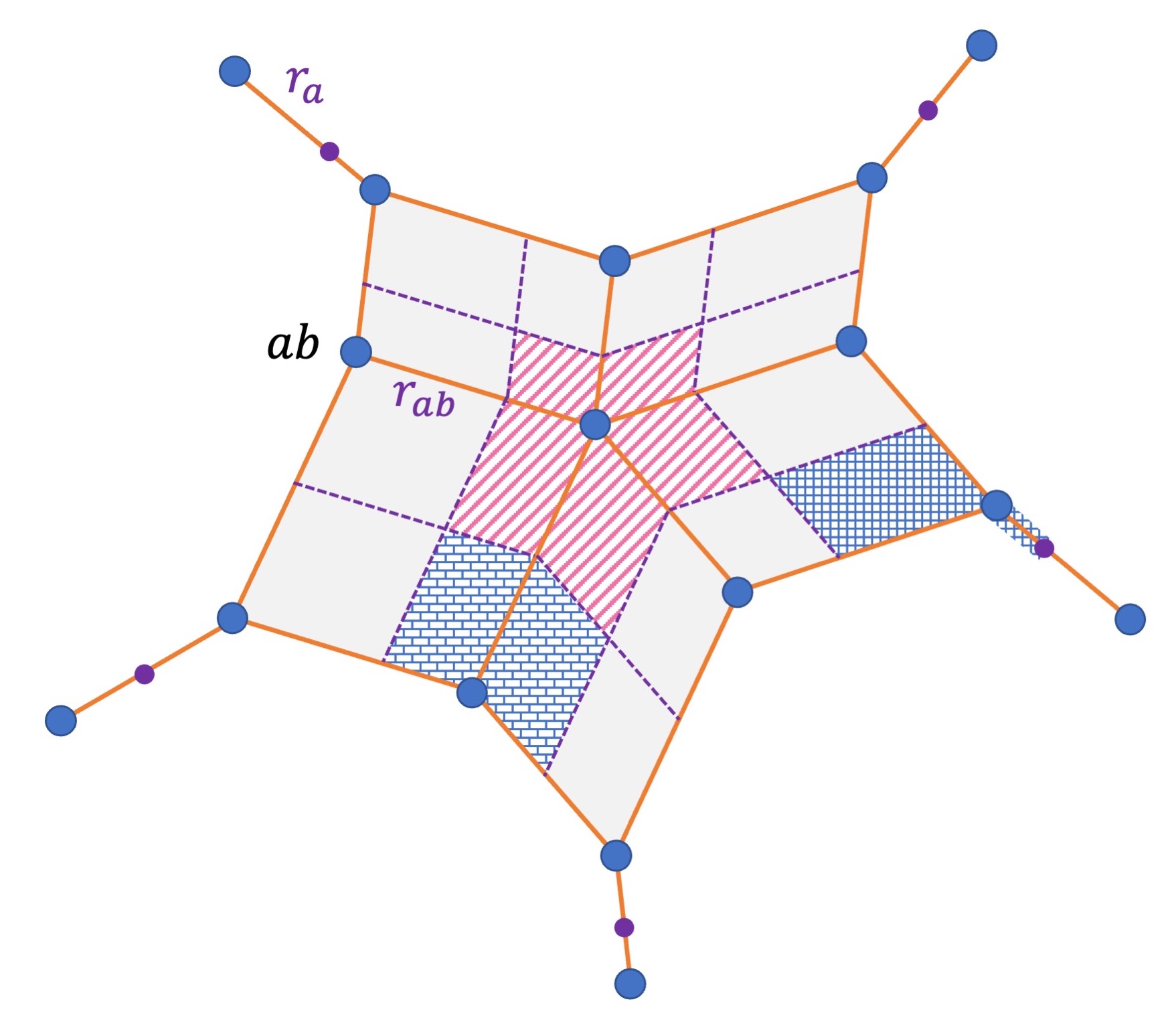}\label{fig: TS 5}}}
	\caption{An illustration of a type-1 tight span on $5$ points (left) and its decomposition (right).\label{fig: TS_5_1}}
\end{figure}

Rectangles $R_a,R_b,R_c,R_d,R_e$ appear in this order. Consecutive rectangles $R_a,R_b$ share its boundary $o$-$ab$ line (and similarly, $R_b,R_c$ share $o$-$bc$ line, etc). All rectangles share an endpoint $o$.
Intuitively, $(x,y)_{\TS}$ is the ``shortest distance one has to travel in canonical directions from $x$ to $y$''. For example,
\begin{itemize}
\item if a pair $x,y\in \TS(D)$ belong to consecutive rectangles say $x\in R_a, y\in R_b$, then $x$ needs to first reach line $o$-$ab$ via a rectilinear path in $R_a$, and then reach $y$ via a rectilinear path in $R_b$; in fact $(x,y)_{\TS}$ is essentially their $\ell_1$ distance in the big rectangle $R_a\cup R_b$ (that is, rectangle $a'$-$b'$-$bc$-$ea$), with axis $a'$-$b'$ and $a'$-$ea$;
\item if a pair $x,y\in \TS(D)$ belong to non-consecutive rectangles say $x\in R_a, y\in R_c$, then $(x,y)_{\TS}$ is the $x$-$o$ $\ell_1$-distance in  $R_a$ plus the $o$-$y$ $\ell_1$-distance in $R_c$.
\end{itemize}

We now proceed to construct a solution $(\fset,\delta)$ whose cost is bounded by $\opt$, when the tight span $\TS(D)$ is of type 1.

The collection $\fset$ contains a set for each node marked blue in \Cref{fig: TS_5_1}. That is, 
\[\fset=\set{F_a,F_b,F_c,F_d,F_e,
	F_{a'},F_{b'},F_{c'},F_{d'},F_{e'}, F_{ab},F_{bc},F_{cd},F_{de},F_{ea},F_o},\]
and so $|\fset|=16$. Here sets $F_a,F_b,F_c,F_d,F_e$ contain terminals, and others do not.

For the points in set $L_a$, we pick a number $r_a$ uniform at random from $[0,l_a]$, and then define
\[
F_a=\bigg\{u\text{ }\bigg|\text{ } p^u \text{ lies in }L_a\text{ and is at distance}<r_a\text{ from }a\bigg\};\text{ and }F^{L_a}_{a'}=L_a\setminus F_a,
\]
where by ``distance'' we mean the distance in $(\cdot,\cdot)_{\TS}$.
The other $1$-dimensional sets $L_b,\ldots,L_e$ are partitioned in a similar way (into $(F_b,F^{L_b}_{b'}),\ldots,(F_e,F^{L_e}_{e'})$, respectively), based on random numbers $r_b,\ldots,r_e$ from intervals $[0,l_b],\ldots,[0,l_e]$ respectively.

We then partition the rectangles. We first independently pick random numbers $r_{ab}\in [0,l_{ab}],r_{bc}\in [0,l_{bc}],r_{cd}\in [0,l_{cd}],r_{de}\in [0,l_{de}],r_{ea}\in [0,l_{ea}]$, respectively.
For  $R_a$, we define
\begin{itemize}
\item $F^{R_a}_{a'}=\bigg\{u\text{ }\bigg|\text{ } p^u \text{ lies in }R_a\text{ and is at distance}<r_{ab}\text{ from }ab\text{ and}< r_{ea}\text{ from }ea\bigg\}$;
\item $F^{R_a}_{ab}=\bigg\{u\text{ }\bigg|\text{ } p^u \text{ lies in }R_a\text{ and is at distance}<r_{ab}\text{ from }ab\text{ and}\ge r_{ea}\text{ from }ea\bigg\}$;
\item $F^{R_a}_{ea}=\bigg\{u\text{ }\bigg|\text{ } p^u \text{ lies in }R_a\text{ and is at distance}\ge r_{ab}\text{ from }ab\text{ and}< r_{ea}\text{ from }ea\bigg\}$; and
\item $F^{R_a}_{o}=\bigg\{u\text{ }\bigg|\text{ } p^u \text{ lies in }R_a\text{ and is at distance}\ge r_{ab}\text{ from }ab\text{ and}\ge r_{ea}\text{ from }ea\bigg\}$.
\end{itemize}

The other $2$-dimensional sets are partitioned in a similar way.
Finally, we aggregate the partitioning constructed above, by setting
\begin{itemize}
\item $F_{a'}=F^{L_a}_{a'}\cup F^{R_a}_{a'}$ (and similarly for $F_{b'},\ldots,F_{e'}$);
\item $F_{ab}=F^{R_a}_{ab}\cup F^{R_b}_{ab}$ (and similarly for $F_{bc},\ldots,F_{ea}$); and 
\item $F_o=F^{R_a}_{o}\cup F^{R_b}_{o}\cup F^{R_c}_{o}\cup F^{R_d}_{o}\cup F^{R_e}_{o}$.
\end{itemize}
This completes the construction of $\fset$. See \Cref{fig: TS 5} for an illustration. As we have projected all vertices in $G$ to $\TS(D)$ in the first step, clearly $\fset$ is a partition of $V(G)$.
We then define the metric $\delta$ on $\fset$ as such that, for every pair $F_{t},F_{t'}$ (where $t$ is some blue node in $\TS(D)$), $\delta(F_t,F_{t'})=(t,t')_{\TS}$.

We now show that the (random) solution $(\fset,\delta)$ satisfies that $\mathbb{E}[\vol(\fset,\delta)]\le \opt$. In fact, we will prove the following lemma in the next subsection.

\begin{lemma}
\label{lem: non-expanding}
For each edge $(u,v)\in E(G)$, if $u\in F_t$ and $v\in F_{t'}$, then $\mathbb{E}[(t,t')_{\TS}]\le (p^u,p^v)_{\TS}$.
\end{lemma}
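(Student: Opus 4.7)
My plan is to prove \Cref{lem: non-expanding} by case analysis on the locations of $p^u$ and $p^v$ in the decomposition of $\TS(D)$, exploiting the fact that the random partition $\fset$ is built from ten mutually independent uniform thresholds---one per axis of the decomposition (the five pendants $l_a,\ldots,l_e$ and the five rectangle-spanning lengths $l_{ab},\ldots,l_{ea}$)---and that the blue nodes sit exactly at the endpoints of these axes. The workhorse is the following one-dimensional fact: for $x,y\in[0,L]$ and a threshold $r\sim \mathrm{Unif}[0,L]$ partitioning $[0,L]$ into two halves labeled by the endpoints $0$ and $L$, the expected distance between the labels of $x$ and $y$ is exactly $|x-y|$.

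First I would handle $p^u,p^v$ lying inside the same piece of $\TS(D)$. When the piece is a pendant $L_a$, a single threshold plus the 1D fact gives the equality $\mathbb{E}[(t,t')_{\TS}]=(p^u,p^v)_{\TS}$. When it is a rectangle $R_a$---whose restricted metric $(\cdot,\cdot)_{\TS}$ is $\ell_1$ on the two axes governed by $r_{ab}$ and $r_{ea}$---the two independent thresholds split $R_a$ into four cells labeled by its corners $\set{o,ab,a',ea}$; by independence, the expected label distance factors as a sum of two 1D contributions that together match the $\ell_1$ distance $(p^u,p^v)_{\TS}$ in $R_a$. The analogous argument works when $p^u,p^v$ lie in adjacent rectangles $R_a,R_b$ sharing the $o$-$ab$ boundary: because $r_{ab}$ is sampled once and used consistently on both sides, together with the independent $r_{ea},r_{bc}$ it partitions $R_a\cup R_b$ into six cells labeled by $\set{o,ab,a',b',bc,ea}$; both $(p^u,p^v)_{\TS}$ and the expected label distance decompose additively along the three axes $r_{ea},r_{ab},r_{bc}$, giving the inequality term by term.

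For $p^u,p^v$ in non-adjacent rectangles (say $R_a,R_c$) the canonical tight-span shortest path passes through the center $o$, so $(p^u,p^v)_{\TS}=(p^u,o)_{\TS}+(o,p^v)_{\TS}$; applying the within-rectangle case to each summand and invoking the triangle inequality $\mathbb{E}[(t,t')_{\TS}]\le\mathbb{E}[(t,o)_{\TS}]+\mathbb{E}[(o,t')_{\TS}]$ closes this case. Remaining configurations in which one or both of $p^u,p^v$ lie on a pendant are handled analogously by routing through the attachment corner (e.g., $a'$ for $L_a$) and combining the 1D fact on the pendant with the relevant rectangle case. The main technical obstacle I anticipate is the bookkeeping for thresholds shared between two rectangles (e.g., $r_{ab}$ lives simultaneously in $R_a$ and $R_b$): I need to check that whenever $r_{ab}$ separates $p^u$ and $p^v$, it contributes exactly $l_{ab}$ to the expected label distance regardless of which of $R_a,R_b$ contains each point, which holds because a single realization of $r_{ab}$ induces a consistent cut on both sides of the $o$-$ab$ boundary. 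Combining $\mathbb{E}[(t,t')_{\TS}]\le (p^u,p^v)_{\TS}$ over all edges with the non-expansion $(p^u,p^v)_{\TS}\le \dist_\ell(u,v)$ from Step~1 then yields $\mathbb{E}[\vol(\fset,\delta)]\le\opt$, and the probabilistic method produces a deterministic solution meeting the size and cost bounds of \Cref{thm: upper 5}.
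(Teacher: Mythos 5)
Your proposal is correct and takes essentially the same approach as the paper: a random uniform-threshold decomposition of the tight span exploiting its $\ell_1$ structure piece by piece, routing shortest paths through $o$ (or the appropriate attachment corner such as $a'$) and applying the triangle inequality when the two projected points lie in different pieces. The only small deviation is in the adjacent-rectangle case, where you compute the expected label distance directly by additive decomposition over the three relevant thresholds, whereas the paper introduces an intermediate boundary point $w$ and invokes the triangle inequality; both arguments reduce to the same $\ell_1$ / independent-threshold calculation and give the stated bound.
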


Note that this lemma immediately implies that
\[\mathbb{E}[\vol(\fset,\delta)]\le\sum_{(u,v)\in E(G)}c(u,v)\cdot(p^u,p^v)_{\TS}\le\sum_{(u,v)\in E(G)}c(u,v)\cdot\dist_{\ell}(u,v)= \opt,\]
completing the proof of \Cref{thm: upper 5} in the case where the metric $D$ has a type-1 tight span $\TS(D)$.

\subsection{Proof of \Cref{lem: non-expanding}}

Essentially, \Cref{lem: non-expanding} is true because the metric space $(\TS(D),(\cdot,\cdot)_{\TS})$ is the union of $1$-dimensional and $2$-dimensional $\ell_1$ metrics, and $\ell_1$ metrics admit simple non-expanding randomized decompositions.
We now provide the proof, tailored to the structure of $\TS(D)$.
We denote by $F_t$ ($F_{t'}$, resp.) the set in $\fset$ that contains vertex $u$ ($v$, resp.).
We distinguish between the following cases.

\subsubsection*{Case 1. $p^u$ and $p^v$ lie in the same $1$-dimensional or $2$-dimensional set}

Assume first that $p^u$ and $p^v$ lie in the same $1$-dimensional set. Assume without loss of generality that $p^u,p^v\in L_a$. Denote $a_u=(p^u,a)_{\TS}$ and $a_v=(p^v,a)_{\TS}$, and assume without loss of generality that $a_u\le a_v$. Then
\begin{itemize}
\item with probability $\frac{a_u}{l_a}$, $u,v$ both go to $F^{L_a}_{a'}$ (and so $F_{a'}$);
\item with probability $\frac{a_v-a_u}{l_a}$, $u$ goes to $F_a$ and $v$ goes to $F^{L_a}_{a'}$ (and so $F_{a'}$); and
\item with probability $1-\frac{a_v}{l_a}$, $u,v$ both go to $F^{L_a}_{a}$ (and so $F_{a}$).
\end{itemize}
Therefore, $\mathbb{E}[(t,t')_{\TS}]=\frac{a_v-a_u}{l_a}\cdot (a,a')_{\TS}=\frac{a_v-a_u}{l_a}\cdot l_a=a_v-a_u=(p^u,p^v)_{\TS}$.

Assume now that $p^u$ and $p^v$ lie in the same $2$-dimensional set, and assume without loss of generality that $p^u,p^v\in R_a$. By definition of $R_a$, let $p^u=(x^u,y^u)$ and $p^v=(x^v,y^v)$ (see \Cref{fig: TS_sep} for an illustration), and assume without loss of generality that $x^u\le x^v$ and $y^u\le y^v$. 
Then
\begin{itemize}
	\item with probability $\frac{x^u}{l_{ab}}$, $u,v$ both go to $F_{o}\cup F_{ea}$;
	\item with probability $\frac{x^v-x^u}{l_{ab}}$, $u$ goes to $F_{a'}\cup F_{ab}$ and $v$ goes to $F_{o}\cup F_{ea}$; and
	\item with probability $1-\frac{x^v}{l_{ab}}$, $u,v$ both go to $F_{a'}\cup F_{ab}$.
\end{itemize}
\begin{figure}[h]
	\centering
	\scalebox{0.1}{\includegraphics{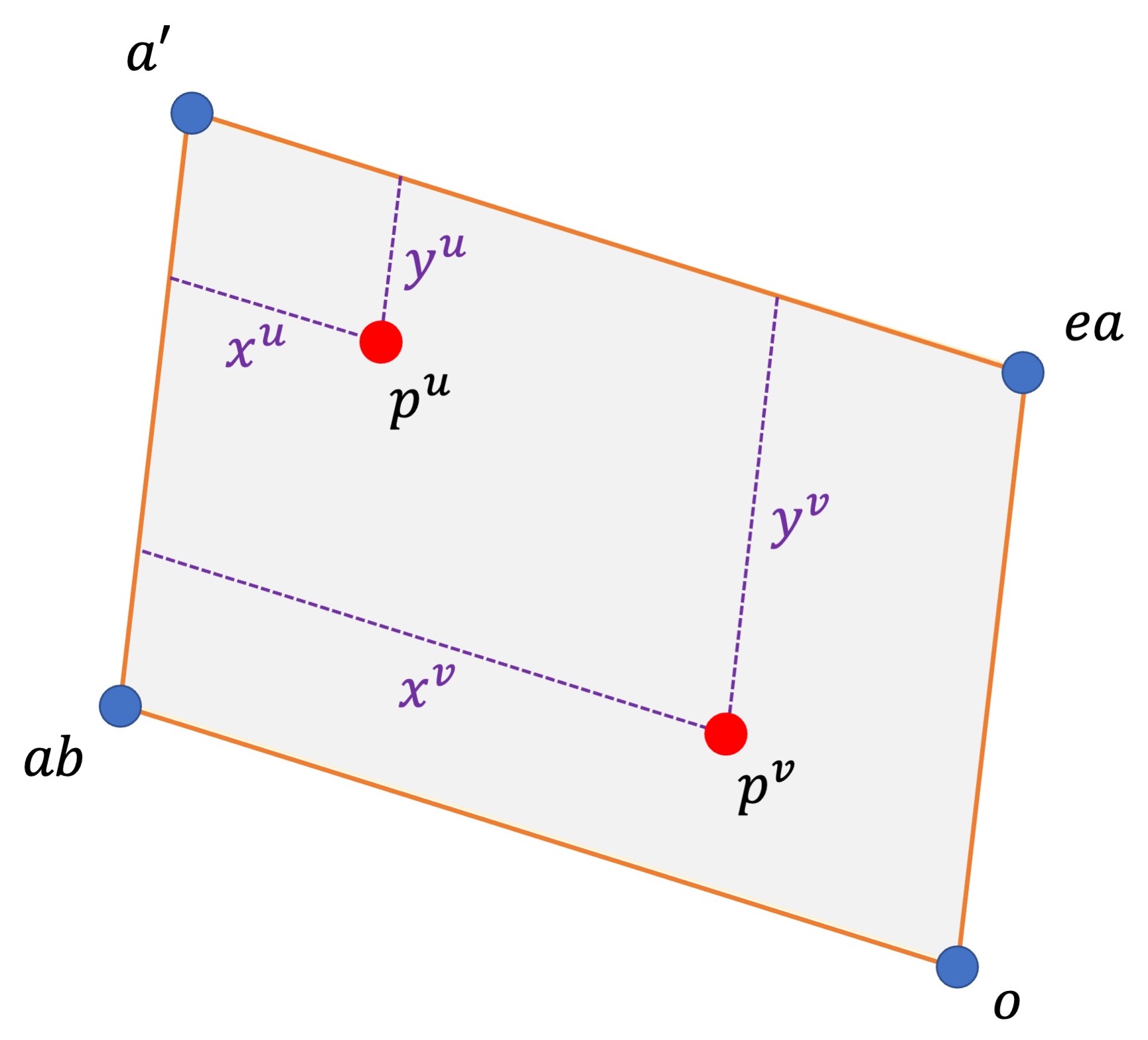}}
	\caption{An illustration of points $p^u$ and $p^v$ lying in the same $2$-dimensional set.\label{fig: TS_sep}}
\end{figure}
We can similarly calculate the probability of $u$ and $v$ going to $F_{a'}\cup F_{ea}$ and $F_{o}\cup F_{ab}$, using $y^u,y^v$ and $l_{ea}$.
Altogether, we get that \[\mathbb{E}[(t,t')_{\TS}]=\frac{x^v-x^u}{l_{ab}}\cdot (o,ab)_{\TS}+\frac{y^v-y^u}{l_{ea}}\cdot (o,ea)_{\TS}=(x^v-x^u)+(y^v-y^u)=(p^u,p^v)_{\TS}.\]

%\znote{elaborate on why $(x^v-x^u)+(y^v-y^u)=\norm{p^u-p^v}_{\infty}$}

%\znote{some arguments on $p^u$ lies in $R_a$ and $p^v$ lies in some $2$-dimensional set}

\subsubsection*{Case 2. $p^u$ and $p^v$ lie in consecutive $2$-dimensional sets}

Assume without loss of generality that $p^u\in R_a$ and $p^v\in R_e$. Let $p^u=(x^u,y^u)$ and $p^v=(x^v,y^v)$ (see \Cref{fig: TS_sep_2} for an illustration). Assume without loss of generality that $x^u\le x^v$ and $y^u\le y^v$. Let $w$ be any node on $o$-$ea$ line with $y^u\le y^w:=(w,ea)_{\TS}\le y^v$.
\begin{figure}[h]
	\centering
	\scalebox{0.1}{\includegraphics{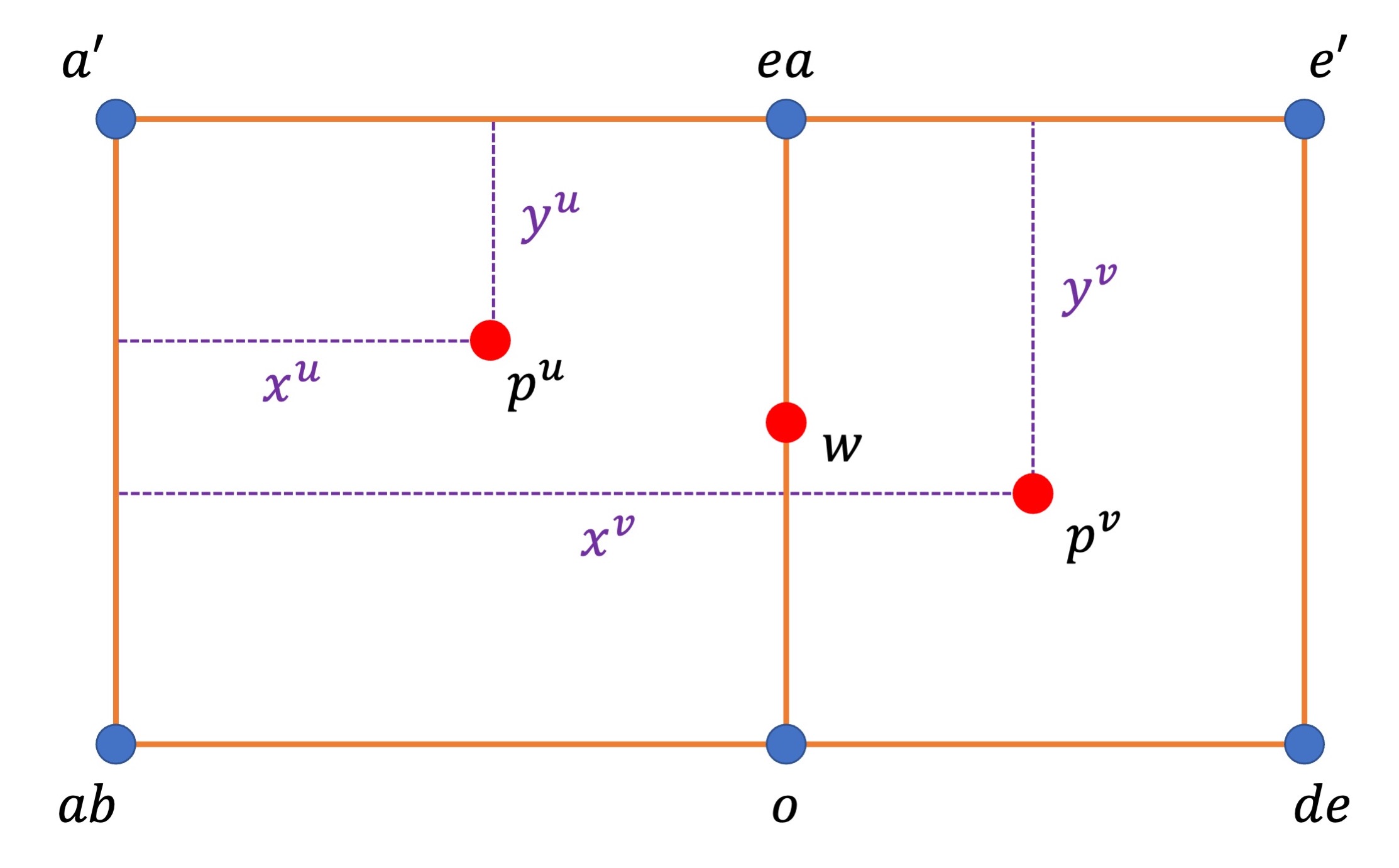}}
	\caption{An illustration of a points $p^u$ and $p^v$ lying in consecutive $2$-dimensional sets.\label{fig: TS_sep_2}}
\end{figure}

Then via similar arguments in Case 1, we can show that, 
(denoting $F_{t''}$ as the set that contains $w$)
\[
\begin{split}
\mathbb{E}[(t,t')_{\TS}]& \le \mathbb{E}[(t,t'')_{\TS}+(t'',t')_{\TS}]\\ & \le \bigg((l_{ab}-x^u)+(y^w-y^u)\bigg)+\bigg((x^v-l_{ab})+(y^v-y^w)\bigg)\\
& = (x^v-x^u)+(y^v-y^u)= (p^u,p^v)_{\TS}.
\end{split}
\]

\subsubsection*{Case 3. $p^u$ and $p^v$ lie in non-consecutive $2$-dimensional sets}

Assume without loss of generality that $p^u\in R_a$ and $p^v\in R_c$. Let $p^u=(x^u,y^u)$ and $p^v=(x^v,y^v)$ (see \Cref{fig: TS_sep_3} for an illustration). %Assume without loss of generality that $x^u\le x^v$ and $y^u\le y^v$. Let $w$ be any node on $o$-$ea$ line with $y^u\le y^w:=\norm{w-ea}_{\infty}\le y^v$.
\begin{figure}[h]
	\centering
	\scalebox{0.1}{\includegraphics{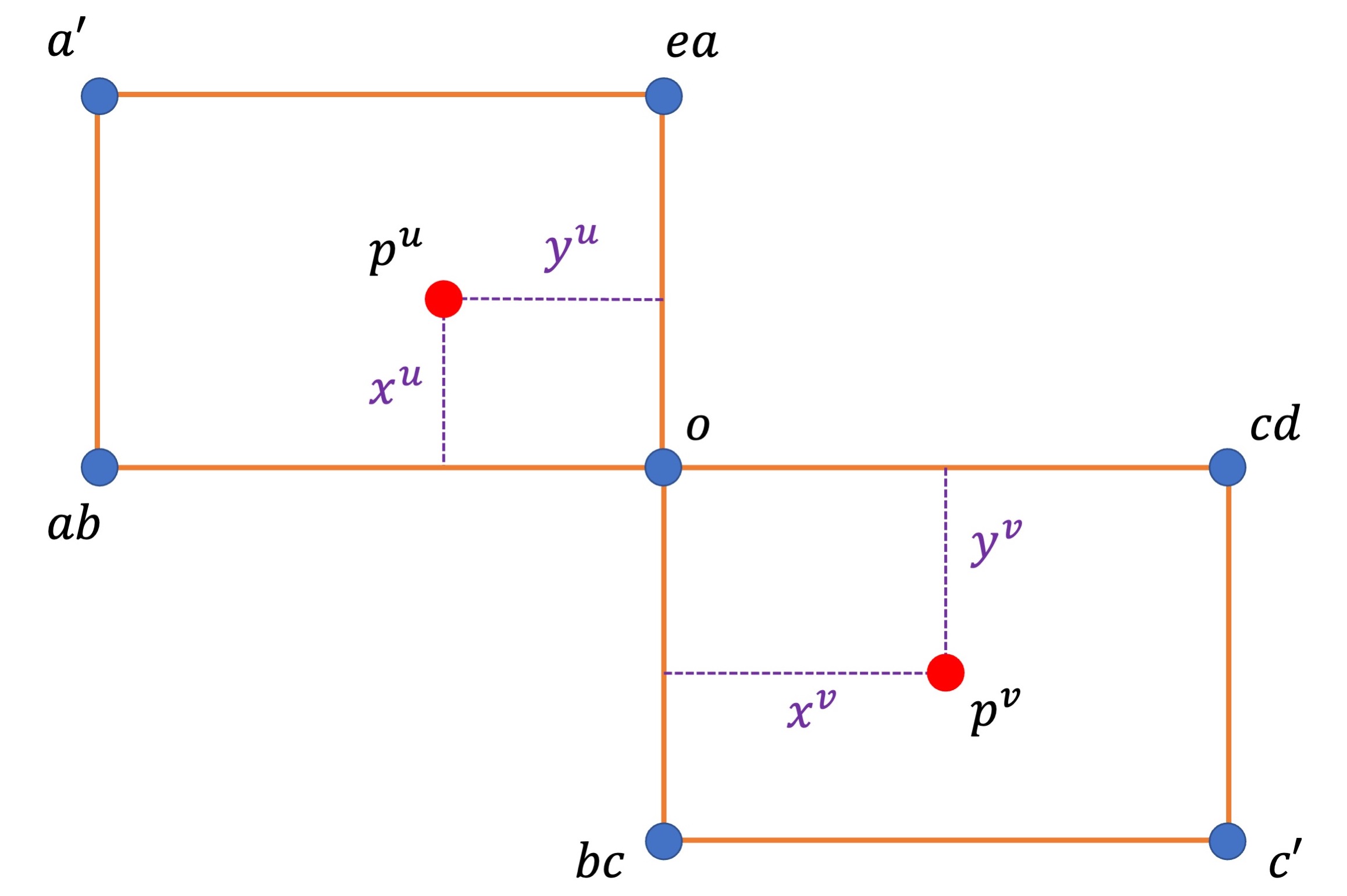}}
	\caption{An illustration of a points $p^u$ and $p^v$ lying in non-consecutive $2$-dimensional sets.\label{fig: TS_sep_3}}
\end{figure}
Then via similar arguments in Case 1, we can show that, 
%if we denote by $F_{t''}$ the set that contains $w$, then
%
%
\[
\mathbb{E}[(t,t')_{\TS}] \le \mathbb{E}[(t,o)_{\TS}+(o,t')_{\TS}] = (x^v+x^u)+(y^v+y^u)= (p^u,p^v)_{\TS}.
\]

The remaining case where $u$ lies in a $1$-dimensional set (say $L_a$) and $v$ lies in a $2$-dimensional set can be reduced to one of the above cases, depending on which rectangle contains $v$, as in this case $(u,v)_\TS=(u,a)_{\TS}+(a,v)_{\TS}$ always holds.
The other remaining case where $u,v$ belong to different $1$-dimensional sets can be proved in a similar way.

\section{Graphs with $6$ Terminals: Proof of \Cref{thm: lower 6}}

\label{sec: lower}

In this section, we provide the proof of \Cref{thm: lower 6}. We will first prove \Cref{thm: lower 6} for the problem $\zesn$ (that is, we will show that, for any integer $N$, there exists an instance $(G,T,\ell)$ of $\zesn$ with $|T|= 6$, such that any solution $(\fset,\delta)$ with $|\fset|\le N$ must satisfy that $\vol(\fset,\delta)\ge (1+\Omega(1))\cdot \sum_{(u,v)\in E}c(u,v)\cdot\dist_{\ell}(u,v)$), and then generalize it to problem $\zesn_{\textsf{ave}}$ in \Cref{sec: ave}.

Let $N$ be any integer. We will first construct a hard instance $(G,T,\ell)$ in \Cref{subsec: instance} with $|T|=6$, and then show in \Cref{subsec: onto plane,subsec: basic,subsec: plane} that any solution $(\fset,\delta)$ to this instance with $|\fset|\le N$ must satisfy that $\vol(\fset,\delta)\ge (1+\Omega(1))\cdot \sum_{(u,v)\in E}c(u,v)\cdot\dist_{\ell}(u,v)$.

%prove that for any $N$, there exists a $\zesn$ instance with $6$ terminals such that for any solution of size at most $N$, the stretch is larger than $1+\Omega(1)$.

\subsection{The hard instance}
\label{subsec: instance}

%We first define the 
Recall that in an instance $(G,T,\ell)$ of $\zesn$, $G$ is a graph and $\ell$ is its edge weight function. Denote by $D$ the shortest-path distance metric on $T$ induced by $\dist_{\ell}(\cdot,\cdot)$.
We will first define the metric $(D,T)$, and then define $G$ (and $\ell$) based on it.

For convenience, we denote $T=\set{a,b,c,d,e,f}$. The metric $D$ on $T$ is given by table~\ref{fig: table}.
Before we define $G$, we first describe the tight span $\TS(D)$ (see \Cref{fig: TS_6_1} for an illustration). 
%In this case, we have $v_a+v_b \ge D(a,b) = 2$ and $v_a+v_b = D(a,d)-v_d+D(b,e)-v_e \ge 6-D(d,e) = 4$. So $v_a+v_b \in [2,4]$. We also have $v_b-v_a =v_b - 3 + v_d \ge 0$. Finally, we have $v_c - v_b + 1 \le 2 - (v_b-v_a)$
It consists of a 2-dimensional set, which is a rectangle with endpoints $b,c,d,f$, and a $3$-dimensional set, which is a triangular prism with the top-face $\Delta_{\text{acg}}$ and the bottom face $\Delta_{\text{efh}}$.
For all points $v$ in the prism, the tight constraints are
$v_a+v_d=D(a,d)=3$, $v_b+v_e=D(b,e)=3$, and $v_c+v_f=D(c,f)=3$.
We provide in \Cref{apd: structure of TS} a detailed explanation on why the tight span $\TS(D)$ is in this shape.

\begin{figure}[h]
	\centering
	\subfigure%[The tight span $\TS(D)$.]
	{\scalebox{0.1}{\includegraphics{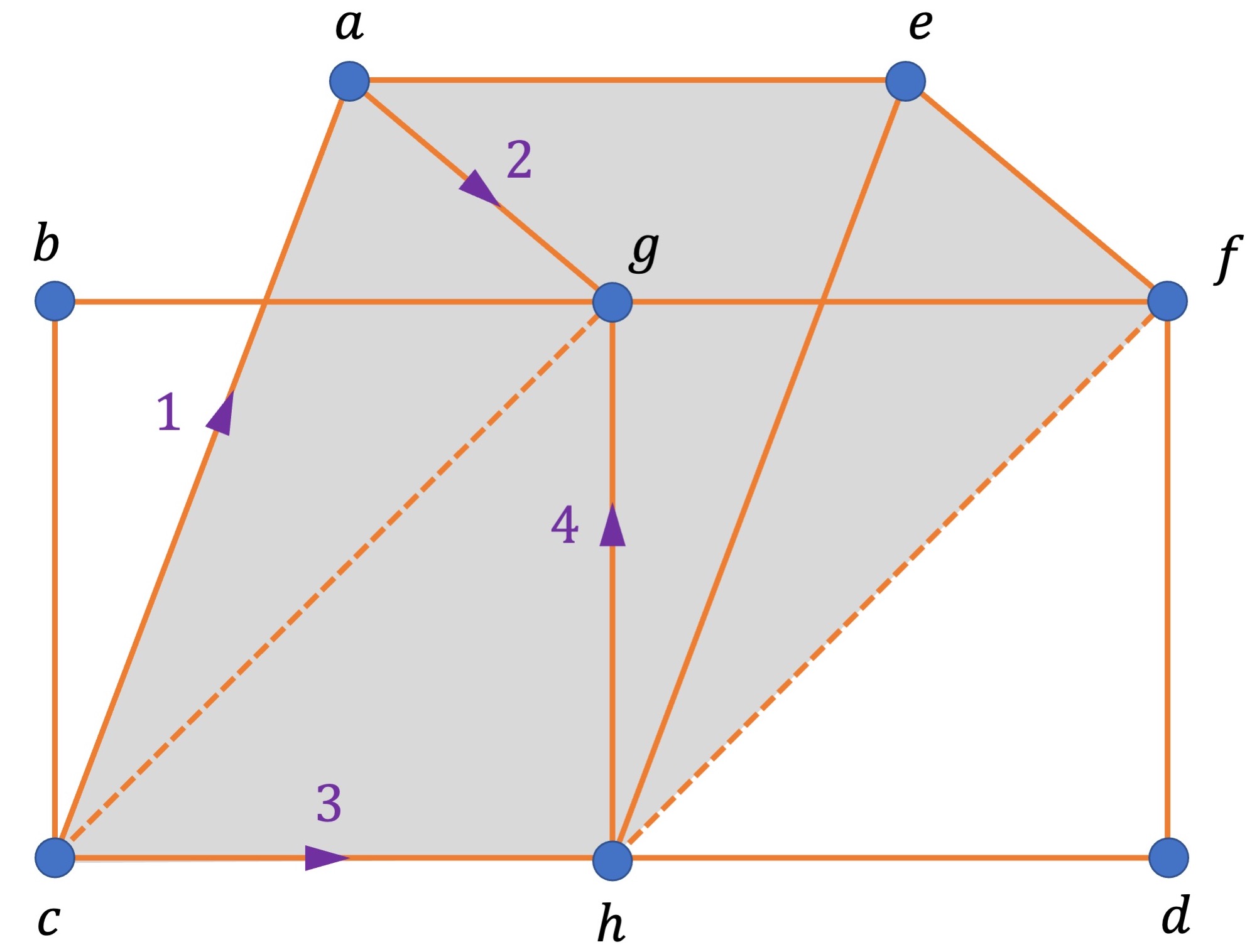}}\label{fig: TS_6_1}}
	\hspace{0.7cm}
	\subfigure%[The metric $D$.]
	{
		\scalebox{0.27}{\includegraphics{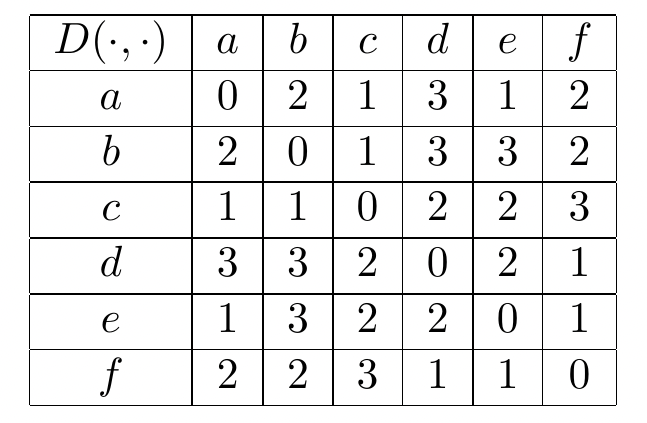}\label{fig: table}}}
	\caption{Metric $D$ (right) and its tight span $\TS(D)$ (left). $D$ can be viewed as the shortest-path distance on the graph induced by all solid lines (with length $1$ each). The dashed lines are of length $2$. The ``prism'' part is shown in gray. The four critical directions are marked by $1$-$4$.\label{fig: TS_6}}
\end{figure}

%Let $g$ and $h$ be two points whose distances from 6 termianls are $(1,1,2,2,2,1)$ and $(2,2,1,1,1,2)$. The tight span of these terminals is illustrated as \Cref{fig:ts}. The tight span is consistant of a rectangle and a 3 dimensional triangular prism. 

%For all points in the 

\paragraph{The structure of $\TS(D)$.}

We now discuss how to determine the distance (in $(\cdot,\cdot)_{\TS}$) between points in  $\TS(D)$. There are four directions in $\TS(D)$ that we call \emph{critical directions}: $\vec{ca}$ (direction 1), $\vec{ag}$ (direction 2), $\vec{ch}$ (direction 3), and $\vec{hg}$ (direction 4). See \Cref{fig: TS_6} for an illustration. Under $(\cdot,\cdot)_{\TS}$, the distances between pairs $(c,h),(c,a),(a,g),(h,g)$ are all $1$. 
Let $v,v'$ be a pair in $\TS(D)$, the distance between $v,v'$ under $(\cdot,\cdot)_{\TS}$ is in fact the \emph{shortest distance one has to travel only in the critical directions to go from $v$ to $v'$}.
For example, consider the pair $(a,d)$. It can either goes
\begin{itemize}
\item $a\xrightarrow{\vec{ch},1} e \xrightarrow{\vec{ag},1} f \xrightarrow{\vec{hg},-1} d$, with the total travel distance $3$; or
\item $a\xrightarrow{\vec{ca},-1} c \xrightarrow{\vec{ch},2} d$, with the total travel distance $3$; or
\item $a\xrightarrow{\vec{ag},1/2} \textsf{Mid}(a,g) \xrightarrow{\vec{ca},-1/2} \textsf{Mid}(c,g)
\xrightarrow{\vec{ch},1} \textsf{Mid}(h,f)
\xrightarrow{\vec{hg},-1/2} \textsf{Mid}(h,g) 
\xrightarrow{\vec{ch},1/2} d$, \\
where $\textsf{Mid}(a,g)$ is the midpoint between $a$ and $g$, and similar for others,
with the total travel distance $3$ (see \Cref{fig: TS6_direction} for an illustration).
\end{itemize}
All these ways of travelling are shortest from $a$ to $d$.

%Define 4 unit vectors $w_1 = (0,0,1)$, $w_2 = (0,2,-1)$, $w_3=(1,0,0)$ and $w_4=(-1,2,0)$. It is easy to verify that for any point $v$, any $1 \le j \le 4$, and any number $\alpha$, if $v$ and $v+\alpha \cdot w_j$ are both in the tight span, we have $\ell(v,v+\alpha \cdot w_j) = \alpha$. Moreover, for any two point $u$ and $v$ in the tight span, $\ell(u,v)$ is the minimum $|\alpha_1|+|\alpha_2|+|\alpha_3|+|\alpha_4|$ such that $v = u + \alpha_1 \cdot w_1 + \alpha_2 \cdot w_2 + \alpha_3 \cdot w_3 + \alpha_4 \cdot w_4$.

\begin{figure}[h]
	\centering
	\scalebox{0.1}{\includegraphics{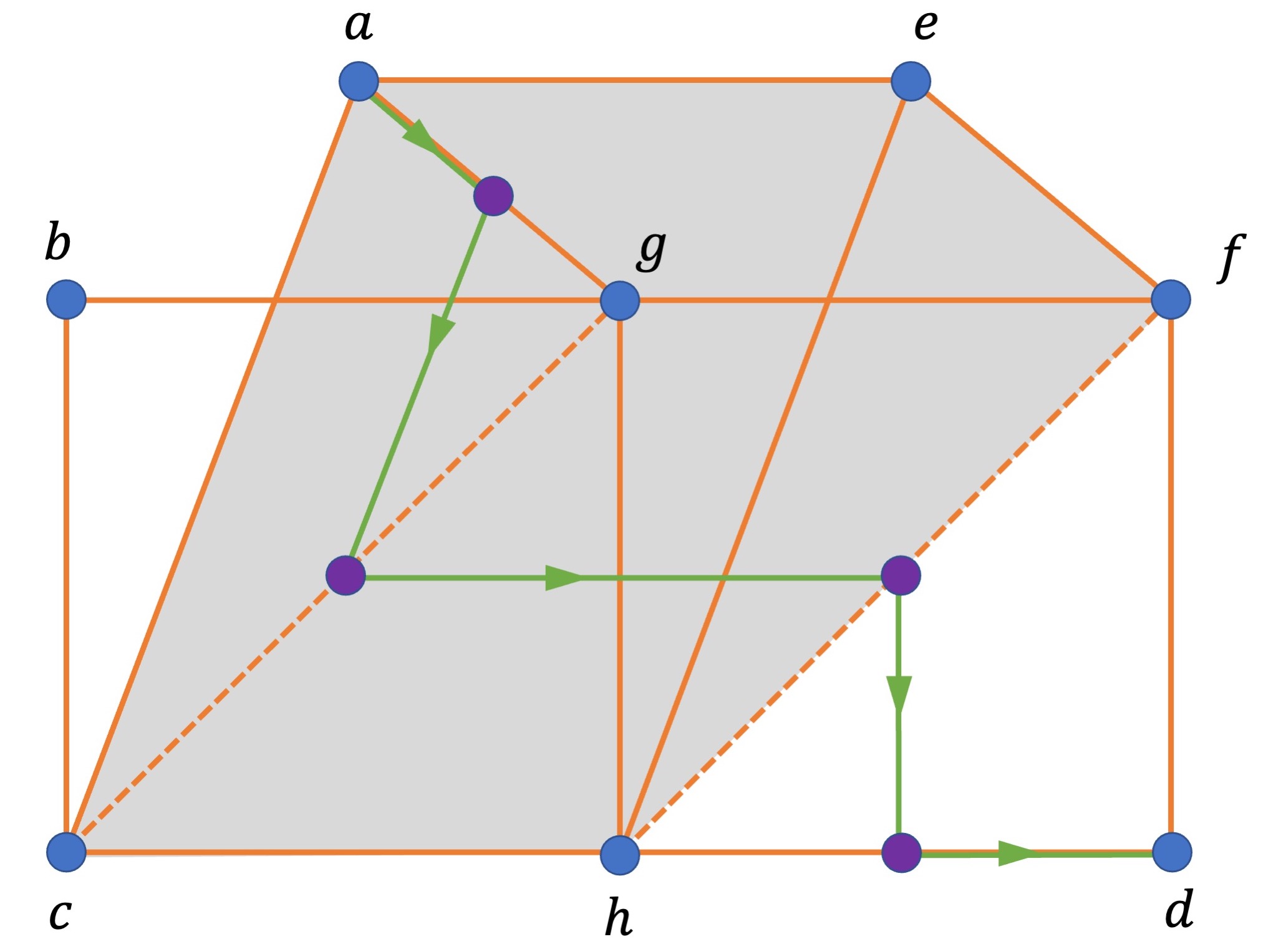}}
	\caption{A way of travelling from $a$ to $d$ only in critical directions.\label{fig: TS6_direction}}
\end{figure}

\paragraph{Associated vectors.}
In order to talk about points in $\TS(D)$ in a more convenient way, for each point $v\in \TS(D)$, instead of using the $6$-dimensional vector $(v_a,v_b,v_c,v_d,v_e,v_f)$, we will represent it in some other way tailored to the structure of $\TS(D)$.
Specifically, we form a (non-orthogonal) coordinate system as follows. Let $\vec{ch}$ be the $x$-direction, let $\vec{cg}$ be the $y$-direction, let $\vec{ca}$ be the $z$-direction, and let $c$ be the reference point, so every point is uniquely represented as a $3$-dimensional vector $(v[x],v[y],v[z])$ in this system, which we call its \emph{associated vector}.
For example, the associated vector for terminal $a$ is $(0,0,1)$, for $h$ is $(1,0,0)$ and for $g$ is $(0,2,0)$.
More generally, 
\begin{itemize}
\item if $v$ is in the rectangle, then $v[x] = v_b-1$, $v[x]+v[y]=v_c$ and $v[z]=0$;
\item if $v$ is in the prism, then $v[x]+1-v[z] = v_a$, $v[x]+v[z] = v_b$ and $v[x]+v[y]+v[z] = v_c$. 
\end{itemize}
%We can view the vector $(v[x],v[y],v[z])$ as the coordinate of $v$ when we set coordinator as the edge $cd$, $cg$ and $ca$. 
%By definition, when $v[z]>0$, we have $0 \le v[x] \le 1$ and $0 \le v[y]+2v[z] \le 2$.
%The following observation shows the distance between any points and the terminates.
Solving and combining them, we obtain the following observation.
\begin{observation} \label{obs:terminal}
For every $v\in \TS(D)$, 
\[v_a = \card{v[x]}+1-v[z], \quad \quad v_b=v[x]+1+v[z], \quad \quad v_c=v[x]+v[y]+v[z],\]
\[v_d=2-v[x]+v[z], \quad \quad v_e=\card{v[x]-1}+1-v[z],\quad \quad v_f=3-v[x]-v[y]-v[z].\]
\end{observation}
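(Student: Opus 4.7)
The plan is to verify the six identities by a case split according to whether $v$ lies in the prism or in the rectangle, in each case combining the coordinate definitions given just above the observation with the tight-span constraints of $\TS(D)$.

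First I would treat the prism case. The prism spans $v[x] \in [0,1]$ between its two triangular faces $\Delta_{acg}$ (on which $v[x] = 0$) and $\Delta_{efh}$ (on which $v[x] = 1$), so throughout the prism $|v[x]| = v[x]$ and $|v[x]-1| = 1 - v[x]$. With these simplifications, the identities for $v_a$, $v_b$, $v_c$ are precisely the three coordinate definitions for the prism stated just above the observation. For $v_d$, $v_e$, $v_f$ I would invoke the three prism tight constraints $v_a + v_d = v_b + v_e = v_c + v_f = 3$, writing for instance $v_d = 3 - v_a = 3 - (v[x] + 1 - v[z]) = 2 - v[x] + v[z]$, and doing the analogous computation for $v_e$ and $v_f$. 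This disposes of the prism case by pure arithmetic.

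Next I would treat the rectangle case, where $v[z] = 0$. The identities for $v_b$ and $v_c$ are immediate from the rectangle's coordinate definitions. Solving for the four corners of the rectangle in the associated-vector coordinates gives $b = (-1, 2, 0)$, $c = (0, 0, 0)$, $d = (2, 0, 0)$, $f = (1, 2, 0)$, so $v[x] \in [-1, 2]$ and $v[y] \in [0, 2]$ across the rectangle. Because the rectangle is glued to the prism along the edge $c$--$f$, on which all three prism tight constraints hold, the constraint $v_c + v_f = 3$ propagates by continuity through the rectangle and yields $v_f = 3 - v[x] - v[y]$. The remaining formulas for $v_a, v_d, v_e$ require me to subdivide the rectangle along the vertical lines $v[x] = 0$ and $v[x] = 1$, identify the active tight constraints in each subregion using the structure of $(\cdot,\cdot)_{\TS}$ in the four critical directions, and check that the resulting piecewise expressions coincide with the uniform closed forms $|v[x]|+1$, $2-v[x]$, and $|v[x]-1|+1$.

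The main obstacle is the rectangle case, specifically the bookkeeping needed to show that the absolute values $|v[x]|$ and $|v[x]-1|$ exactly capture the switch in active tight constraints as $v$ crosses the lines $v[x] = 0$ and $v[x] = 1$. Once that subdivision is carried out and the appropriate tight constraints are matched up, all six identities reduce to direct arithmetic, using the three prism tight constraints, $v[z] = 0$ in the rectangle, and the continuity of the tight-span coordinates across the common edge $c$--$f$.
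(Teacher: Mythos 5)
Your prism-case argument matches the paper's intended proof: read off $v_a, v_b, v_c$ from the prism coordinate definitions (note that the paper's inline formula ``$v[x]+v[z]=v_b$'' has a typo and should read $v[x]+1+v[z]=v_b$, which is what the observation records and what the point $a=(0,0,1)$ with $a_b=D(a,b)=2$ forces), then obtain $v_d, v_e, v_f$ from the prism tight constraints $v_a+v_d=v_b+v_e=v_c+v_f=3$, with $|v[x]|=v[x]$ and $|v[x]-1|=1-v[x]$ because $v[x]\in[0,1]$ on the prism. That part is fine and is exactly what ``solving and combining'' means.

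The rectangle case has a genuine reasoning gap. You justify $v_c+v_f=3$ ``by continuity'' from the shared edge $c$--$f$; that is not valid reasoning, since a constraint holding on a boundary segment of a region need not propagate to the interior (indeed $v_a+v_d=3$ also holds on the $c$--$f$ edge yet fails in the rest of the rectangle). The correct justification, given in \Cref{apd: structure of TS}, is that $D(c,t)+D(t,f)=D(c,f)$ for \emph{every} terminal $t$, so by \Cref{obs:tight} the equality $v_c+v_f=3$ is forced for every $v\in\TS(D)$. You also never invoke the rectangle's second tight constraint $v_b+v_d=D(b,d)=3$; once you have it, $v_d=3-v_b=2-v[x]$ drops out with no case split (the formula for $v_d$ has no absolute value, so your plan to subdivide for $v_d$ is unnecessary). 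For $v_a$ and $v_e$ the clean route is again the tight-span definition rather than an ad hoc inspection of critical directions: in the rectangle $v_a=\max_{t'}\{D(a,t')-v_{t'}\}=\max\{2-v_b,\,3-v_d\}=\max\{1-v[x],\,1+v[x]\}=1+|v[x]|$, and similarly $v_e=\max\{2-v_d,\,3-v_b\}=1+|v[x]-1|$; this is what produces the absolute values, with no subdivision at $v[x]=0$ or $v[x]=1$ required. Your proposal would probably reach the same conclusion, but as stated it is a placeholder (``subdivide and check'') resting on an invalid continuity step and missing the $b$--$d$ tight constraint.
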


\iffalse
For the four critical directions, their associated vectors are:
\[\vec{ca} = (0,0,1), \vec{ag} = (0,2,-1),\vec{ch}=(1,0,0),\vec{hg}=(-1,2,0).\]
\fi

\begin{observation}
\label{obs: dist in TS}
If the associated vector of $v$ is $(x,y,z)$ and the associated vector of $v'$ is $(x',y',z')$, then 
    \begin{itemize}
        \item $(v,v')_{\TS} \ge |x-x'|+|z-z'|$;
        \item if $z=z'=0$, then $(v,v')_{\TS} = \frac{1}{2}\cdot\big(\card{(2x+y)-(2x'+y')} + \card{y-y'}\big)$.
    \end{itemize}
\end{observation}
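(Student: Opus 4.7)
The plan is to exploit \Cref{obs:terminal} directly. Writing $\alpha=x-x'$, $\beta=z-z'$, $\gamma=y-y'$, the formulas in \Cref{obs:terminal} give the six coordinate-wise differences as explicit linear (or piecewise linear, in the case of $v_a,v_e$) functions of $\alpha,\beta,\gamma$, and then $(v,v')_{\TS}=\|v-v'\|_\infty$ reduces to maximizing $|v_t-v'_t|$ over $t\in T$. The whole observation is then a one-line algebraic identity applied to the right pair of coordinates.

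For the first inequality I will look at the $b$ and $d$ coordinates only. From \Cref{obs:terminal}, $v_b-v'_b=\alpha+\beta$ and $v_d-v'_d=-\alpha+\beta$, so
\[
(v,v')_{\TS}\;\ge\;\max\bigl(|v_b-v'_b|,|v_d-v'_d|\bigr)\;=\;\max\bigl(|\alpha+\beta|,|\alpha-\beta|\bigr)\;=\;|\alpha|+|\beta|,
\]
which is exactly $|x-x'|+|z-z'|$. The last equality is the standard identity $\max(|p+q|,|p-q|)=|p|+|q|$.

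For the second part, setting $z=z'=0$ and applying \Cref{obs:terminal}, the six differences become
\[
|v_b-v'_b|=|v_d-v'_d|=|\alpha|,\qquad |v_c-v'_c|=|v_f-v'_f|=|\alpha+\gamma|,
\]
while $|v_a-v'_a|=\bigl||x|-|x'|\bigr|\le|\alpha|$ and $|v_e-v'_e|=\bigl||x-1|-|x'-1|\bigr|\le|\alpha|$ by the ordinary triangle inequality. Consequently
\[
(v,v')_{\TS}\;=\;\max\bigl(|\alpha|,|\alpha+\gamma|\bigr).
\]
Setting $p=\alpha$ and $q=\alpha+\gamma$, the same identity invoked above yields $\max(|p|,|q|)=\tfrac12\bigl(|p+q|+|p-q|\bigr)$, i.e.\ $\tfrac12\bigl(|2\alpha+\gamma|+|\gamma|\bigr)$, which is precisely $\tfrac12\bigl(|(2x+y)-(2x'+y')|+|y-y'|\bigr)$.

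The proof is essentially a direct computation, and the only non-trivial step is recognizing that the $a$ and $e$ coordinates never dominate (because the absolute-value map is $1$-Lipschitz) so they can be dropped when taking the maximum. There is no real obstacle; the main thing to double-check is that \Cref{obs:terminal} is applicable to both points simultaneously, which in part 2 is guaranteed by the assumption $z=z'=0$ placing both points in (the closure of) the rectangular face $bcdf$, and in part 1 by the observation holding for all of $\TS(D)$.
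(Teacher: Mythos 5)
Your proposal is correct, and the treatment of the second bullet is essentially the same calculation as the paper's (compute coordinate differences via \Cref{obs:terminal}, observe that the $a,e$ coordinates are dominated, and apply the $\max(|p|,|q|)=\tfrac12(|p+q|+|p-q|)$ identity). The first bullet, however, takes a genuinely different and cleaner route than the paper. The paper argues geometrically: it invokes the informal claim that $(\cdot,\cdot)_{\TS}$ is the length of a shortest path that travels only along the four critical directions, notes that directions $1,2$ change only the $z$-coordinate while directions $3,4$ change only the $x$-coordinate, and concludes that one must travel at least $|x-x'|$ in the latter and $|z-z'|$ in the former. Your argument instead goes straight to the definition $(v,v')_{\TS}=\|v-v'\|_\infty$, reads off $v_b-v'_b=\alpha+\beta$ and $v_d-v'_d=-\alpha+\beta$ from \Cref{obs:terminal}, and uses $\max(|\alpha+\beta|,|\alpha-\beta|)=|\alpha|+|\beta|$. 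This is more self-contained (it avoids relying on the ``travel only in critical directions'' description, which the paper presents as intuition rather than a proved lemma) and makes the lower bound an immediate consequence of the $\ell_\infty$ definition. Both are fine; yours is the more rigorous of the two as written.
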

\begin{proof}
    Remember that the shortest path between $v$ and $v'$ has to travel only in the critical directions. If we travel in direction $1$ or $2$, then the $x$ coordinate  will not change, and the change of $z$ coordinate is exactly the distance it travelled. If we travel in direction $3$ or $4$, the $z$ coordinate of a point will not change, and the change of $x$ coordinate is exactly the distance it travelled. Thus for any shortest path between $v$ and $v'$, we need to travel at least $|x-x'|$ on direction $1$ and $2$, and at least $|z-z'|$ on direction $3$ or $4$. Therefore, $(v,v')_{\TS} \ge |x-x'|+|z-z'|$.

    %Now we prove the second statement. 
    Note that $z=z'=0$, so $v$ and $v'$ both lie in the rectangle. On the one hand, from \Cref{obs:terminal}, $\card{v_a-v'_a}$ and $\card{v_e-v'_e}$ are both at most $\card{x-x'}$, which equals $\card{v_b-v'_b}$ and $\card{v_d-v'_d}$. On the other hand, $\card{v_c-v'_c} = \card{v_f-v'_f} = \card{(x+y)-(x'+y')}$. Therefore, $$(v,v')_{\TS} = \max\set{\card{x-x'},\card{(x+y)-(x'+y')}} = \frac{1}{2}\cdot\bigg(\card{(2x+y)-(2x'+y')} + \card{y-y'}\bigg).$$
\end{proof}

We are now ready to define the graph $G$ in the hard instance.

\paragraph{Constructing graph $G$.}
Let $L = 10^{3N}$. The vertex set of $G$ contains all points  $v\in \TS(D)$ whose associated vector $v = (v[x],v[y],v[z])$ satisfies that: $v[x]$ and $v[z]$ are integral multiples of $1/L$, and $v[y]$ is an integral multiple of $2/L$. 
We now define a collection of types of paths, and graph $G$ is simply the union of them.
Each path starts from some terminal (which we call its \emph{source}) and ends at another terminal (which we call its \emph{sink}), and consists of three parts: 
\begin{itemize}
\item an initial segment, which contains one edge connecting its source to some vice-source;
\item a main segment, which is a path connecting its vice-source to its vice-sink, such that all edges travel in the same direction (called the direction of the path), which is one of the four critical directions;
\item an ending segment, which contains one edge connecting its vice-sink to the sink.
\end{itemize}

\begin{table}[h]
\centering
	\begin{tabular}{| c | c | c | c | c | c | c | c |}
		\hline
		name $[i,j]$  & source  & vice-source ($\times \frac{1}{L}$) & direction & vice-sink ($\times \frac{1}{L}$) & sink &  capacity & length \\
		\hline
		$\mathsf{ad1}$ & $d$ & $(i,2j,0)$ & 1 & $(i,2j,L-j)$ & $a$ & 2 & 3 \\
		\hline
		$\mathsf{be1}$ & $b$ & $(i,2j,0)$ & 1 & $(i,2j,L-j)$ & $e$ & 2 & 3 \\
		\hline
		$\mathsf{ad2}$ & $a$ & $(i,0,j)$ & 2 & $(i,2j,0)$ & $d$ & 2 & 3\\
		\hline
		$\mathsf{be2}$ & $e$ &  $(i,0,j)$ & 2 & $(i,2j,0)$ & $b$ & 2 &3 \\
		\hline
		$\mathsf{ad3}$ & $a$ & $(0,2i,j)$ & 3 & $(L,2i,j)$ & $d$ & 1 & 3\\
		\hline
		$\mathsf{be3}$ & $b$ & $(0,2i,j)$ & 3 & $(L,2i,j)$ & $e$ & 1 & 3\\
		\hline
		$\mathsf{cf3}$ & $c$ & $(0,2i,j)$ & 3 & $(L,2i,j)$ & $f$ & 2 & 3\\
		\hline
		$\mathsf{ab}$ & $a$ & $(0,2i,j)$ & N/A & $(0,2i,j)$ & $b$ & 1 & 2\\
		\hline
		$\mathsf{de}$ & $e$ & $(0,2i,j)$ & N/A & $(0,2i,j)$ & $d$ & 1 & 2\\
		\hline
	\end{tabular}
\caption{Paths in direction $1,2,3$.}\label{table: path1}
\end{table}

The first group of paths are shown in \Cref{table: path1}.
As an example, the first row describes a collection that contains, for each pair $0 \le i,j \le L$, the path $(d, v_0, \dots, v_{L-j}, a)$ where $v_{s} = (i/L,2j/L, s/L)$ for any $0 \le s \le L-j$, and such path is named  $\mathsf{ad1}[i,j]$. 
Its length is $3$, and the weight of each of its edge is $2$.
See \Cref{fig: TS6_adbe1} for an illustration.
\Cref{table: path1} contains all paths in directions $1,2,3$.

\begin{figure}[h]
	\centering
	\scalebox{0.1}{\includegraphics{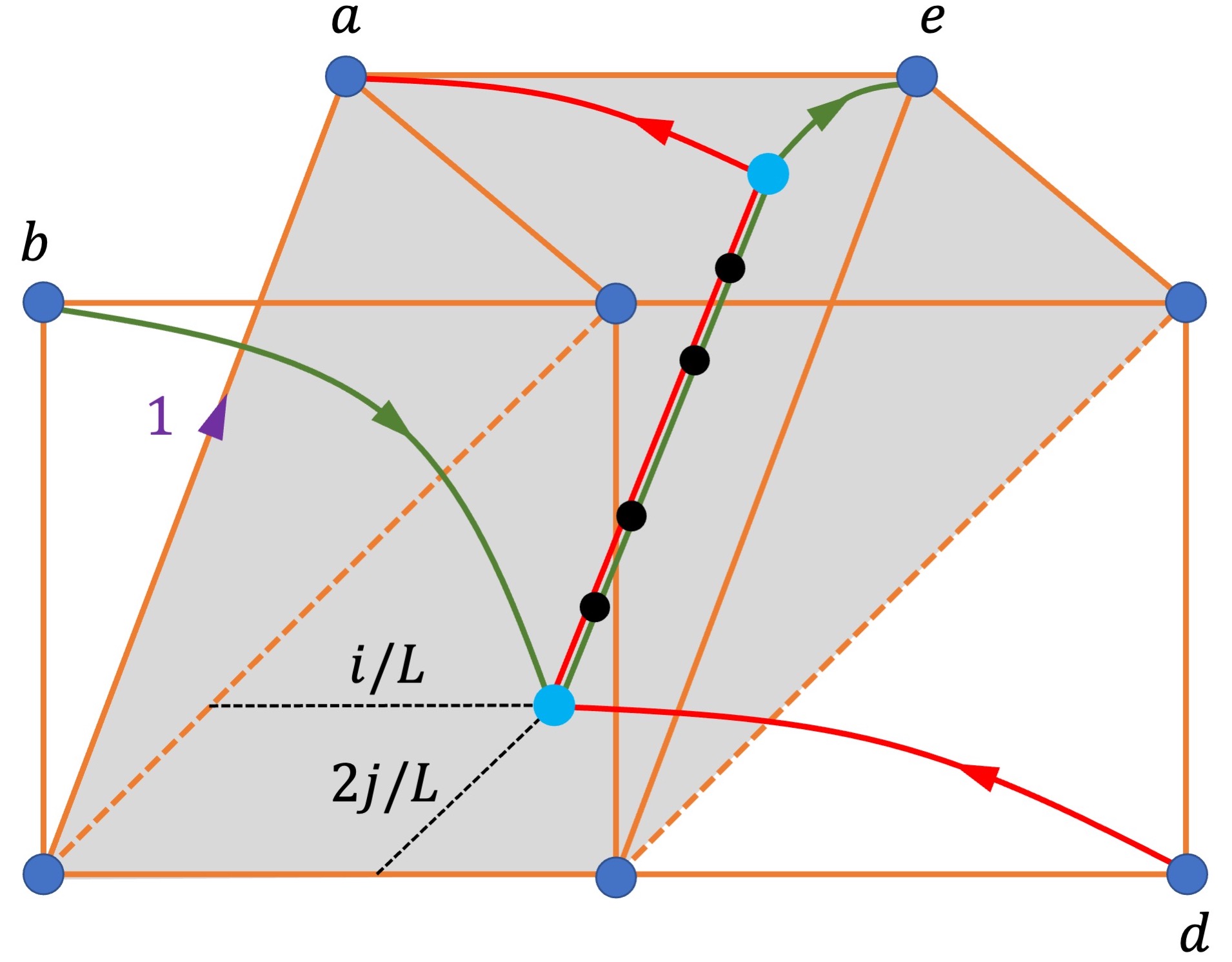}}
	\caption{Path $\mathsf{ad1}[i,j]$ (red), path $\mathsf{be1}[i,j]$ (green), and their shared vice-sink and vice-source (light blue).\label{fig: TS6_adbe1}}
\end{figure}

\begin{table}
	\begin{tabular}{| c | c | l | c | c | }
		\hline
		name  & src  & vice-source ($\times \frac{1}{L}$)  & vice-sink ($\times \frac{1}{L}$) & sink  \\
		\hline
		\multirow{3}{*}{$\mathsf{ad4}$} & $d$ & $(i,0,j)$ ($i+j\le L$) &  $(0,2i,j)$ & $a$ \\
		\cline{2-5}
		& $d$ & $(i,0,j)$ ($i+j\in [L,2L]$, $i,j\in [0,L]$) &  $(i+j-L,2(L-j),j)$ & $a$ \\
		\cline{2-5}
		& $d$ & $(L,2(i-L),j)$ ($i+j\in [L,2L]$, $j\in [0,L]$, $i\in [L,2L]$)  & $(i+j-L,2(L-j),j)$ & $a$ \\
		\hline
		\multirow{3}{*}{$\mathsf{be4}$} & $e$ & $(i,0,j)$ ($i+j\le L$) & $(0,2i,j)$ & $b$ \\
		\cline{2-5}
		& $e$ & $(i,0,j)$ ($i+j\in [L,2L]$, $i,j\in [0,L]$)  & $(i+j-L,2(L-j),j)$ & $b$ \\
		\cline{2-5}
		& $e$ & $(L,2(i-L),j)$ ($i+j\in [L,2L]$, $j\in [0,L]$, $i\in [L,2L]$)  & $(i+j-L,2(L-j),j)$ & $b$ \\
		\hline
		\multirow{3}{*}{$\mathsf{cf4}$} & $c$ & $(i,0,j)$ ($i+j\le L$)  & $(0,2i,j)$ & $f$ \\
		\cline{2-5}
		& $c$ & $(i,0,j)$ ($i+j\in [L,2L]$, $i,j\in [0,L]$) &  $(i+j-L,2(L-j),j)$ & $f$ \\
		\cline{2-5}
		& $c$ & $(L,2(i-L),j)$ ($i+j\in [L,2L]$, $j\in [0,L]$, $i\in [L,2L]$) &  $(i+j-L,2(L-j),j)$ & $f$ \\
		\hline
	\end{tabular}\caption{Paths in direction $4$.}\label{table: path2}
\end{table}

The next group of paths, all in direction $4$ with weight $1$ and length $3$, are shown in \Cref{table: path2}.
For example, the first row describes a type of paths that contains, for each pair $i,j$ with $0 \le i+j \le L$, a path $(d, v_0, \dots, v_i, a)$ where $v_{s} = ((i-s)/L,2s/L,j/L)$ for any $0 \le s \le L$, and such path is named  $\mathsf{ad4}[i,j]$.  See \Cref{fig: TS6_4} for an illustration.
%\Cref{table: path1} contains all paths in directions $1,2,3$.

\begin{figure}[h]
	\centering
	\scalebox{0.1}{\includegraphics{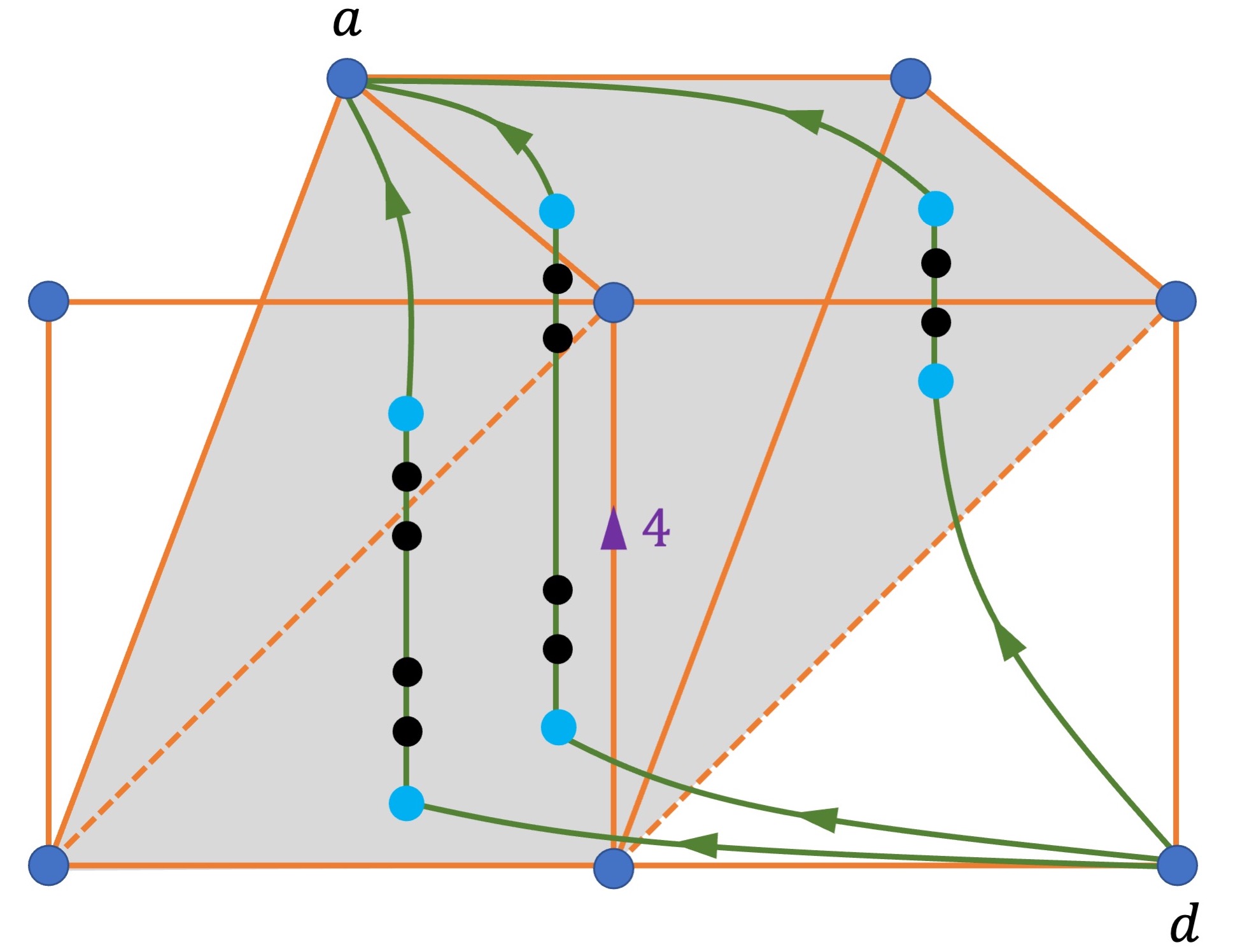}}
	\caption{One path of each kind of type $ad4$.\label{fig: TS6_4}}
\end{figure}

%\znote{draw agh and egh to distinguish between the three stages of $4$-paths}

This completes the construction of graph $G$. From now on we will analyze this instance $(G,T,\ell)$ of $\zesn$ and prove \Cref{thm: lower 6} for it.
We use the following immediate property of graph $G$.

\begin{observation}
\label{obs: path net}
For every vertex $v$, for every critical direction, $v$ is incident to one (if it lies on the boundary) or two edges in this direction, and there is a path containing $v$ and its incident edges.
\end{observation}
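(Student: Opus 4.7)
The plan is a direct case analysis by critical direction. Since $G$ is by construction the union of the paths listed in Tables~\ref{table: path1}--\ref{table: path2}, every edge of $G$ inherits one of the four critical directions (or is part of an initial/ending segment adjoining a terminal). Therefore, for a given vertex $v$ and critical direction $d$, the edges incident to $v$ in direction $d$ come precisely from those path-families in the tables whose main segment travels in direction $d$. Writing $v$ in the associated coordinates $(x,y,z)$ (so $xL, zL \in \mathbb{Z}$ and $yL/2 \in \mathbb{Z}$, with $(x,y,z)$ in the prism-union-rectangle region described via Observation~\ref{obs:terminal}), the claim reduces to exhibiting, for each direction $d$, at least one such path that passes through $v$, and then verifying that $v$ has two incident direction-$d$ edges iff it is an interior vertex of that path's main segment, and one otherwise (the boundary case being exactly the vice-source or vice-sink).

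For direction~$1$ (the $z$-axis), I would point to $\mathsf{ad1}[xL,\, yL/2]$ and $\mathsf{be1}[xL,\, yL/2]$: their main segments both pass through $v$ whenever $zL \in [0,\, L - yL/2]$, which matches exactly the admissible range of $z$ in the prism/rectangle at fixed $(x,y)$. Direction~$2$ is covered symmetrically by $\mathsf{ad2}$ and $\mathsf{be2}$, and direction~$3$ by $\mathsf{ad3}$, $\mathsf{be3}$, $\mathsf{cf3}$, with parameters read off from the coordinates of $v$. In each case the two ``colors'' (e.g.\ $\mathsf{ad}$ vs.\ $\mathsf{be}$) use the same main segment but differ in their initial/ending segments, so they contribute the same direction-$d$ edges at $v$; hence the number of incident direction-$d$ edges is precisely $2$ if $v$ is in the interior of that segment and $1$ if $v$ is an endpoint.

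Direction~$4$ is the most involved: Table~\ref{table: path2} partitions the direction-$4$ paths into three sub-cases distinguished by whether the vice-source lies on the face $v[y]=0$ or on the hypotenuse $v[x]=1$, and by whether the vice-sink lies on $v[x]=0$ or on the hypotenuse. I would do a short case analysis on $(i+j)\lessgtr L$ to show that the three sub-cases partition the lattice $(x,y,z)$-points in the prism with respect to direction $4$, so $v$ sits on exactly one $\mathsf{ad4}/\mathsf{be4}/\mathsf{cf4}$ main segment, and is interior iff it is neither the vice-source nor the vice-sink. The main obstacle is purely bookkeeping: checking that the stated parameter ranges in Table~\ref{table: path2} cover every lattice point of $\TS(D)$ exactly once with respect to direction~$4$; this is immediate from the explicit description of the prism once one notes that a direction-$4$ line segment in the prism enters through either the face $\{y=0\}$ or the hypotenuse $\{x=1\}$ and exits through either $\{x=0\}$ or the hypotenuse, matching the three rows of the table.
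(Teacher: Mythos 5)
Your plan — enumerate the path families by critical direction, read off which path passes through a given lattice point $v$ from its associated coordinates, and check interior/boundary by whether $v$ is an endpoint of that path's main segment — is exactly the direct verification the paper is relying on when it calls the observation ``immediate,'' and the bookkeeping you sketch for directions $1,2,3$ and for the three rows of Table~\ref{table: path2} is the right one.

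Two small inaccuracies in the direction-$4$ discussion are worth fixing. First, you repeatedly call the face $\set{v[x]=1}$ ``the hypotenuse''; it is not. The prism's axis is the $x$-direction, so $\set{v[x]=1}$ is the triangular end face $\Delta_{efh}$, while the hypotenuse is the slanted lateral face $\set{v[y]/2 + v[z]=1}$ (the edge $a$--$g$ of the cross-sectional right triangle, swept along $x$). A direction-$4$ segment enters through $\set{y=0}$ or $\set{x=1}$ and exits through $\set{x=0}$ or $\set{y/2+z=1}$, which is what the three rows of Table~\ref{table: path2} encode. Second, two entry faces times two exit faces gives four combinations, not three; you should remark that the fourth combination (enter $\set{x=1}$, exit $\set{x=0}$) forces $y_0+2 \le 2(1-z)$ and hence $y_0 = z = 0$, i.e.\ it is the single corner segment from $h$ to $g$, which already appears as a boundary instance of the other rows. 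Finally, your claim that the $\mathsf{ad}$- and $\mathsf{be}$-colored paths ``contribute the same direction-$d$ edges'' glosses over the fact that each family introduces its own (parallel) copy of the main-segment edges; the observation itself is phrased loosely about this, so it does not break your argument, but the cleaner statement is that $v$ has at most two lattice neighbors in a given critical direction and, for each family traveling in that direction, there is exactly one path of that family whose main segment contains $v$ and its direction-$d$ lattice neighbors.
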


\subsection{Analysis step 1. fine-grained analysis of $\vol(\fset,\delta)-\opt$}
\label{subsec: basic}

We first show that $\opt  :=\sum_{(u,v)\in E(G)}c(u,v)\cdot \dist_{\ell}(u,v)=O(L^2)$. 
Denote by $\pset$ the collection of all paths in $G$ defined above. 
%Let $\delta^*$ be the optimal solution to (LP-Metric).
For each path $P\in \pset$ connecting a pair $t_P,t'_P$ of terminals in $T$, by triangle inequality, its contributed cost is $\sum_{(u,v)\in E(P)}c(u,v)\cdot\dist_{\ell}(u,v)= c(u,v)\cdot\dist_{\ell}(t_P,t'_P)=c(u,v)\cdot D(t_P,t'_P)$.
Therefore,
\[
%\begin{split}
%
\opt  =\sum_{(u,v)\in E(G)}c(u,v)\cdot\dist_{\ell}(u,v)= \sum_{P\in \pset}c(u,v)\cdot D(t_P,t'_P),
%\\& =2\times 3\times 5+1\times 3\times 2+1\times 2\times 2 + 3\times 9=67.\end{split}
\]
%
%On the other hand, if we set $\tilde \delta$ as the metric in $\TS(D)$, (that is, for every pair $u,v$ of vertices in $G$, we set $\delta(u,v)=(u,v)_{\TS}$), then it is easy to verify that, for each path $P\in \pset$, $\vol_{\tilde\delta}(P)=D(t_P,t'_P)$ holds. Altogether, $\opt = \sum_{P\in \pset}\vol(P)\ge \sum_{P\in \pset}D(t_P,t'_P)$.
which implies that $\opt\le 90L^2$, as (i) there are $L^2$ pairs $(i,j)$; and (ii) for each pair $(i,j)$ there are at most $15$ paths with capacity at most $2$ and length at most $3$ each. 
Therefore, in order to prove \Cref{thm: lower 6} for the $\zesn$ instance, it suffices to show that, for any solution $(\fset,\delta)$ with $|\fset|\le N$, $\vol(\fset,\delta)-\opt\ge 10^{-15}\cdot L^2$.

%\znote{here}

%Each path in the first 10 groups has total length 3 and each path in the last two groups has total length 2, thus the total size of the graph is at most $44(L+1)^2$. Since the endpoints of each path are terminals and the path length is the same as the distance between the terminals, the paths length in the solution is at least the length as before. Given a solution $f$ for the $\zesn$ instance we defined, for any path $P$, define $\ell(P)$ as the length of $P$ in the solution minus the the length of $P$ in the original graph. To prove the lower bound, it is sufficient to prove that $\sum_P(\ell(P)) = \Omega(L^2)$.
%Without lose of generality, we can assume all steiner nodes are in the tight span of the terminals, since otherwise we can project them into the tight span without increase the cost. Thus all steiner nodes can also represented by a 3-dimensional vector as the same way as the points. 

First, from \Cref{lem:proj_dis}, we can assume that the solution $(\fset,\delta)$ satisfies that
\begin{itemize}
\item each set $F\in \fset$ corresponds to a point $f_F$ in $\TS(D)$; and
\item for each pair $F,F'\in \fset$, $\delta(F,F')=(f_{F},f_{F'})_{\TS}$.
\end{itemize}
For each vertex $v\in V(G)$, we denote by $f(v)$ the point in $\TS(D)$ that corresponds to the set in $\fset$ that contains $v$. For each path $P$, we $\vol(P)=\sum_{(u,v)\in E(P)}(f(u),f(v))_{\TS}$.

%\znote{going over all constants: 1 or 2}

Note that $\vol(\fset,\delta)-\opt=\sum_{P\in \pset}\vol(P)-D(t_P,t'_P)$. We define the \emph{loss of path $P$} as $\ell(P)=\vol(P)-D(t_P,t'_P)$, and will aim to give a lower bound the sum of loss of all paths.
As $f(t)=t$ for all terminals, from triangle inequality, the loss of every path is non-negative. Moreover, for a path $P=(t,v_1,\ldots,v_k,t')$, we can further decompose $\ell(P)$ as (denoting $t=v_0$ and $t'=v_{k+1}$)
\[
\begin{split}
\ell(P)=\vol(P)-D(t,t') & =\bigg(\sum_{0\le j\le k}(f(v_j),f(v_{j+1}))_\TS\bigg)-D(t,t')\\
& =\sum_{0\le j\le k}\bigg((f(v_j),f(v_{j+1}))_\TS+(t,f(v_j))_\TS-(t,f(v_{j+1}))_\TS\bigg),
\end{split}
\]
where for each $j$, by triangle inequality, $(f(v_j),f(v_{j+1}))_\TS+(t,f(v_j))_\TS-(t,f(v_{j+1}))_\TS\ge 0$.

From \Cref{obs: path net}, for each vertex $v\in V(G)$ and each of the $4$ critical directions, there is some path in $\pset$ going through $v$ in this direction, creating a term in the above inequality. In order to best utilize the structure of $\TS(D)$, our plan is to estimate the $4$ terms related to each vertex together.
We first define vectors corresponding an edge in each of the four critical directions:
\[w_1 = (0,0,1/L), w_2 = (0,2/L,-1/L), w_3=(1/L,0,0), w_4=(-1/L,2/L,0).\]
For every $v\in V(G)$, every terminal $t$, and each $1 \le i \le 4$, we define 
\[\ell_i(v,t) = (f(v),f(v+w_i))_{\TS} + (f(v),t)_{\TS} - (f(v+w_i),t)_{\TS}.\] 
and
\[\ell'_i(v,t) = (f(v),f(v+w_i))_{\TS} - (f(v),t)_{\TS} + (f(v+w_i),t)_{\TS}.\] 
%and $\ell'_i(v,t) = \delta(f(v),f(v+w_i) - \delta(f(v),t) + \delta(f(v+w_i),t)$. 
%For convenience, for any $v$ and $1 \le j \le 4$ such that $v+w_j$ is not in the tight span, we define $\ell_j(v) = 0$. The following lemma shows that we can decompose the costs of the paths into vertex level. 
We use the following claim, that decompose the loss of paths into the sum of $\ell_i(v,t)$ terms.

\begin{lemma} \label{lem:cost} The following inequalities hold.
    \begin{itemize}
        \item $\sum_{i,j} \bigg(\ell(\mathsf{da1}[i,j])+\ell(\mathsf{be1}[i,j])\bigg) \ge \sum_v \bigg( \ell_1(v,d) +  \ell_1(v,b)\bigg)$;
        \item $\sum_{i,j} \bigg(\ell(\mathsf{ad2}[i,j])+\ell(\mathsf{eb2}[i,j])\bigg) \ge \sum_v \bigg(\ell'_2(v,d) +  \ell'_2(v,b)\bigg)$;
        \item $\sum_{i,j} \bigg(\ell(\mathsf{ad3}[i,j])+\ell(\mathsf{be3}[i,j])+2\ell(\mathsf{cf3}[i,j])\bigg) \ge \sum_v \bigg(\ell'_3(v,d)+\ell_3(v,b)+2\ell_3(v,c)\bigg)$; and
        \item $\sum_{i,j} \bigg(\ell(\mathsf{da4}[i,j])+\ell(\mathsf{eb4}[i,j])+2\ell(\mathsf{cf4}[i,j])\bigg) \ge \sum_v \bigg(\ell_4(v,d)+\ell'_4(v,b)+2\ell_4(v,c)\bigg)$.
    \end{itemize}
\end{lemma}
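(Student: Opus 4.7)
The plan is to prove each of the four inequalities individually by a per-path decomposition of $\ell(P)$, followed by summing over all $(i,j)$.

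For a path $P=(u_0,u_1,\dots,u_m)$ with source $s=u_0$, sink $t=u_m$, and main-segment direction $w_i$, choose a ``reference terminal'' $r\in\{s,t\}$. Since $(r,s)_{\TS}=0$ when $r=s$ and $(r,t)_{\TS}=0$ when $r=t$, a telescoping identity gives
\[
\ell(P)\;=\;\sum_{j=0}^{m-1}\Big[(f(u_j),f(u_{j+1}))_{\TS}\;+\;\sigma\big((r,f(u_j))_{\TS}-(r,f(u_{j+1}))_{\TS}\big)\Big],
\]
with $\sigma=+1$ if $r=s$ and $\sigma=-1$ if $r=t$, and each summand is nonnegative by the triangle inequality on $(\TS(D),(\cdot,\cdot)_{\TS})$. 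If $r=s$, the initial edge $(s,u_1)$ contributes exactly $0$ (because $f(s)=s$) and each main-segment edge $(v,v+w_i)$ contributes exactly $\ell_i(v,s)$; dually, if $r=t$, the ending edge $(u_{m-1},t)$ contributes $0$ and each main-segment edge contributes $\ell'_i(v,t)$. Dropping the single remaining boundary-edge contribution (which is still nonnegative) gives
\[
\ell(P)\;\ge\;\sum_{v\in\mathrm{main}(P)}\ell_i(v,s)\qquad\text{or}\qquad \ell(P)\;\ge\;\sum_{v\in\mathrm{main}(P)}\ell'_i(v,t),
\]
according to the choice of $r$.

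For each of the four inequalities I match the reference to the summands on the right-hand side: $r$ is the source whenever the right-hand side uses $\ell_i$ and the sink whenever it uses $\ell'_i$. A direct check of the source/sink columns in Tables~\ref{table: path1} and~\ref{table: path2} confirms consistency for all nine path families; for example, in the third inequality I take the sink $d$ for $\mathsf{ad3}$, the source $b$ for $\mathsf{be3}$, and the source $c$ for $\mathsf{cf3}$, exactly matching $\ell'_3(v,d)+\ell_3(v,b)+2\ell_3(v,c)$ on the right. The factor $2$ on $\mathsf{cf3}$ and $\mathsf{cf4}$ is merely a common scalar applied to both sides.

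The only remaining step, and the main bookkeeping obstacle, is the coverage claim: as $(i,j)$ ranges over the tabulated parameter set, the main segments of paths of type $\mathsf{ad}i$ (respectively $\mathsf{be}i$, $\mathsf{cf}i$) enumerate every direction-$w_i$ edge of $G$ exactly once. For directions $1$, $2$, $3$ this is immediate because the vice-source/vice-sink coordinates decouple cleanly along the three axes, so each vertex with a direction-$w_i$ edge sits on exactly one main segment of each type. For direction $4$ the three subcases in Table~\ref{table: path2} were engineered precisely so that the main segments tile the triangular cross-sections of the prism: they handle, respectively, the $w_4$-trajectories that begin on the $y=0$ face, those that begin in the interior and end on the slanted face, and those that begin on the $x=L$ face. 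Once coverage is verified, summing the per-path lower bound over $(i,j)$ yields exactly the right-hand side of each of the four inequalities.
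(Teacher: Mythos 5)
Your argument is correct and follows the same route the paper takes: decompose $\ell(P)$ telescopically with respect to one endpoint of the path, observe that each summand is nonnegative by the triangle inequality so the main-segment contributions lower-bound $\ell(P)$, identify each main-segment edge's contribution with $\ell_i(v,\cdot)$ or $\ell'_i(v,\cdot)$, and then use the coverage of vertices by main segments (Observation~\ref{obs: path net}) to sum over $(i,j)$. The paper only writes out the first inequality and says the other three are ``similar''; you make explicit exactly what ``similar'' means, namely that the reference terminal must be the source for families paired with $\ell_i$ on the right-hand side and the sink for those paired with $\ell'_i$, and you check this correspondence family by family against the tables. That bookkeeping (and the observation that the boundary edge adjacent to the reference contributes exactly $0$ while the other boundary edge contributes a nonnegative amount that may be dropped) is precisely the content the paper glosses over. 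One minor overstatement: you claim coverage of each direction-$w_i$ edge \emph{exactly} once, but since every $\ell_i(v,t)$ and $\ell'_i(v,t)$ term is nonnegative, coverage \emph{at least} once (which is what Observation~\ref{obs: path net} actually gives) already suffices, so exactness is not needed.
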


\begin{proof}
We only prove the first inequality, and the other three inequalities can be proved similarly. Consider the path $\mathsf{ad1}[i,j] = (d,v_0,\dots,v_{L-j},a)$. Note that 
\[\ell(\mathsf{ad1}[i,j]) \ge \sum_{s=0}^{L-j-1} \bigg((f(v_{s}),f(v_{s+1}))_\TS + (d,f(v_{s}))_\TS-(d,f(v_{s+1}))_\TS \bigg)\ge  \sum_{s=0}^{L-j-1} \ell_1(v_{s},d). \]
%where the last inequality is due to triangle inequality and the fact that $v_{s+1} = v_{s} + w_1$ for any $0 \le s \le L-j-1$. 
As every vertex $v\in V(G)$ appears in some path $\mathsf{da1}[i,j]$, $\sum_{i,j} \ell(\mathsf{da1}[i,j]) \ge \sum_v \ell_1(v,d)$ (by \Cref{obs: path net}). Similarly, $\sum_{i,j} \ell(\mathsf{be1}[i,j]) \ge \sum_v \ell_1(v,b)$, thus the first inequality is true.
\end{proof}

%We prove that the cost of some paths can be lower bounded by the difference of the $x$ value of the points.

\begin{claim} \label{clm:x}
    For every vertex $v\in V(G)$, 
    \begin{itemize}
        \item $\ell_1(v,d)+\ell_1(v,b) \ge 2\card{f(v)[x]-f(v+w_1)[x]}$; 
        \item $\ell'_2(v,d)+\ell'_2(v,b) \ge 2\card{f(v)[x]-f(v+w_2)[x]}$;
        \item $(a,f(v))_{\TS}+(b,f(v))_\TS \ge 2 + 2\big(f(v)[x]\big)^+$; 
        \item $(d,f(v))_\TS+(e,f(v))_\TS \ge 2 + 2\big(1-f(v)[x]\big)^+$. 
    \end{itemize}
\end{claim}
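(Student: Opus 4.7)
The plan is to prove each of the four inequalities by a direct computation in the associated-vector coordinate system, combining \Cref{obs:terminal} (which, via the identity $(v,t)_\TS = v_t$ for $v\in\TS(D)$ and $t$ a terminal, expresses each $(f(v),t)_\TS$ as an explicit piecewise-linear function of the associated coordinates) with the lower bound $(p,q)_\TS \ge |p[x]-q[x]| + |p[z]-q[z]|$ from \Cref{obs: dist in TS}. Throughout, write $p = f(v)$ and $q = f(v+w_i)$, both in $\TS(D)$, and denote their associated vectors by $(p_x,p_y,p_z)$ and $(q_x,q_y,q_z)$.

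The key identity driving parts~1 and~2 is
\[
v_b + v_d \;=\; (v[x]+1+v[z]) + (2-v[x]+v[z]) \;=\; 3 + 2v[z],
\]
in which the $v[x]$-dependence cancels. Substituting into the definition of $\ell_1$ gives
\[
\ell_1(v,d)+\ell_1(v,b) \;=\; 2(p,q)_\TS + (p_b+p_d) - (q_b+q_d) \;=\; 2(p,q)_\TS + 2(p_z-q_z),
\]
and the claimed bound follows from $(p,q)_\TS \ge |p_x-q_x| + |p_z-q_z|$ together with $|p_z-q_z| + (p_z-q_z) \ge 0$. Part~2 is mirror-symmetric: the primed functional $\ell'_2$ flips the sign of the terminal-distance contribution, yielding $2(p,q)_\TS + 2(q_z-p_z)$, and the same inequality closes the argument.

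Parts~3 and~4 are pure algebraic evaluations with equality. For part~3,
\[
(a,f(v))_\TS + (b,f(v))_\TS \;=\; p_a + p_b \;=\; (|p_x|+1-p_z)+(p_x+1+p_z) \;=\; 2 + (|p_x|+p_x),
\]
and casing on the sign of $p_x$ gives $|p_x|+p_x = 2(p_x)^+$. For part~4,
\[
(d,f(v))_\TS + (e,f(v))_\TS \;=\; p_d + p_e \;=\; (2-p_x+p_z) + (|p_x-1|+1-p_z) \;=\; 3 - p_x + |p_x-1|,
\]
and casing on whether $p_x\ge 1$ or $p_x<1$ yields $3-p_x+|p_x-1| = 2 + 2(1-p_x)^+$. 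No substantive obstacle arises here; the only insight needed is that the pairs $\{b,d\}$ in parts~1--2 and $\{a,b\}$, $\{d,e\}$ in parts~3--4 are chosen so that the $v[x]$-dependence either cancels (isolating $v[z]$, to be paired with the $|p_z-q_z|$ slack in \Cref{obs: dist in TS}) or collapses into the absolute-value expressions $|v[x]|$ and $|v[x]-1|$ that produce the positive-part terms $(p_x)^+$ and $(1-p_x)^+$ required by the claim.
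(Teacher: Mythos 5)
Your proof is correct and takes essentially the same approach as the paper: both expand $\ell_1(v,d)+\ell_1(v,b)$ using the cancellation $v_b+v_d=3+2v[z]$, apply the lower bound $(p,q)_\TS\ge|p[x]-q[x]|+|p[z]-q[z]|$ from \Cref{obs: dist in TS}, and observe that $|p[z]-q[z]|+(p[z]-q[z])\ge 0$; parts 3 and 4 are the same direct evaluation from \Cref{obs:terminal} showing the inequalities hold with equality.
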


\begin{proof}
Let $(x,y,z)$ be the associated vector of $f(v)$ and let $(x_1,y_1,z_1)$ be that of $f(v+w_1)$. 
From \Cref{obs: dist in TS}, $(f(v),f(v+w_1))_\TS \ge \card{x-x_1}+\card{z-z_1}$. From \Cref{obs:terminal}, $(f(v),b)_\TS+(f(v),d)_\TS=3+2z$ and $(f(v+w_1),b)_\TS+(f(v+w_1),d)_\TS=3+2z_1$. Therefore,
\[
  \begin{split}
        \ell_1(v,d)+\ell_1(v,b) & = \bigg((f(v),f(v+w_1))_\TS+(d,f(v))_\TS-(d,f(v+w_1))_\TS\bigg)\\
      & \quad\quad  +
\bigg((f(v),f(v+w_1))_\TS+(b,f(v))_\TS-(b,f(v+w_1))_\TS\bigg) \\ 
        & \ge 2\card{x-x_1} + 2\card{z-z_1} + 3 + 2z - 3 - 2z_1 \ge 2\card{x-x_1}.
  \end{split}
\]
The second inequality can be proved similarly.
For the third inequality, by \Cref{obs:terminal}, we get that $(a,f(v))_\TS+(b,f(v))_\TS = 2 + x + \card{x} = 2 + 2\max\{x,0\}$, and the last one can be proved similarly.
%\znote{``similarly here should contains $|\cdot|$ if we start from $a,e$ instead of $d,b$'', same for Clm 17 later}
\end{proof}

\subsection{Analysis step 2. reduction to a geometric problem in $\mathbb{R}^2$}
\label{subsec: onto plane}

%Next, we show that we reduce the problem into 2-dimensional. 
For each $v\in \TS(D)$, let $p(v)$ be the projection of $f(v)$ into the tight span of $D'$, which is the metric on $b,c,d,f$ induced by $D$, which is exactly the $2$-dimensional set (rectangle) $\overline{bcdf}$. %In other words, $p(v)$ is a point whose $z$ value is $0$ and there is a non-negative $\alpha$ such that $p(v) = f(v) + \alpha(w_2-w_1)$. 
%It is not hard to verify that $p(v)-f(v)$ is in the same direction as $w_2-w_1$.

\begin{observation}
\label{obs: proj_ down}
Let $(x,y,z)$ be the associated vector of $v$. Then $p(v) = f(v) + (z/2)\cdot(w_2-w_1)$, and so the associated vectors of $f(v)$ and $p(v)$ have the same $x$ coordinates.
\end{observation}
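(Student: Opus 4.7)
Proof proposal:

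The proof is essentially a direct computation exploiting the geometric structure of $\TS(D)$. First, I would interpret $w_1$ and $w_2$ as the unit direction vectors for critical directions $\vec{ca}$ and $\vec{ag}$ in the tight-span metric, so in associated-vector coordinates $w_1 = (0, 0, 1)$ and $w_2 = (0, 2, -1)$ (since $(c,a)_\TS = (a,g)_\TS = 1$). Then $w_2 - w_1 = (0, 2, -2)$, and hence $(z/2)(w_2 - w_1) = (0, z, -z)$. Adding this to $f(v) = (x, y, z)$ yields the candidate point $(x, y+z, 0)$, which has zero $z$-coordinate and therefore lies in the rectangle $\overline{bcdf} = \TS(D')$.

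Next I would verify that this candidate is indeed $p(v)$. One clean approach is to apply the projection algorithm of Section \ref{subsec: proj} to the $4$-dim restriction $(v_b, v_c, v_d, v_f) = (x+1+z,\; x+y+z,\; 2-x+z,\; 3-x-y-z)$ of $f(v)$. The prism's tight constraint $v_c + v_f = 3$ immediately deactivates the coordinates $c$ and $f$ in the first iteration. In subsequent iterations, the active pair $v_b, v_d$ decreases until new tight constraints for $D'$ emerge; a direct calculation, tracking which of $\Delta_{bd}$, $\Delta_{bf}$, $\Delta_{cd}$ is minimal, shows that the final $4$-dim output re-embeds into the rectangle at associated vector $(x, y+z, 0)$.

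The second assertion, that $f(v)$ and $p(v)$ share their $x$-coordinate, is then immediate: the added vector $(0, z, -z)$ has zero first component, so the $x$-coord of the associated vector is preserved under this translation.

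The main obstacle will be the case analysis in the algorithmic verification: the projection algorithm behaves differently depending on the relative magnitudes of $\Delta_{bd}$, $\Delta_{bf}$, $\Delta_{cd}$, etc., which in turn depend on the position of $f(v)$ in the prism. A cleaner way to sidestep this would be to observe that $(x, y+z, 0)$ is the unique point of $\TS(D')$ obtained from $f(v)$ by translating along the canonical direction $w_2 - w_1$ from the prism to the rectangle, and then verify directly using the tight-span defining equalities that it satisfies the tight-span conditions for $D'$, thereby identifying it with the projection without invoking the algorithm case-by-case.
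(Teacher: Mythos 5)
Your main line --- running the projection algorithm on the $4$-tuple $(v_b,v_c,v_d,v_f)$ and tracking which inequalities go tight --- is the same route the paper takes, but the ``direct calculation'' you defer is exactly where the content lies, and the case split you worry about never actually arises. For a point in the prism the associated vector satisfies $x,y,z\ge 0$, $x\le 1$, and $y+2z\le 2$, and these constraints force the minimum to be attained by the $(b,d)$ pair unconditionally: $\Delta_{b,d}=\tfrac12(v_b+v_d-3)=z$, while $\Delta_{b,c}=v_b+v_c-1=2x+y+2z\ge z$, $\Delta_{b,f}=v_b+v_f-2=2-y\ge z$ (since $y+z\le 2$), and symmetrically $\Delta_{d,c}=y+2z\ge z$, $\Delta_{d,f}=4-2x-y\ge z$. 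Hence $\Delta_b=\Delta_d=z$, so $b$ and $d$ deactivate together in the second iteration, $v_b$ and $v_d$ each drop by $z$ while $v_c,v_f$ are untouched, and the reconstructed associated vector is $(x,y+z,0)$. Without spelling this out the argument is incomplete.

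Your proposed ``cleaner sidestep'' is not sound. Verifying that $(x,y+z,0)$ satisfies the tight-span defining equalities for $D'$ only shows it lies \emph{somewhere} in $\TS(D')$, which is an entire rectangle; that does not single out $\proj(f(v))$. The map $\proj$ is defined by a specific iterative procedure, and --- as the remark following \Cref{lem:proj_dis} explicitly warns --- it is \emph{not} the nearest-point map onto the tight span, nor is it otherwise characterized by its image. So one cannot avoid tracing the algorithm; the computation sketched above is in fact the shortest route. (A minor aside: the paper's $w_1,w_2$ carry a factor of $1/L$, so the displayed identity $p(v)=f(v)+(z/2)(w_2-w_1)$ is off by a factor of $L$ from $f(v)+z\cdot(0,1,-1)$; this is a harmless inconsistency in the paper, and the substantive conclusion you need --- equality of the $x$-coordinates --- is unaffected.)
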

\begin{proof}
Since $v_c+v_f = 3$ holds for all $v\in \TS(D)$, %the terminals not tight are $b$ and $d$, so 
when we project $v$ onto the rectangle, we only decrease $v_b$ and $v_d$. By \Cref{obs:terminal}, $v_b+v_d = 3+2z$, $v_b+v_c=2x+y+1+z \ge 1+z$ and $v_b+v_f = 4-y \ge 2+z$. Therefore, $\Delta_b=z$. Similarly, $\Delta_d$ is also $z$. So after projection, both $v_b$ and $v_d$ are decreased by $z$ , both $v_a$ and $v_e$ are increased by $z$ (note that $v_a+v_d = v_b+v_e =3$), and both $v_c$ and $v_f$ do not change. Thus, $x = v_a+v_b-2$ does not change, $y=v_c-v_b+1$ is increased by $z$ and $z$ becomes $0$. Thus, $p(v)=f(v)+z \cdot (0,1,-1) = f(v) + (z/2) \cdot (w_2-w_1)$.
\end{proof}

Notice that $\TS(D')\subsetneq \TS(D)$, and for any $u,v\in \TS(D')$, $(u,v)_{\TS(D')}=(u,v)_{\TS(D)}$ holds, so we still denote $\TS=\TS(D)$.
From \Cref{lem:proj_dis}, for every pair $u,v\in TS(D)$, $(p(u),p(v))_{\TS} \le (f(u),f(v))_{\TS}$. 

%We prove some properties about the projections.

\begin{claim} \label{clm:proj_abc}
    For each point $v\in \TS(D)$, 
    \begin{itemize}
            \item $(c,f(v))_{\TS}=(c,p(v))_{\TS}$;
            \item $(b,f(v))_{\TS} - (d,f(v))_{\TS} = (b,p(v))_{\TS} - (d,p(v))_{\TS}$. 
    \end{itemize}
\end{claim}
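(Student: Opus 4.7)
The plan is to reduce both identities to direct calculations in coordinates, using two ingredients: the standard tight-span fact that the distance from a terminal to any point of $\TS(D)$ equals that point's corresponding coordinate, and the explicit description of the projection $f(v) \mapsto p(v)$ already provided by \Cref{obs: proj_ down}.

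First, I would record (or invoke) the identity $(t,u)_{\TS} = u_t$ for every terminal $t \in T$ and every $u \in \TS(D)$. Concretely, $t$ sits in $\TS(D)$ as the vector $(D(t,t'))_{t' \in T}$, so $\|u - t\|_{\infty} = \max_{t'} |D(t,t') - u_{t'}|$. The triangle inequality $u_t + u_{t'} \ge D(t,t')$ bounds each term by $u_t$, and the tightness of at least one such inequality (which is part of the definition of $u \in \TS(D)$) ensures the maximum is attained at $u_t$. The same identity applies inside $\TS(D')$, since $\TS(D')$ embeds isometrically as the rectangle face $\overline{bcdf}$ of $\TS(D)$, and the four terminals $b,c,d,f$ are represented consistently in both.

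With this in hand, the two parts follow directly from \Cref{obs: proj_ down}, which says that if $(x,y,z)$ is the associated vector of $f(v)$, the projection only modifies the coordinates indexed by $a, b, d, e$, namely $p(v)_b = f(v)_b - z$ and $p(v)_d = f(v)_d - z$, while $p(v)_c = f(v)_c$ and $p(v)_f = f(v)_f$. Thus $(c, f(v))_{\TS} = f(v)_c = p(v)_c = (c, p(v))_{\TS}$, giving the first identity; and for the second, $(b, f(v))_{\TS} - (d, f(v))_{\TS} = f(v)_b - f(v)_d = (p(v)_b + z) - (p(v)_d + z) = p(v)_b - p(v)_d = (b, p(v))_{\TS} - (d, p(v))_{\TS}$.

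There is no substantive obstacle: the proof is essentially bookkeeping of coordinates, since \Cref{obs: proj_ down} has already done the hard work of identifying exactly which coordinates change under projection. The only thing to watch is matching notations between the $6$-dimensional ambient representation of $\TS(D)$ and the $4$-dimensional representation of $\TS(D')$, but the relevant coordinates ($b, c, d$) appear in both, and the canonical inclusion $\TS(D') \hookrightarrow \TS(D)$ preserves the $\ell_{\infty}$ distance from these terminals.
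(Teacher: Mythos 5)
Your proof is correct and takes essentially the same route as the paper's: both reduce the claim to bookkeeping using (i) the fact that the $\ell_\infty$ distance from a terminal to a point of the tight span equals that point's corresponding coordinate, and (ii) \Cref{obs: proj_ down}, which records exactly which coordinates change under projection ($v_b,v_d$ drop by $z$, $v_c,v_f$ unchanged). The paper states the same computation in terms of the $3$-dimensional associated coordinates via \Cref{obs:terminal}, but this is only a notational difference.

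One small imprecision in your justification of $(t,u)_{\TS}=u_t$: the triangle inequality $u_t+u_{t'}\ge D(t,t')$ only gives $D(t,t')-u_{t'}\le u_t$; you also need $u_{t'}-D(t,t')\le u_t$ to control the absolute value. The latter follows from the tight-span identity $u_{t'}=\max_{s}(D(t',s)-u_s)$ together with $D(t',s)\le D(t',t)+D(t,s)$ and $D(t,s)\le u_s+u_t$, but it is not a direct consequence of the inequality you cite. Also note that the maximum is already attained at $t'=t$ (where the term equals $u_t$), so no appeal to the tightness condition is actually needed there. Neither point affects the correctness of the two bullet items.
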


\begin{proof}
    If $f(v)\in \TS(D')$, then $f(v)=p(v)$ and the inequalities clearly hold. Otherwise, $f(v)$ lies in the $3$-dimensional set (the triangular prism). We denote by $(x,y,z)$ the associated vector of $f(v)$. By \Cref{obs:terminal}, $(c,f(v))_{\TS} = x+y+z$. From \Cref{obs: proj_ down}, $p(v) = f(v) + (z/2)\cdot(w_2-w_1) = (x,y+z,0)$, so $(c,p(v))_\TS = x+y+z = (c,f(v))_\TS$. 
From \Cref{obs:terminal}, $(d,f(v))_\TS = 2-x+z$ and $(b,f(v))_\TS = 1+x+z$, so $(b,f(v))_\TS - (d,f(v))_\TS = 2x-1$, which equals $(b,p(v))_\TS-(d,p(v))_\TS$ as the associated vectors of $f(v)$ and $p(v)$ have the same $x$ coordinates (from \Cref{obs: proj_ down}).
\end{proof}

For each $v\in V(G)$, we define
    \begin{itemize}[leftmargin=*]
        \item $\ell_x(v) = \big|p(v)[y]-p(v+w_3)[y]\big| + 2\cdot\bigg((2p(v)[x]+p(v)[y])-(2p(v+w_3)[x]+p(v+w_3)[y])\bigg)^+$;
        \item $\ell_y(v) = 2 \big|p(v)[x]-p(v+w_3+w_4)[x]\big|$;
        \item $\ell^1_z(v) = \big|(2p(v)[x]+p(v)[y])-(2p(v+w_4)[x]+p(v+w_4)[y])\big|$; 
        \item $\ell^2_z(v) = \big|(2p(v)[x]-p(v)[y])-(2p(v + 2w_3 + w_4)[x]-p(v + 2w_3 + w_4)[y])\big|$.
    \end{itemize}

We prove the following two claims:

\begin{claim} \label{clm:Cx}
$\ell_x(v) \ge 2\big(p(v)[x]-p(v+w_3)[x]\big)^+$.
\end{claim}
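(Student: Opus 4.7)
The plan is to prove this by direct algebraic manipulation, using the shorthand $a = p(v)[x]$, $b = p(v)[y]$, $a' = p(v+w_3)[x]$, $b' = p(v+w_3)[y]$, so that the claim becomes
\[
|b-b'| + 2\bigl(2(a-a')+(b-b')\bigr)^+ \;\ge\; 2(a-a')^+.
\]
The case $a \le a'$ is trivial since the right-hand side is zero, so throughout I would assume $a > a'$ and set $\Delta = a - a' > 0$ and $\delta = b - b'$. The goal is then to show $|\delta| + 2(2\Delta + \delta)^+ \ge 2\Delta$.

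I would split according to the sign of $2\Delta + \delta$. If $2\Delta + \delta < 0$ (i.e.\ $\delta < -2\Delta$), the positive part vanishes and $|\delta| = -\delta > 2\Delta$, so the inequality holds. Otherwise $2\Delta + \delta \ge 0$, and the left-hand side equals $|\delta| + 4\Delta + 2\delta$. A final sub-split on the sign of $\delta$ finishes the job: if $\delta \ge 0$ the expression equals $4\Delta + 3\delta \ge 4\Delta \ge 2\Delta$; if $-2\Delta \le \delta < 0$ it equals $4\Delta + \delta \ge 4\Delta - 2\Delta = 2\Delta$.

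There is no real obstacle here — the inequality is purely algebraic and follows from a short case analysis on the signs of the two quantities $\delta$ and $2\Delta + \delta$. The only thing worth emphasising is why the definition of $\ell_x(v)$ takes this particular asymmetric form (one absolute value plus one positive part): geometrically, because $p(v)$ and $p(v+w_3)$ both lie in the two-dimensional rectangle $\overline{bcdf}$, by Observation~\ref{obs: dist in TS} the quantity $\tfrac{1}{2}\bigl(|(2a+b)-(2a'+b')| + |b-b'|\bigr)$ is exactly $(p(v),p(v+w_3))_{\TS}$, so $\ell_x(v)$ is essentially twice this distance but with the ``wrong direction'' contribution in the $(2x+y)$-coordinate clipped away. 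The claim then says precisely that this clipped distance still dominates the $x$-displacement whenever the projection has moved backward in the $x$ direction, which is the only case in which a misalignment needs to be charged.
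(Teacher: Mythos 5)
Your proof is correct and takes essentially the same approach as the paper: a direct case analysis on the signs of $a-a'$ and $2\Delta+\delta$. The paper's version is marginally shorter because it first drops the factor of $2$ on the positive part (using $2(\cdot)^+\ge(\cdot)^+$) and then observes $|\delta|+\delta\ge 0$ to collapse your two sub-cases under $2\Delta+\delta\ge 0$ into one line, but the underlying argument is the same.
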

\begin{proof}
   If $p(v)[x]-p(v+w_3)[x] \le 0$, then clearly the claim holds. 
Otherwise,
    \begin{align*}
        \ell_x(v) & \ge \card{p(v)[y]-p(v+w_3)[y]} + \bigg((2p(v)[x]+p(v)[y])-(2p(v+w_3)[x]+p(v+w_3)[y])\bigg)^+ \\
               & \ge 2(p(v)[x]-p(v+w_3)[x]).
    \end{align*}
\end{proof}

\begin{claim} \label{clm:transfer}
$\ell^2_z(v) \le \ell_x(v)+\ell_x(v+w_3+w_4)+\ell_y(v)+\ell_y(v+w_3)+\ell^1_z(v+w_3)$.
\end{claim}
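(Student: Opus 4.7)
The plan is to reduce the inequality to one clean algebraic identity and then apply the triangle inequality. For brevity, set $A = v$, $B = v+w_3$, $C = v+w_3+w_4$, $D = v+2w_3+w_4$, and for each $u\in\{A,B,C,D\}$ write $x_u=p(u)[x]$, $y_u=p(u)[y]$, and $z_u = 2x_u+y_u$. Because $p(u)$ lies in the tight span of the four-point sub-metric on $\{b,c,d,f\}$ (i.e.\ in the two-dimensional rectangle $\overline{bcdf}$), its associated third coordinate vanishes and therefore $2x_u-y_u = 4x_u-z_u$. Consequently $\ell^2_z(v)=\bigl|4(x_A-x_D)-(z_A-z_D)\bigr|$; in the same notation the remaining quantities satisfy $\ell_y(v)=2|x_A-x_C|$, $\ell_y(v+w_3)=2|x_B-x_D|$, $\ell^1_z(v+w_3)=|z_B-z_C|$, and (dropping the non-negative positive-part term in the definition of $\ell_x$) $|y_A-y_B|\le \ell_x(v)$ and $|y_C-y_D|\le \ell_x(v+w_3+w_4)$.

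The key step is the identity
\[
4(x_A-x_D)-(z_A-z_D)=2(x_A-x_C)+2(x_B-x_D)-(y_A-y_B)-(y_C-y_D)-(z_B-z_C).
\]
To obtain it, I would start from $4(x_A-x_D)=2(x_A-x_B)+2(x_B-x_D)+2(x_A-x_C)+2(x_C-x_D)$ (averaging the two $A\!\to\! D$ paths through $B$ and through $C$), then substitute $2(x_A-x_B)=(z_A-z_B)-(y_A-y_B)$ and $2(x_C-x_D)=(z_C-z_D)-(y_C-y_D)$ on the two $w_3$-steps, each coming directly from $z_u=2x_u+y_u$. Invoking the telescope $z_A-z_D=(z_A-z_B)+(z_B-z_C)+(z_C-z_D)$, the $(z_A-z_B)$ and $(z_C-z_D)$ contributions cancel upon subtracting $(z_A-z_D)$, leaving only $-(z_B-z_C)$, which proves the identity.

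Applying the triangle inequality to this identity and inserting the bounds recorded above yields
\[
\ell^2_z(v)\le 2|x_A-x_C|+2|x_B-x_D|+|y_A-y_B|+|y_C-y_D|+|z_B-z_C|,
\]
whose right-hand side is at most $\ell_y(v)+\ell_y(v+w_3)+\ell_x(v)+\ell_x(v+w_3+w_4)+\ell^1_z(v+w_3)$, which is the claim. The only real obstacle is spotting the identity: the one-sided positive-part term in $\ell_x$ prevents one from controlling $|z_A-z_B|$ or $|z_C-z_D|$ individually, so the decomposition must arrange for those two $z$-differences to appear only through their sum, which is exactly what the averaged decomposition of $x_A-x_D$ together with the telescope of $z_A-z_D$ achieves.
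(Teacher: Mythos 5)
Your proof is correct and follows essentially the same route as the paper: both drop the nonnegative positive-part terms in $\ell_x$, reduce everything to the coordinates $x_u,y_u,z_u=2x_u+y_u$ of the four projected points, and apply the triangle inequality to bound $\ell^2_z(v)=\bigl|(2x_A-y_A)-(2x_D-y_D)\bigr|$ by the five absolute differences. The paper simply asserts the final inequality in one line, whereas you make explicit the underlying algebraic identity (the averaged $A\to D$ decomposition plus the $z$-telescope) that justifies it.
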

\begin{proof}
Define associated vectors as $p(v)=(x,y,0)$, $p(v+w_3) = (x_1,y_1,0)$, $p(v+w_3+w_4) = (x_2,y_2,0)$ and $p(v+2w_3+w_4) = (x_3,y_3,0)$. By definition,
\[    \begin{split}
          &\ell_x(v)+\ell_x(v+w_3+w_4)+\ell_y(v)+\ell_y(v+w_3)+\ell^1_z(v+w_3)\\
        = &\card{y-y_1} + \card{y_2-y_3} + 2\card{x-x_2} + 2\card{x_1-x_3} + \card{(2x_1+y_1)-(2x_2+y_2)} \\
        \ge & \card{2x-y-2x_3+y_3} = \ell^2_z(v).
    \end{split}\]
\end{proof}

The following lemma lower-bounds $\vol(\fset,\delta)-\opt$ in terms of the loss functions we defined.

\begin{lemma} \label{lem:2d-cost}
 %The cost of solution is at least 

\[\vol(\fset,\delta)-\opt\ge\frac{2}{3}\cdot \sum_v (\ell_x(v)+\ell_y(v)+\ell^1_z(v)+\ell^2_z(v)) + \sum_{v:v[x]=0} 2\big(p(v)[x])^+ + \sum_{v:v[x]=1} 2\big(1-p(v)[x]\big)^+.\]
\end{lemma}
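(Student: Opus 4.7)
Decompose $\vol(\fset,\delta)-\opt=\sum_{P\in\pset}c_P\cdot\ell(P)$ into a sum over the nine types of paths in Tables 1 and 2. Apply \Cref{lem:cost} to the four direction-indexed groups (directions $1,2,3,4$) to reduce their total loss to a capacity-weighted sum of per-vertex combinations $\ell_i(v,\cdot),\ell'_i(v,\cdot)$. For the remaining two groups (the length-$2$ paths $\mathsf{ab}$ and $\mathsf{de}$, each passing through a single vice-source $v$ with $v[x]=0$), invoke the last two bullets of \Cref{clm:x}, which combined with $f(v)[x]=p(v)[x]$ from \Cref{obs: proj_ down} yield the two stated boundary terms.

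\textbf{Per-vertex reductions for directions $3$ and $4$.} The main identities to establish, for every interior $v\in V(G)$, are
\[
\ell'_3(v,d)+\ell_3(v,b)+2\ell_3(v,c)\;\ge\;2\,\ell_x(v),\qquad \ell_4(v,d)+\ell'_4(v,b)+2\ell_4(v,c)\;\ge\;2\,\ell^1_z(v).
\]
Both are proved by expanding the $\ell$'s, replacing each $(t,f(v))_\TS,(t,f(v+w_i))_\TS$ by the closed-form linear expressions of \Cref{obs:terminal}, and bounding $(f(v),f(v+w_i))_\TS\ge(p(v),p(v+w_i))_\TS$ via \Cref{lem:proj_dis}. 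After the $(z-z_i)$ and cross terms cancel, the expression becomes $4(p(v),p(v+w_i))_\TS$ plus a signed linear remainder equal to $2A$ (for direction $3$) or $2B$ (for direction $4$), where $A=(2p(v)[x]+p(v)[y])-(2p(v+w_i)[x]+p(v+w_i)[y])$ and $B=p(v)[y]-p(v+w_i)[y]$. Using the rectangle formula $2(p(v),p(v+w_i))_\TS=|A|+|B|$ from \Cref{obs: dist in TS}, the result collapses to $2|A|+2|B|+2A=4A^{+}+2|B|=2\ell_x(v)$ in the first case and to $2|A|+4B^{+}\ge 2\ell^1_z(v)$ in the second.

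\textbf{Directions $1{+}2$ and the $\tfrac{2}{3}$-factor.} From \Cref{clm:x} (bullets 1 and 2), $\sum_v[\ell_1(v,d)+\ell_1(v,b)]\ge 2\sum_v|f(v)[x]-f(v+w_1)[x]|$ and an analogue for $w_2$. Using the identity $w_1+w_2=w_3+w_4$, a triangle inequality across the intermediate vertex $v+w_1$, and $f(v)[x]=p(v)[x]$ from \Cref{obs: proj_ down}, these add up to
\[
\sum_v\bigl[\ell_1(v,d)+\ell_1(v,b)+\ell'_2(v,d)+\ell'_2(v,b)\bigr]\;\ge\;\sum_v 2\bigl|p(v)[x]-p(v+w_3+w_4)[x]\bigr|\;=\;\sum_v\ell_y(v),
\]
and with the capacity weight $2$ that \Cref{lem:cost} assigns to directions $1$ and $2$, the direction-$1{+}2$ contribution is at least $2\sum_v\ell_y(v)$. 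Combining all four directions, the direction-$1,\ldots,4$ contribution to $\vol-\opt$ is at least $2\sum_v[\ell_x(v)+\ell_y(v)+\ell^1_z(v)]$. Summing \Cref{clm:transfer} over $v$ and re-indexing the shifted sums yields $\sum_v\ell^2_z(v)\le 2\sum_v\ell_x(v)+2\sum_v\ell_y(v)+\sum_v\ell^1_z(v)\le 2\sum_v[\ell_x(v)+\ell_y(v)+\ell^1_z(v)]$, hence $\sum_v[\ell_x+\ell_y+\ell^1_z+\ell^2_z]\le 3\sum_v[\ell_x+\ell_y+\ell^1_z]$. Therefore $2\sum_v[\ell_x+\ell_y+\ell^1_z]\ge\tfrac{2}{3}\sum_v[\ell_x+\ell_y+\ell^1_z+\ell^2_z]$, and adding the $\mathsf{ab}$ and $\mathsf{de}$ boundary terms completes the proof.

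\textbf{Main obstacle.} The delicate step is the per-vertex identity for direction~$3$ (and its direction-$4$ analogue). The coefficient pattern $(1,1,2)$ on the three terminals $d,b,c$ and the mixing of primed $\ell'_3$ with unprimed $\ell_3$ are not arbitrary: they are exactly what causes the $(z-z_1)$ and mixed-coordinate corrections to cancel when the three terminals' linear remainders (computed via \Cref{obs:terminal}) are summed, leaving a signed multiple of only $A$, which the absolute value $|A|+|B|$ from \Cref{obs: dist in TS} then absorbs into the piecewise-linear $\ell_x(v)$. Once this cancellation is identified, the remaining pieces — the direction-$1{+}2$ telescoping along $w_1+w_2=w_3+w_4$ and the $\tfrac{2}{3}$-balancing via \Cref{clm:transfer} — follow by routine manipulations.
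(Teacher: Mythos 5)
Your proposal follows the same structure as the paper's own proof: establish the intermediate bound $\vol-\opt\ge\sum_v(2\ell_x+2\ell_y+2\ell^1_z)+\textnormal{boundary}$ by applying \Cref{lem:cost} direction by direction (with the capacity weights folded in), reduce the direction-$3$ and direction-$4$ sums to $2\ell_x$ and $2\ell^1_z$ via \Cref{obs:terminal}, \Cref{obs: dist in TS}, and $(f(v),f(v+w_i))_\TS\ge(p(v),p(v+w_i))_\TS$, telescope the direction-$1{+}2$ sums through $w_1+w_2=w_3+w_4$ to obtain $\ell_y$, extract the two boundary sums from the last two bullets of \Cref{clm:x}, and finally absorb $\sum_v\ell^2_z(v)$ using \Cref{clm:transfer} to introduce the $\tfrac{2}{3}$ factor. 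Your per-vertex algebra for direction $3$ in fact reads more cleanly than the paper's: you correctly obtain the remainder $+2A$ so that $2|A|+2|B|+2A=2|B|+4A^+=2\ell_x(v)$, whereas the paper's displayed intermediate line has the remainder written as $2(|A|-A)$, an apparent sign slip, before the final (correct) expression $2|B|+4A^+$; and the direction-$4$ analogue gives $2|A'|+4B'^{+}\ge 2\ell^1_z(v)$ as you state. One small inconsistency in your text: you write that both $\mathsf{ab}$ and $\mathsf{de}$ pass through vice-sources with $v[x]=0$, yet the second boundary sum in the statement ranges over $v[x]=1$; the intended picture (matching the $(d,f(v))_\TS+(e,f(v))_\TS$ bullet of \Cref{clm:x}) is that the $\mathsf{de}$ paths pass through the $x=1$ face, and the paper's Table~\ref{table: path1} has a typo there. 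This does not affect the logic of your argument.
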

\begin{proof}
We will show that
\[\vol(\fset,\delta)-\opt\ge \sum_v (2\ell_x(v)+2\ell_y(v)+2\ell^1_z(v)) + \sum_{v:v[x]=0} 2\big(p(v)[x]\big)^+ + \sum_{v:v[x]=1} 2\big(1-p(v)[x]\big)^+.\]
Note that this (together with \Cref{clm:transfer}) immediately implies the lemma.

From \Cref{obs: proj_ down}, the associated vectors for $p(v)$ and $f(v)$ have the same $x$ coordinates. Therefore, we can replace all $f(v)$ by $p(v)$ in the of \Cref{clm:x}, obtaining 
    \begin{itemize}
        \item $\sum_{i,j} \ell(\mathsf{ab}[i,j]) = \sum_{v:v[x]=0} 2\big(p(v)[x]\big)^+$;
        \item $\sum_{i,j} \ell(\mathsf{de}[i,j]) = \sum_{v:v[x]=1} 2\big(1-p(v)[x]\big)^+$;
        \item $\ell_1(v,d)+\ell_1(v,b) \ge 2\card{p(v)[x]-p(v+w_1)[x]}$;
        \item $\ell'_2(v,d)+\ell'_2(v,b) \ge 2\card{p(v)[x]-p(v+w_2)[x]}$.
    \end{itemize} 
    As $w_1+w_2=w_3+w_4$, from the third and the fourth inequalities above,
    \[
        \sum_v (\ell_1(v,d)+\ell_1(v,b) + \ell'_2(v,d)+\ell'_2(v,b)) \ge \sum_v 2\card{p(v)[x]-p(v+w_1+w_2)[x]} = \sum_v \ell_y(v)\]
    For any $v$, we define the associated vectors as $p(v)=(x,y,0)$, $p(v+w_3) = (x_3,y_3,0)$ and $p(v+w_4) = (x_4,y_4,0)$. From \Cref{obs: dist in TS}, $2(p(v),p(v+w_3))_{\TS} = \card{(2x+y)-(2x_3+y_3)} + \card{y-y_3}$. Thus by \Cref{clm:proj_abc}, \Cref{obs:terminal} and the property that $(f(v),f(v+w_3))_{\TS} \ge (p(v),p(v+w_3))_{\TS}$, 
    \begin{align*}
            \ell'_3(v,d)+\ell_3(v,b)+2\ell_3(v,c) 
        & =   4(f(v),f(v+w_3))_\TS - (f(v),d)_\TS + (f(v),b)_\TS + 2(f(v),c)_\TS \\
            & \quad\quad + (f(v+w_3),d)_\TS - (f(v+w_3),b)_\TS - 2 (f(v+w_3),c)_\TS \\
       & \ge  4(p(v),p(v+w_3))_\TS - (p(v),d)_\TS + (p(v),b)_\TS + 2(p(v),c)_\TS \\
            & \quad\quad + (p(v+w_3),d)_\TS - (p(v+w_3),b)_\TS - 2 (p(v+w_3),c)_\TS \\
        & \ge  2(\card{(2x+y)-(2x_3+y_3)} + \card{y-y_3}) - (2-x) + x + 1 + 2x + 2y \\
            & \quad\quad + (2-x_3) -x_3 - 1 - 2x_3 -2y_3 \\
        & =    2\card{y-y_3} + 2(\card{(2x+y)-(2x_3+y_3)} - ((2x+y)-(2x_3+y_3))) \\
        & \ge  2\card{y-y_3} + 4\bigg(2(x-x_3)+(y-y_3)\bigg)^+ = 2\ell_x(v).
    \end{align*}
    Similarly, we can also show that $\ell_4(v,d)+\ell'_4(v,b)+2\ell_4(v,c) \ge 2\ell_z^1(v)$. The lemma now follows from \Cref{lem:cost} and the fact that $\vol(\fset,\delta)-\opt=\sum_{P\in \pset}\ell(P)$. 
\end{proof}

In the rest of the section, we will prove that the RHS in \Cref{lem:2d-cost} is $\Omega(1)$. For any vertex $v$, define $\ell(v) = \ell_x(v)+\ell_y(v)+\ell^1_z(v)+\ell^2_z(v)$ and for any $0 \le i+j \le L$, define 
$$\ell[i,j] = \bigg(\sum_{q=0}^L \ell((q/L,i/L,j/L))\bigg)+3\big(p((0,i/L,j/L))[x]\big)^++3\big(1-p((1,i/L,j/L))[x]\big)^+.$$
By \Cref{lem:2d-cost}, $\vol(\fset,\delta)-\opt\ge 2/3\cdot\sum_{i,j}  \ell[i,j]$. Let $D=3 \cdot 7^{3N}$ and $\eps=0.001$. The following lemma is crucial for completing the proof of \Cref{thm: lower 6}. Its proof is deferred to \Cref{subsec: plane}.

\begin{lemma} \label{lem:line}
    For any $i,j \le L-D$, there exists some $k \le 6D$ such that $\sum_{s=0}^{k-1} \ell[i+2s,j] \ge \eps k$.
\end{lemma}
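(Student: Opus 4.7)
The plan is to prove the contrapositive: assume for contradiction that $\sum_{s=0}^{k-1}\ell[i+2s,j] < \eps k$ for every $1 \le k \le 6D$, and derive a contradiction from the fact that $p$ takes at most $N$ distinct values (since $|\fset| \le N$). Throughout, I view each line $\lambda_{i'} = \{(q/L, i'/L, j/L) : q = 0, \ldots, L\}$ as a discrete path in the prism, and study how $p$ projects it into the rectangle.

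First I would analyze a single line. The boundary terms $3\big(p((0, i'/L, j/L))[x]\big)^+$ and $3\big(1-p((1, i'/L, j/L))[x]\big)^+$ inside $\ell[i',j]$ force the $x$-coordinate of $p$ to be essentially $0$ at one endpoint and essentially $1$ at the other (up to the small total budget). Together with \Cref{clm:Cx}, which gives $\ell_x(v) \ge 2\big(p(v)[x] - p(v+w_3)[x]\big)^+$, a small $\sum_q \ell_x$ along the line forces $p[x]$ to be almost-monotonically non-decreasing, sweeping from $0$ to $1$. Since $p$ has finite image, the resulting ``staircase'' of $x$-values has at most $N$ risers; I would record the positions of these risers, together with the (finite) sequence of $y$-coordinates that $p$ visits, as the \emph{pattern} of the line.

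Second, I would compare patterns on adjacent lines $\lambda_{i'}$ and $\lambda_{i'+2}$, using the three ``cross-line'' loss terms $\ell_y, \ell^1_z, \ell^2_z$. The identities of \Cref{obs: dist in TS} and the definitions of these terms control the differences $|p[x] - p[x]|$, $|(2p[x]+p[y]) - (2p[x]+p[y])|$, $|(2p[x]-p[y])-(2p[x]-p[y])|$ between corresponding points on the two lines; small $\ell_y + \ell^1_z + \ell^2_z$ forces each riser's $x$-location (and the associated $y$-jump) on $\lambda_{i'+2}$ to be close to the corresponding one on $\lambda_{i'}$. Summing this over $s$ under the assumed budget $\eps\cdot 6D$, I would conclude that on the whole band of $6D$ lines the pattern is stable, and in fact all of these lines use vertices of $\fset$ drawn from a common ``column'' of clusters.

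Finally I would execute a recursive/potential argument to derive the contradiction. Once a pattern is stable across several consecutive lines, we can either (i) identify two adjacent lines whose patterns coincide exactly, allowing us to ``merge'' them and apply the inductive hypothesis to a problem with one fewer effective cluster, or (ii) show that the pattern has a proper refinement somewhere in the band that must incur at least $\eps$ cost. This yields a recursion of the form $D(N) \le 7 \cdot D(N-1)$ with a small base case, giving the exponential threshold $D = 3 \cdot 7^{3N}$ (the factor $3$ and the cube reflecting the three coordinate types $\ell_x,\ell_y,\ell_z^1, \ell_z^2$ being tracked simultaneously). The main obstacle will be the bookkeeping in Step~3: making the notion of ``pattern'' robust enough that cluster merging is well-defined, and handling the interface regions near the risers where multiple $\ell$-terms interact, so that no $\eps$-mass is lost when invoking the recursive hypothesis on the smaller problem.
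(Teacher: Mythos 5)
Your proposal is a plan rather than a proof, and it misses two of the paper's load-bearing ideas; the overall architecture you describe also diverges from the paper's in a way that would not work as stated.

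The first gap is quantitative. The paper's engine for converting ``bad behaviour of $p[x]$'' into a loss lower bound is \Cref{clm:triangle}: a specific weighted sum of $\ell_x,\ell_y,\ell_z^1,\ell_z^2$ terms over a window dominates the discrete second difference $\card{2p(v_{r+\Delta})[x]-p(v_r)[x]-p(v_{r+2\Delta})[x]}$. This ``convexity'' estimate is what lets the paper charge both kinds of failure: a steep jump in $p[x]$ (Case~2), and a sweep from $0$ to $1$ that is concentrated near the endpoints with a long flat plateau in between (Case~1, with its sub-cases~1.1/1.2 and the good/perfect-index bookkeeping). Your Step~2 gestures at ``controlling differences between corresponding points on two lines'' via $\ell_y,\ell_z^1,\ell_z^2$, but you never state an inequality that would make this quantitative. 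Worse, your riser/pattern narrative is vacuous in Case~1 --- if $p(v_{L-D})[x]-p(v_D)[x]$ is small, there are essentially no risers in the bulk of the line, yet a lower bound must still be extracted. Your plan has no mechanism for that case.

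The second gap is structural. You propose an induction on $N$ by ``merging'' lines or clusters and recursing on a smaller problem, yielding $D(N)\le 7\cdot D(N-1)$. The paper does no such induction: the instance and the solution $(\fset,\delta)$ are fixed, there is no smaller instance to recurse on, and merging clusters is not an operation that preserves feasibility of the solution. The $7^{3N}$ in the paper comes from a one-shot combinatorial construction on a single line (\Cref{clm:partition} together with \Cref{prop:math1}): partition $[D,L-D]$ into alternating smooth and steep segments, iteratively merge steep segments that are separated by short smooth ones, and bound the final steep-segment length by a product $\prod_r (6(\alpha_r-1)+1)\le 7^{3N}$, where $3N$ bounds the initial number of steep segments. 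Your parenthetical explanation of $D=3\cdot 7^{3N}$ (``the factor $3$ and the cube reflecting the three coordinate types $\ell_x,\ell_y,\ell_z^1,\ell_z^2$'') is both internally inconsistent (you list four types) and wrong: the exponent $3N$ is the steep-segment count and the outer factor $3$ is merely slack for choosing $\Delta$-windows in the case analysis. To salvage the proof you would need to formulate and prove a concrete analogue of \Cref{clm:triangle}, and then replace the cluster-merging induction with a direct argument that handles both the large-jump and the plateau regimes.
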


We now use \Cref{lem:line} to complete the proof of \Cref{thm: lower 6}.
We will show that, for any $0 \le j \le L-D$, $\sum_{i=0}^{L-j} \ell[i,j] \ge \eps(L-D-j)$. Note that implies that
$\sum_{i,j} \ell[i,j] \ge \sum_{j=0}^{L-D} \eps(L-D-j) \ge \eps L^2/3$.

We define a sequence of pairs $(i_0,k_0),(i_1,k_1),\ldots$ as follows: let $i_0 = 0$, for any $q$, let $k_{q}$ be the $k$ given by \Cref{lem:line} for $(i_{q},j)$. If $i_{q}+k_{q}+j \le L-D$, then let $i_{q+1}= i_{q}+k_{q}$, otherwise the sequence is completed. By \Cref{lem:line}, for any $q$, $\sum_{i=i_{q}}^{i_{q}+k_{q}-1} \ell[i,j] \ge \eps k_{q}$. Suppose $i^*$ is the last index, we have $i_{i^*}+k_{i^*}+j > L-D$. Therefore, $\sum_{i=0}^{L-j} \ell[i,j] \ge \eps (i_{i^*}+k_{i^*}) \ge \eps(L-D-j)$.

\subsection{Analysis step 3. handling the problem in $\mathbb{R}^2$: proof of \Cref{lem:line}}
\label{subsec: plane}

Fix a pair $i,j\le L-D$. We denote $v_{q} = (q/L,i/L,j/L)$ for each $0 \le q \le L$. %The following claim is an important tool to argue the lower bound.

\begin{claim} \label{clm:triangle}
    For any $r,\Delta \ge 0$,
    \begin{align*}
        & \sum_{q=0}^{2\Delta-1} \ell_x(v_{r+q}) + \sum_{q=0}^{\Delta-1} \bigg(\ell_z^2(v_{r}+q(2w_3+w_4)) + \ell_y(v_{r+\Delta} +q(w_3+w_4)) + \ell_z^1(v_{r+2\Delta}+qw_4)\bigg) \\ 
        \ge & \card{2p(v_{r+\Delta})[x]-p(v_{r})[x]-p(v_{r+2\Delta})[x]}.
    \end{align*}
\end{claim}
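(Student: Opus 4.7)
The plan rests on noticing that the three non-horizontal path-sums in the claim all terminate at a common vertex $X := v_{r+\Delta} + \Delta(w_3+w_4)$. Since the step vectors $2w_3+w_4$, $w_3+w_4$, and $w_4$ each advance the $y$-coordinate of the associated vector by $2/L$ while shifting the $x$-coordinate by $+1/L$, $0$, and $-1/L$ respectively, after $\Delta$ steps the three paths starting from $A:=v_r$, $B:=v_{r+\Delta}$, and $C:=v_{r+2\Delta}$ all land at $X$. I will denote the four summands on the LHS by $S_1,S_2,S_3,S_4$ in the stated order.

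Next, introduce the potentials $P(v) := 2p(v)[x] - p(v)[y]$ and $Q(v) := 2p(v)[x] + p(v)[y]$. By the definitions of $\ell^2_z$, $\ell_y$, and $\ell^1_z$, the triangle inequality for $|\cdot|$ telescoped along the respective converging paths yields
\[
|P(A)-P(X)| \le S_2, \qquad 2\card{p(B)[x]-p(X)[x]} \le S_3, \qquad |Q(C)-Q(X)| \le S_4.
\]
The bottom-row sum $S_1$ is used only through its $|p(v)[y]-p(v+w_3)[y]|$ summand, which telescopes along $\beta=0$ to $|p(A)[y]-p(C)[y]| \le S_1$.

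The key identity is $2p(v)[x] = P(v)+p(v)[y] = Q(v)-p(v)[y]$. Applied at $\{X,A\}$ using the $P$-form and at $\{X,C\}$ using the $Q$-form, then summed, the $p(X)[y]$ terms cancel and we obtain
\[
2\bigl(2p(X)[x]-p(A)[x]-p(C)[x]\bigr) = \bigl(P(X)-P(A)\bigr) + \bigl(Q(X)-Q(C)\bigr) + \bigl(p(C)[y]-p(A)[y]\bigr),
\]
so $|2p(X)[x]-p(A)[x]-p(C)[x]| \le (S_1+S_2+S_4)/2$. A final triangle inequality through $X$ gives
\[
|2p(B)[x]-p(A)[x]-p(C)[x]| \le 2|p(B)[x]-p(X)[x]| + |2p(X)[x]-p(A)[x]-p(C)[x]| \le S_3 + \tfrac{1}{2}(S_1+S_2+S_4),
\]
which is bounded above by $S_1+S_2+S_3+S_4$, as required. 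The only non-mechanical step is spotting the common terminus $X$ that neatly joins the three diagonal/vertical paths into a ``Y''; once that is in hand, the rest is simply careful bookkeeping of signs through the potentials $P$ and $Q$.
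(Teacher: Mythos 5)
Your proof is correct and takes essentially the same route as the paper's: both identify the common terminus $X = v_r + \Delta(2w_3+w_4) = v_{r+\Delta}+\Delta(w_3+w_4) = v_{r+2\Delta}+\Delta w_4$ (the paper's point with associated vector $(x_4,y_4,0)$), telescope each of the four sums to one of $|p(A)[y]-p(C)[y]|$, $|P(A)-P(X)|$, $2|p(B)[x]-p(X)[x]|$, $|Q(C)-Q(X)|$, and then combine by triangle inequality to recover $|4p(B)[x]-2p(A)[x]-2p(C)[x]|$. Your explicit ``potentials'' $P,Q$ are just a cleaner bookkeeping of the linear combinations the paper writes out by hand; minor note: the phrase ``telescopes along $\beta=0$'' references an undefined $\beta$, but the intended bound $|p(A)[y]-p(C)[y]|\le S_1$ is clearly right.
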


\begin{proof}
Define associated vectors as $p(v_{r})=(x_1,y_1,0)$, $p(v_{r+\Delta}) = (x_2,y_2,0)$, $p(v_{r+2\Delta}) = (x_3,y_3,0)$ and $p(v_{r}+\Delta(2w_3+w_4)) = (x_4,y_4,0)$. 
By definition and triangle inequality,
\[\sum_{q=0}^{2\Delta-1} \ell_x(v_{r+q}) \ge \sum_{q=0}^{2\Delta-1} \card{p(v_{r+q})[y]-p(v_{r+q+1})[y]} \ge \card{y_1-y_3}.\]
    and similarly, we can show that \[\sum_{q=0}^{\Delta-1} \ell_z^2(v_r+q(2w_3+w_4)) = \card{(2x_1+y_1)-(2x_4+y_4)},$$ $$\sum_{q=0}^{\Delta-1} \ell_y(v_{r+\Delta} +q(w_3+w_4)) \ge 2\card{x_2-x_4}, \text{and}$$ $$\sum_{q=0}^{\Delta-1} \ell_z^1(v_{r+2\Delta}+qw_4)) \ge \card{(2x_3-y_3)-(2x_4-y_4)}.\] 
Therefore, for the inequality in the statement of \Cref{clm:triangle},
\[
\begin{split}
2\cdot \textnormal{LHS} & \ge 2\card{y_1-y_3} + 2\card{(2x_1+y_1)-(2x_4+y_4)} + 4\card{x_2-x_4} + 2\card{(2x_3-y_3)-(2x_4-y_4)}\\
& \ge \card{y_1-y_3} + \card{-(2x_1+y_1)+(2x_4+y_4)} + 4\card{x_2-x_4} + \card{-(2x_3-y_3)+(2x_4-y_4)}\\
 & \ge \card{4x_2-2x_1-2x_3}= 2\cdot \textnormal{RHS}.\\
\end{split}
\]
This finishes the proof of \Cref{clm:triangle}.
\end{proof}

%Now we are ready to prove the lemma. 
By \Cref{clm:Cx}, for any $v_r$, $\ell_x(v_r) \ge \big(p(v_r)[x]-p(v_{r+1})[x]\big)^+$, so if $\sum_{\ell=0}^{L-1} \big(p(v_r)[x]-p(v_{r+1})[x]\big)^+ > \eps$, then we can set $k=1$ and conclude the proof. Therefore, we assume from now on that it is at most $\eps$, and it follows that, for every pair $r_1<r_2$, $p(v_{r_1})[x] - p(v_{r_2})[x] \le \eps$.
Similarly, we can assume that $p(v_0)[x] \le \eps$ and $p(v_L)[x] \ge 1-\eps$.

We distinguish between the following two cases.

\subsubsection*{Case 1. $p(v_{L-D})[x]-p(v_D)[x] \le 0.1$.} 
In this case, either $p(v_D)[x] \ge 0.45$ or $p(v_{L-D})[x] \le 0.55$ holds. We assume without lose of generality that $p(v_D)[x] \ge 0.45$. 

\paragraph{Case 1.1. $p(v_{D/2})[x]\le 0.3$.} In this case, consider any $0 \le r \le D/2$ and $\Delta=D$. Note that 
\[p(v_r)[x] \le 0.3+\eps, \quad p(v_{r+\Delta})[x] \ge p(v_D)[x] - \eps, \quad p(v_{r+2\Delta})[x] \le p(v_{L-D})[x] + \eps,\] so $\card{2p(v_{r+\Delta})[x]-p(v_r)[x]-p(v_{r+2\Delta})[x]}>0.05-4\eps \ge 10 \eps$. From \Cref{clm:triangle},
    \begin{align*}
        & \sum_{q=0}^{2\Delta-1} \ell_x(v_{r+q}) + \sum_{q=0}^{\Delta-1} \bigg(\ell_z^2(v_r+q(2w_3+w_4)) + \ell_y(v_{r+\Delta} +q(w_3+w_4)) + \ell_z^1(v_{r+2\Delta}+qw_4)\bigg) \\ 
        \ge & \card{2p(v_{r+\Delta})[x]-p(v_r)[x]-p(v_{r+2\Delta})[x]} \ge 10 \eps.
    \end{align*}
If for any such $r$, $\sum_{q=0}^{2\Delta-1} \ell_x(v_{r+q}) \ge \eps$ holds, we set $k=1$ and conclude the proof. Otherwise,
    \begin{align*}
        \sum_{r=0}^{D/2} \sum_{q=0}^{\Delta-1} \bigg(\ell_z^2(v_r+q(2w_3+w_4)) + \ell_y(v_{r+\Delta} +q(w_3+w_4)) + \ell_z^1(v_{r+2\Delta}+qw_4)\bigg) \ge 5\eps D,
    \end{align*}
which means $\sum_{s=0}^D \ell[i+2s,j] \ge 5\eps D$ and the statement of \Cref{lem:line} follows from setting $k=D$.

\paragraph{Case 1.2. $p(v_{D/2})[x]>0.3$.} For each $0 \le s \le 6D$, we say $s$ is \emph{good} if $\ell[i+2s,j] \le 20 \eps = 0.02$, otherwise we say $s$ is \emph{bad}. For any good $s$, if $p(v_D+s(w_3+w_4))[x] < 0.25$, then for any $D/2 \le r \le D$, $p(v_r)[x] - p(v_r+s(w_3+w_4))[x] > 0.05-0.02-\eps >20 \eps$, which means
$$
    \sum_{q=0}^{s-1} \ell_y(v_r+q(w_3+w_4)) > 20 \eps.
$$
This implies that
$\sum_{q=0}^{s} \ell[i+2q,j] > 10D\eps$, or equivalently, \Cref{lem:line} is true when $k=s$.

We now assume that for every good $s$, $p(v_D+s(w_3+w_4))[x] \ge 0.25$. For any $0 \le s < D$, we say $s$ is \emph{perfect} if $s,s+D,\dots,s+5D$ are all good. For any perfect $s$, we define associated vectors as 
\begin{itemize}
\item $p(v_0+s(w_3+w_4))=u_0=(x_0,y_0,0)$;
\item  $p(v_D+(s+D)(w_3+w_4))=u_1=(x_1,y_1,0)$;
\item $p(v_0+(s+2D)(w_3+w_4))=u_2=(x_2,y_2,0)$;
\item 
$p(v_D+(s+3D)(w_3+w_4))=u_3=(x_3,y_3,0)$;
\item
$p(v_0+(s+4D)(w_3+w_4))=u_4=(x_4,y_4,0)$; and
\item $p(v_D+(s+5D)(w_3+w_4))=u_5=(x_5,y_5,0)$. 
\end{itemize}
By definition, $x_0,x_2,x_4 \le 0.02$ and $x_1,x_3,x_5 \ge 0.25$. Therefore, by triangle inequality,
\begin{align*}
& \sum_{q=0}^{\Delta-1}  \bigg(\ell_z^2(u_0+q(2w_3+w_4)) +  \ell_z^1(u_1+q w_4) + \ell_z^2(u_2+q(2w_3+w_4))+ \ell_z^1(u_3+q w_4) + \ell_z^2(u_4+q(2w_3+w_4))\bigg)\\
    & \ge \card{(2x_0-y_0)-(2x_1-y_1)}+\card{(2x_1+y_1)-(2x_2+y_2)}+\card{(2x_2-y_2)-(2x_3-y_3)}\\
 & \quad + \card{(2x_3+y_3)-(2x_4+y_4)}+\card{(2x_4-y_4)-(2x_5-y_5)} \ge -0.46\times 5 +y_5-y_0 \ge 0.3 \ge 30 \eps,
\end{align*}
where the last inequatlity is since $0 \le y_0,y_5 \le 2$.

Assume there are $m$ perfect indices $s$, then $\sum_{s=0}^{6D} \ell[i+2s,j] \ge 30m\eps$. If $m>D/2$ then \Cref{lem:line} is true when $k=6D$; if $m\le D/2$, then there are at least $D/2$ bad indices $s$, which means that $\sum_{s=0}^{6D} \ell[i+2s,j] \ge 10D\eps$, and \Cref{lem:line} is still true when $k=6D$.

\subsubsection*{Case 2. $p(v_{L-D})[x]-p(v_D)[x] > 0.1$.}
Since the solution size is at most $N$, for all $v\in V(G)$, $p(v)[x]$ takes at most $N$ different values. Let $x_1,\dots,x_N$ be these values sorted in the increasing order. We say that a consecutive subsequence $(x_r,\ldots,x_q)$ is a \emph{group}, iff 
\begin{itemize}
\item for each $r\le s\le q-1$, $x_{r+1}-x_r \le \eps/N$; and 
\item $x_{r}-x_{r-1}> \eps/N$; and $x_{q+1}-x_{q}> \eps/N$.
\end{itemize} 
Clearly, $(x_1,\dots,x_N)$ can be partitioned into consecutive subsequences that form groups. We use the following observation.
\begin{observation} \label{obs:small}
    $\sum_{r: \text{ }x_r \textnormal{ and } x_{r+1} \textnormal{ are in the same group}} (x_{r+1}-x_r) \le \eps$.
\end{observation}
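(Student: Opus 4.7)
\textbf{Proof proposal for Observation \ref{obs:small}.}

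The plan is to just unpack the definition of a group and apply a crude counting argument. By the definition of a group, every index $r$ such that $x_r$ and $x_{r+1}$ belong to the same group satisfies $x_{r+1}-x_r \le \eps/N$. The number of such indices $r$ is at most $N-1$, since there are only $N$ values $x_1,\ldots,x_N$ in total, and the set of ``within-group'' indices is a subset of $\{1,\ldots,N-1\}$. Therefore,
\[
\sum_{r: x_r \text{ and } x_{r+1} \text{ are in the same group}} (x_{r+1}-x_r) \;\le\; (N-1)\cdot \frac{\eps}{N} \;<\; \eps.
\]
There is no real obstacle here: the bound $\eps/N$ on a single within-group gap was chosen precisely so that after summing over the (at most $N$) such gaps, the total still fits under $\eps$. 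The observation is thus just a bookkeeping statement whose role in the larger argument is to tell us that the ``large'' jumps (between consecutive groups) must account for essentially all of the spread among the $x_r$'s; this will presumably be used later to localize $p(v)[x]$ to within one of at most $N$ groups, at the cost of a negligible error in the $x$-coordinate.
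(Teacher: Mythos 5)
Your proof is correct and is the straightforward argument the paper intends (the paper omits a proof since it views the observation as immediate): each within-group gap is at most $\eps/N$ by definition, and there are at most $N-1$ consecutive pairs, so the total is at most $(N-1)\eps/N < \eps$.
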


\begin{claim} \label{prop:math1}
For any set $A$ of positive integers  such that $\sum_{a\in A} = m$, $\prod_{a\in A} (7a)\le7^m$.
\end{claim}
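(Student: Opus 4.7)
The plan is to reduce the multiplicative inequality to a per-element inequality. Taking the product on the left apart, it suffices to show that $7a \le 7^a$ for every positive integer $a$, because then
\[
\prod_{a \in A}(7a) \le \prod_{a \in A} 7^a = 7^{\sum_{a \in A} a} = 7^m.
\]
So the whole claim collapses to the single-variable statement: for each integer $a \ge 1$, $7a \le 7^a$.

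This per-element inequality is elementary. At $a = 1$ we have equality $7\cdot 1 = 7^1$, giving the base case. For the inductive step, suppose $7a \le 7^a$ for some $a \ge 1$; then
\[
7^{a+1} = 7 \cdot 7^a \ge 7 \cdot 7a = 49a \ge 7(a+1),
\]
where the last inequality uses $49a \ge 7a + 7$, i.e., $42a \ge 7$, which holds for every $a \ge 1$. Thus $7a \le 7^a$ for all positive integers $a$, and the claim follows immediately by multiplying these inequalities over $a \in A$.

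There is essentially no obstacle here; the statement is a clean numerical inequality and the only thing to check is the base case plus a one-line induction. (Equivalently, one could observe that $f(a) = 7^a - 7a$ satisfies $f(1)=0$ and $f'(a) = 7^a \ln 7 - 7 \ge 7\ln 7 - 7 > 0$ for $a \ge 1$, so $f$ is nonnegative throughout.) The statement is presumably used in the subsequent analysis to bound a product of group sizes by an exponential in their total, which lets the authors argue that at least one group must be sufficiently large — but that application is outside the scope of the claim itself.
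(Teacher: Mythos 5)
Your proof is correct, and the core of it is the same elementary fact that the paper relies on: $7a \le 7^{a}$ for every positive integer $a$. The presentations differ slightly in structure. You bound the product term by term and multiply, which cleanly isolates the per-element inequality and then proves it by a one-line induction (or the derivative argument). The paper instead runs a strong induction on the total sum $m$: it peels off an element $a'$, applies the inductive hypothesis to get $\prod_{a\in A}(7a) \le 7a' \cdot 7^{\,m-a'}$, and then concludes $\le 7^{m}$ — a step that silently invokes exactly the same $7a' \le 7^{a'}$ you establish explicitly. Your route is arguably tidier because it does not couple the per-element inequality to the set structure and avoids edge-case bookkeeping about the empty set appearing in the inductive peel-off, but mathematically the two proofs are the same argument in different clothing; neither yields a stronger statement than the other.
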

\begin{proof}
We prove by induction on $m$. When $m=1$, $\prod_{a\in A} (7a)=7$. Assume the claim is true for $m \le k-1$, consider the case $m=k$. Take any element $a'\in A$, then $\prod_{a\in A} (7a) \le 7a' \cdot 7^{m-a'} \le 7^m$.
\end{proof}

%We prove we can partition the segment $[D,L-D]$ into the following way.

\begin{claim} \label{clm:partition}
There exists an integer $k\le 7^{3N}$, such that the interval $[D,L-D]$ can be partitioned into subintervals, $[a_1,a_2],[a_2,a_3],\ldots,[a_{t-1},a_t]$, such that
    \begin{itemize}
    	\item each interval $[a_i,a_{i+1}]$ is either \emph{smooth} or \emph{steep}, and smooth and steep intervals appear interchangeably (that is, if $[a_i,a_{i+1}]$ is smooth, then $[a_{i+1},a_{i+2}]$ is steep, and if $[a_i,a_{i+1}]$ is steep, then $[a_{i+1},a_{i+2}]$ is smooth);
        \item the length $(a_{i+1}-a_i)$ of a smooth segment is at least $5k$, while the length of a steep segment is at most $k$;
        \item within a smooth segment $[a_i,a_{i+1}]$, all vertices $v_r$ with $a_i\le r\le a_{i+1}$ have their value $p(v)[x]$ lying in the same group.
    \end{itemize}
\end{claim}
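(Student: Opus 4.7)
My plan is to prove Claim~\ref{clm:partition} in two main stages: first bound the combinatorial complexity of the sequence of groups, and then use a scale-selection argument over scales of the form $7^j$.

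\textbf{Step 1: Bounding the number of runs.} Call a maximal sub-interval of $[D,L-D]$ on which $G_r$ is constant a \emph{run}. I would first show that the sequence $G_D, G_{D+1}, \ldots, G_{L-D}$ contains at most $3N+1$ runs. The earlier assumption (made before the case split) gives $\sum_r (p(v_r)[x]-p(v_{r+1})[x])^+ \le \eps$, i.e., the total downward variation of the $x$-coordinate is at most $\eps$. Since any two distinct groups are separated in $x$ by more than $\eps/N$, each group \emph{descent} (transition to a group with strictly smaller $x$-value, crossing at least one inter-group gap) consumes more than $\eps/N$ of downward variation. Hence the number of descents is strictly less than $N$. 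The number of ascents exceeds the number of descents by at most $m-1 \le N-1$, so there are at most $2N$ ascents and at most $3N$ total transitions, giving $q \le 3N+1$ runs.

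\textbf{Step 2: Candidate scales and natural partition.} I would consider the $3N+1$ candidate scales $k_j = 7^j$ for $j=0,1,\ldots,3N$. For a fixed scale $k$, call a run \emph{long} if its length is at least $5k$ and \emph{short} otherwise; the natural attempt is to designate each long run as a smooth segment (each is inside a single group by construction, hence satisfies the group constraint), designate each maximal block of consecutive short runs between two long runs as a steep segment, and use length-zero steep segments where adjacent smooth segments meet (so the alternation condition is maintained). This partition is valid for scale $k$ iff every such maximal short-run block has total length at most $k$; call this maximum total length $M(k)$.

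\textbf{Step 3: Finding a suitable $k_j$.} The function $M(k)$ is piecewise constant and non-decreasing in $k$, with jumps at the at-most-$q \le 3N+1$ transition scales $\tau_i := L_i/5$ where a run becomes short. The plan is to track $M$ across the candidate scales $k_0 < k_1 < \cdots < k_{3N}$ and use the run-count bound. Specifically, I would iteratively update the scale: starting at $k_0 = 1$, if scale $k_j$ fails (i.e., $M(k_j) > k_j$), move to $k_{j+1} = 7 k_j$. When the scale increases from $k_j$ to $k_{j+1} = 7k_j$, runs of length in $[5k_j, 5k_{j+1})$ become short and get absorbed into neighboring short blocks, possibly merging two short blocks into one longer block. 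Each such absorption event permanently converts a long run into a short one, and there are at most $q-1 \le 3N$ such conversions possible in total. Combined with the multiplicative scale factor of $7$ per step, a careful amortized accounting should show that within $3N$ iterations the scale $k_j$ dominates every surviving short block, giving $M(k_j) \le k_j = 7^j \le 7^{3N}$.

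\textbf{Main obstacle.} The hard part is Step 3: bounding how quickly $M(k)$ can grow relative to $k$. A naive pigeonhole on the $\le 3N$ transition scales within the $3N$ intervals $[7^j,7^{j+1})$ only yields an averaged statement and does not rule out $M(k_j)$ exceeding $k_j$ by a factor of $\Theta(N)$ at every candidate scale. Overcoming this requires exploiting that the blocks being merged at transition $\tau_i$ have length bounded by $5\tau_i \le 5 k_j$ at the moment of merging, so the multiplicative factor $7$ between consecutive candidate scales exactly accommodates (i) the existing bad block (of length at most $k_j$ in the still-not-quite-failed case) and (ii) one newly absorbed long run on each side. The induction/amortization must show that the remaining slack is never exhausted within $3N$ iterations, using the structural fact that each absorption reduces the pool of surviving long runs.
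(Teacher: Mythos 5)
Your Steps~1 and~2 track the paper's approach: the descent-counting argument (each group descent contributes more than $\eps/N$ to $\sum_r\big(p(v_r)[x]-p(v_{r+1})[x]\big)^+\le\eps$, so there are fewer than $N$ descents, and hence at most $3N$ group transitions overall) is exactly how the paper bounds the initial number of steep segments, and your reformulation in terms of $M(k)$ is a sound restatement of the partition condition. The gap is in Step~3, which you acknowledge; the ``careful amortized accounting'' you gesture at is precisely where the content of the claim lives, and you have not executed it.

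The paper closes this by an explicit iterative merge process rather than by sampling scales. Start at $k=1$, with each length-one group-transition interval a steep segment and the runs in between smooth. Whenever a smooth segment has length $<5k$, merge the steep segments it separates (and, recursively, all steeps connected through too-short smooth segments) into a single steep segment, then reset $k$ to the new maximum steep length. If one such merge absorbs $\alpha$ original steep segments (each of length $\le k_{\text{old}}$) together with the $\alpha-1$ intervening smooth segments (each of length $<5k_{\text{old}}$), the new steep segment has length $<\big(6(\alpha-1)+1\big)k_{\text{old}}$, and the steep-segment count drops by at least $\alpha-1$. Hence if $\alpha_1,\alpha_2,\ldots$ are the per-step maxima, $\sum_r(\alpha_r-1)\le 3N$, and combining $6(\alpha-1)+1\le 7(\alpha-1)$ (for $\alpha\ge 2$) with $\prod_{a\in A}(7a)\le 7^{\sum_{a\in A}a}$ (Claim~\ref{prop:math1}) yields the terminal $k\le 7^{3N}$. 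Two concrete issues with your sketch: (i) restricting to candidate scales $k_j=7^j$ is unnecessary and makes the task strictly harder --- the claim only asks for \emph{some} integer $k\le 7^{3N}$, and the terminal $k$ of the merge process need not be a power of $7$; (ii) the fact that each absorption permanently reduces the pool of long runs is true but, as you concede, does not by itself rule out $M$ overshooting $k$ at every sampled scale --- the missing piece is the trade of one eliminated steep segment against at most a factor-$7$ growth in $k$, together with the product-versus-power inequality that aggregates these trades to $7^{3N}$.
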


\begin{proof}
    We start with $k=1$. We make all $[r,r+1]$ as steep segments where $p(v_r)[x]$ and $p(v_{r+1})[x]$ is not in the same group, and all intervals between steep segments are smooth segments. The only condition that is possibly not satisfied is that some smooth segments may have length less than $5k$. Consider all steep segments, for any $r$ such that $p(v_r)[x]$ and $p(v_{r+1})[x]$ are not in the same group and $p(v_r)[x]>p(v_{r+1})[x]$, we have $\ell_x(v_r)\ge\eps/N$ by \Cref{clm:Cx}. Therefore, the number of such indices $r$ is at most $N$ as otherwise $\sum_{r=0}^{L-1} \big(p(v_r)[x]-p(v_{r+1})[x]\big)^+ > \eps$. Since there are at most $N$ groups of $x$ values, the number of steep segments is at most $3N$.

    Now we repeat the following process: whenever there is a smooth segments is less than $5k$, we merge all steep segments such that the smooth segments between them is less than $5k$. After merging, we make $k$ be the maximum length of steep segments.

    When the process terminates, $k$ is the number that satisfies all the condition, so we only need to prove that in the end $k\le 7^{3N}$. In each step, suppose any new steep segments contains at most $\alpha$ original steep segments, then $k$ can be increased by at most $6(\alpha-1)+1$ factor. On the other hand, the number of steep segments  decreases by at least $\alpha-1$. Let $\alpha_1, \alpha_2, \dots$ be the $\alpha$ in each step, then $\sum_r (\alpha_r-1) \le 3N$ and in the end $k\le \prod_r(6(\alpha_r-1)+1)$. By \Cref{prop:math1}, $k \le 7^{3N}$.
\end{proof}

Consider a steep segment $[r_1,r_2]$ given by \Cref{clm:partition}. 

If $r_1>D$, $r_2<L-D$, and $p(v_{r_2})[x] - p(v_{r_1})[x]>0$. We denote $\alpha = \card{p(v_{r_2})[x] - p(v_{r_1})[x]}$, and denote $\beta_1$ and $\beta_2$ be the maximum difference of the groups that contains $p(v_{r_1})[x]$ and $p(v_{r_2})[x]$, respectively.  Let $\gamma = \max_{q,q' \in [r_1-k,r_2+3k]}\{p(v_{q})[x] - p(v_{q'})[x]\}$.  Consider any $r_1 - k \le r \le r_1$ and $\Delta = 2k$. We have $\card{p(v_r)[x]-p(v_{r_1})[x]} < \beta_1$, $\card{p(v_{r+\Delta})[x]-p(v_{r_2})[x]} \le \beta_2$ and $\card{p(v_{r+2\Delta})[x]-p(v_{r_2})[x]} \le \beta_2$. So $\card{2p(v_{r+\Delta})[x]-p(v_r)[x]-p(v_{r+2\Delta})[x]} \le \alpha - \beta_1 - 3\beta_2$. By \Cref{clm:triangle}, these $r$ contribute $\gamma$ to $\ell[i,j]$ and contribute $k(\alpha - \beta_1 - 3\beta_2 - \gamma)$ to $\sum_{s=0}^{k-1}(i+2s,j)$. We say this is the contribution of the segment. 

If $r_1 \le D$ or $r_2 \ge L-D$, by similar argument, the contribution of the segment is $k(\alpha - \beta_1 - 3\beta_2 - \gamma - \eps)$ since otherwise $\ell[i,j]>\eps$. Note that these contribution does not overlap because each smooth segment has length at least $5k$.

%To argue the sum of the contribution is large, we select a set of segment.
We say that a group is covered by a segment $[r_1,r_2]$, iff all its values are between $p(v_{r_1})[x]$ and $p(v_{r_2})[x]$. Consider an arbitrary minimal set of segments that cover all groups. By \Cref{obs:small}, the sum of the $\alpha$ is at least $0.1-\eps \ge 9 \eps$, the sum of $\beta_1+3\beta_2$ is at most $4\eps$. If the sum of the $\gamma$ is at least $\eps$, then $\ell[i,j] \ge \eps$ and \Cref{lem:line} when $k=0$, and otherwise the total contribution of these segments are at least $k(9\eps - 4 \eps -2 \eps - \eps) = 2k \eps$, which again means that \Cref{lem:line} is true.

\subsection{Proof of \Cref{thm: lower 6} for $\zesn_{\textsf{ave}}$}
\label{sec: ave}

In \Cref{subsec: onto plane,subsec: basic,subsec: plane}, we proved \Cref{thm: lower 6} for problem $\zesn$.
In this section, we provide the proof of \Cref{thm: lower 6} for problem $\zesn_{\textsf{ave}}$ (the ``average'' version of problem $\zesn$), by generalizing the hard instances and its analysis in previous subsections.

Recall that in $\zesn$, the input is a weighted graph $G$ and a terminal set $T$, and a feasible solution is a pair $(\fset,\delta)$ with $\delta(F(t),F(t'))\ge \dist_{G}(t,t')$ for all $t,t'$. While in $\zesn_{\textsf{ave}}$, the input is $(G,T,\dset)$, and a feasible solution is a pair $(\fset,\delta)$ with $\sum_{t,t'}\dset(t,t')\cdot\delta(F(t),F(t'))\ge \sum_{t,t'} \dset(t,t')\cdot\dist_{G}(t,t')$.
Intuitively, a solution $(\fset,\delta)$ to $\zesn_{\textsf{ave}}$ can manipulate the distances between terminal clusters, as long as the weighted sum $\sum_{t,t'}\dset(t,t')\cdot\delta(F(t),F(t'))$ does not decrease. Therefore, in order to prove that every $\zesn_{\textsf{ave}}$ solution has a large cost, we first need to ``freeze'' the distances between terminal clusters as in $\dist_{G}(\cdot,\cdot)$, and then follow the analysis for $\zesn$.

%remove the condition that the distance among terminals not change. 

Towards this goal, we need to slightly modify the hard instance defined in \Cref{subsec: instance}. The metric $D$ (on terminals $\set{a,b,c,d,e,f}$) stays the same, and we keep all paths defined in $G$. 
In addition, we add, for any three terminals $t,t',t''$ with $D(t,t'')=D(t,t')+D(t',t'')$, a $3$-vertex path $(t,t',t'')$ with weight $L^2$. 
For example, since $D(a,e)+D(e,f)=D(a,f)$, we add a weight-$L^2$ path $(a,e,f)$ to the graph. 
%The idea of these paths are to make sure the equalities are still almost true in any good solution. 
As we are now constructing an instance of $\zesn_{\textsf{ave}}$, we also need to define a demand $\dset$ on terminal pairs. For every pair of terminals, we set the demand between them as the total weight of paths between them. We also add a demand of $\gamma L^2$ between $a$ and $e$ where $\gamma = 10^{-15}$. 

We first ignore the last demand and prove a lower bound for this demand. We will show that there is only one way that the $0$-extension solution has the same cost as the original graph, and then show that adding the last demand will prevent that.

We now start analyzing the cost of any solution $(\fset,\delta)$ to the modified instance.
First, via similar analysis, we can show that
$\opt  :=\sum_{(u,v)\in E(G)}\dist_{\ell}(u,v)\le 200\cdot L^2$.
Denote by $\pset$ the collection of all paths in the modified instance. For each $P\in \pset$, we denote by $t_P,t'_P$ the endpoints of $P$, and define the \emph{loss of path $P$} as $\ell(P)=\vol(P)-\delta(t_P,t'_P)$ (where $\delta(t,t')=\delta(F(t),F(t'))$, abusing the notation).
Therefore, 
$$\vol(\fset,\delta)-\opt= \sum_{P\in \pset}\ell(P),$$ and to complete the proof of \Cref{thm: lower 6}, it suffices to show that $\sum_{P\in \pset}\ell(P)\ge \gamma\cdot L^2$.

%Let $\delta$ be the distance function in the the solution and $D'$ be the new distances between the terminals in the solution. Since the demand between any pair of terminals is the same as the total weight of the paths between them in the graph. The total length of the paths in the solution is still at least as large as before. Therefore, in order to prove the lower bound, it is still sufficient to prove that $\sum_{P} \vol_{\delta}(P) = \Omega(L^2)$. Let $\eta = 10^{-6}$, the following observation is due to the new paths added in the graph.

Let $\eta = 10^{-9}$. We start with the following immediate observation.

\begin{observation} \label{obs:equ}
For any three terminals $t_1,t_2,t_3$ such that $D(t_1,t_2)+D(t_2,t_3) = D(t_1,t_3)$. If $\delta(t_1,t_2)+\delta(t_2,t_3) \ge \delta(t_1,t_3) + \eta$, then $\sum_{P} \ell(P) \ge \eta\cdot L^2$.
\end{observation}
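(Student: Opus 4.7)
The plan is to pinpoint the loss contributed by the added 3-vertex path $P^* = (t_1, t_2, t_3)$, which has edge capacity $L^2$, and then absorb the remaining contributions via the $\zesn_{\textsf{ave}}$ feasibility constraint.

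First, I would write $\ell(P^*)$ exactly. Since $D(t_1,t_3) = D(t_1,t_2)+D(t_2,t_3)$,
\[
\ell(P^*) \;=\; L^2\bigl(\delta(t_1,t_2)+\delta(t_2,t_3)\bigr) - L^2 D(t_1,t_3).
\]
Adding and subtracting $L^2\delta(t_1,t_3)$ and invoking the hypothesis, this splits as
\[
\ell(P^*) \;\ge\; \eta L^2 \;+\; L^2\bigl(\delta(t_1,t_3)-D(t_1,t_3)\bigr).
\]
For every other path $P\in\pset$ with (constant) edge-capacity $c_P$, the triangle inequality on $\delta$ along $P$ yields $\vol(P)\ge c_P\,\delta(t_P,t'_P)$ and hence $\ell(P)\ge c_P\bigl(\delta(t_P,t'_P)-D(t_P,t'_P)\bigr)$. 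Summing over all paths and grouping by endpoint pair gives
\[
\sum_{P\in\pset} \ell(P) \;\ge\; \eta L^2 \;+\; \sum_{t,t'} w(t,t')\bigl(\delta(t,t')-D(t,t')\bigr),
\]
where $w(t,t')$ is the total capacity of paths with endpoints $\{t,t'\}$.

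Next, I would apply $\zesn_{\textsf{ave}}$ feasibility. Since the demand satisfies $\dset(t,t')=w(t,t')$ for every pair except $(a,e)$, where $\dset(a,e)=w(a,e)+\gamma L^2$, the feasibility inequality $\sum_{t,t'}\dset(t,t')\bigl(\delta(t,t')-D(t,t')\bigr)\ge 0$ rearranges to
\[
\sum_{t,t'} w(t,t')\bigl(\delta(t,t')-D(t,t')\bigr) \;\ge\; \gamma L^2\bigl(D(a,e)-\delta(a,e)\bigr).
\]
Substituting into the previous display yields $\sum_P \ell(P) \ge \eta L^2 + \gamma L^2\bigl(D(a,e)-\delta(a,e)\bigr)$, which is at least $\eta L^2$ whenever $\delta(a,e)\le D(a,e)$.

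The hard part will be the residual case $\delta(a,e) > D(a,e)$, in which the second term is negative. To dispose of it, I would tighten the bound used for the added 3-vertex path $(a,f,e)$ (this path exists because $D(a,f)+D(f,e)=D(a,e)$): its exact loss equals $L^2\bigl(\delta(a,f)+\delta(f,e)-D(a,e)\bigr)$, which exceeds the triangle-inequality bound $L^2\bigl(\delta(a,e)-D(a,e)\bigr)$ by the $\delta$-slackness at the triple $(a,f,e)$. Because $\gamma\ll\eta$ (indeed $\gamma=10^{-15}$ while $\eta=10^{-9}$), I expect this additional surplus, combined with the $\eta L^2$ margin already extracted from $\ell(P^*)$, to more than compensate for the $-\gamma L^2\bigl(\delta(a,e)-D(a,e)\bigr)$ deficit and preserve the lower bound $\sum_P \ell(P)\ge \eta L^2$.
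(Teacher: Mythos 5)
You have re-derived the loss with respect to $D$ rather than $\delta$: the paper defines $\ell(P)=\vol(P)-c_P\,\delta(t_P,t'_P)$, not $\vol(P)-c_P\,D(t_P,t'_P)$. With the paper's definition the observation is essentially a one-liner: the dedicated path $P^*=(t_1,t_2,t_3)$ of weight $L^2$ has
\[
\ell(P^*)=L^2\bigl(\delta(t_1,t_2)+\delta(t_2,t_3)-\delta(t_1,t_3)\bigr)\ge \eta L^2,
\]
and every other path $P$ has $\ell(P)=\vol(P)-c_P\,\delta(t_P,t'_P)\ge 0$ by the triangle inequality on $\delta$, so $\sum_P\ell(P)\ge\eta L^2$. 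No appeal to $\zesn_{\textsf{ave}}$ feasibility, demand bookkeeping, or case analysis is needed.

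With your rewritten loss the statement becomes genuinely harder, since paths can now carry negative loss when $\delta(t_P,t'_P)<D(t_P,t'_P)$; this is why you were driven to invoke the feasibility inequality and split on the sign of $D(a,e)-\delta(a,e)$. Your first case is fine, but the residual case $\delta(a,e)>D(a,e)$ is not actually resolved: you state an expectation, not an argument. Moreover the collinear triple you invoke is oriented incorrectly: the instance contains the weight-$L^2$ path $(a,e,f)$ because $D(a,e)+D(e,f)=D(a,f)$ (whereas $D(a,f)+D(f,e)=3\ne 1=D(a,e)$), so the exact loss of that path is $L^2\bigl(\delta(a,e)+\delta(e,f)-\delta(a,f)\bigr)$ on top of the triangle-inequality floor, and it is not clear this surplus dominates the $\gamma L^2\bigl(\delta(a,e)-D(a,e)\bigr)$ deficit. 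The fix is simply to use the paper's definition of $\ell(P)$, under which the whole detour disappears.
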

\begin{proof}
From the construction, there is a path $P=(t_1,t_2,t_3)$ of weight $L^2$. The loss of this path is $\ell(P)\ge \delta(t_1,t_2)+\delta(t_2,t_3)-\delta(t_1,t_3) \ge \eta$, so in total they contribute $\eta\cdot L^2$ to $\vol(\fset,\delta)-\opt$.
\end{proof}

As a corollary, $\vol(\fset,\delta)\ge (1+10^{-11})\cdot \opt$.

From now on we assume that, for all terminals $t_1,t_2,t_3$ such that $D(t_1,t_2)+D(t_2,t_3) = D(t_1,t_3)$, $\delta(t_1,t_2)+\delta(t_2,t_3) \le \delta(t_1,t_3) + \eta$ holds, as otherwise we are done by \Cref{obs:equ}. We say that a solution $(\fset,\delta)$ is \emph{good} if all these requirements are satisfied.

As a next step, we show that, any good solution $(\fset,\delta)$ can be slightly adjusted such that all these ``almost tight inequalities'' actually become equalities, without significantly increasing its cost.

\begin{lemma} \label{lem:adjust}
For any good solution $(\fset,\delta)$, there is another feasible solution $(\fset',\delta')$, such that
    \begin{itemize}
        \item for any terminals $t_1,t_2,t_3$ with $D(t_1,t_2)+D(t_2,t_3) = D(t_1,t_3)$, $\delta'(t_1,t_2)+\delta'(t_2,t_3) = \delta'(t_1,t_3)$;
        \item $|\fset'|\le |\fset|+6$, and $\vol(\fset',\delta')\le (1+30\eta)\cdot \vol(\fset,\delta)$.
    \end{itemize}
\end{lemma}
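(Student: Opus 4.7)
The plan is to freeze each near-tight collinear triangle equality into an exact equality by splitting each terminal off as its own singleton cluster and giving these six singletons a metric that lies in the ``collinear-consistent'' subspace. The key preliminary observation is a rigidity statement: \emph{the set of semimetrics on $T = \{a,b,c,d,e,f\}$ satisfying every collinear equality of $D$ is a two-parameter family} $\{\bar\delta_{p,q}\}_{p,q\ge 0}$, where $\bar\delta_{p,q}(t,t') = p\cdot A(t,t') + q\cdot B(t,t')$ for two explicit pseudometrics $A,B$ on $T$ with $A+B=D$. I would prove this by reading off the $16$ collinear triples of $D$ directly from the tight span $\TS(D)$ in \Cref{fig: TS_6} (four through each antipodal pair $\{a,d\}, \{b,e\}, \{c,f\}$, together with the ones coming from the rectangular part) and solving the resulting linear system on the $15$ pairwise distances; after elimination the system collapses to $\delta(a,c)=\delta(b,c)=\delta(d,f)=\delta(e,f)=:p$ and $\delta(a,e)=:q$, with every other distance then determined by the stated formula. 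Repeating exactly the same chain of substitutions on the good solution $\delta$, where each of the $16$ collinear equalities holds up to $\eta$-slack rather than exactly, yields the elementwise approximation $|\delta(t,t')-\bar\delta_{p^*,q^*}(t,t')|\le C_0\eta$ for $p^*:=\delta(a,c)$, $q^*:=\delta(a,e)$, and a universal constant $C_0$ controlled by the propagation depth.

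With this rigidity in hand, set $\bar\delta := \bar\delta_{p^*+C_0\eta,\,q^*+C_0\eta}$ and $\alpha := 4C_0\eta$; a short computation using $A+B=D$ and $D(t,t')\ge 1$ gives $\bar\delta(t,t') - \delta(t,t') \in [0, 2\alpha]$ for every terminal pair. Form $\fset'$ from $\fset$ by replacing each cluster $F(t)$ (whenever $|F(t)|\ge 2$) by the pair $F^*_t := \{t\}$ and $F'_t := F(t)\setminus\{t\}$, contributing at most six new clusters. Define $\delta'$ by: leaving all distances among clusters that were not split equal to $\delta$; identifying each $F'_t$ metrically with the old $F(t)$; placing $F^*_t$ at distance $\alpha$ from $F'_t$ and at distance $\delta(F(t),F)+\alpha$ from every other non-singleton cluster $F$; and setting $\delta'(F^*_t, F^*_{t'}) := \bar\delta(t,t')$. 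The first bullet of the lemma then holds by construction since $\bar\delta$ lies in the collinear-consistent family; feasibility as a $\zesn_{\textsf{ave}}$-solution is preserved because $\bar\delta\ge\delta$ on terminals, so the weighted-sum constraint only strengthens; and the triangle inequalities for $\delta'$ reduce to the bound $\bar\delta(t,t') \le \delta(F(t),F(t'))+2\alpha$, which is precisely what we established.

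For the cost bound, note that only terminal-incident edges see their contribution to $\vol$ change, and each changes by at most $\alpha$. A direct accounting over the path families in \Cref{table: path1} and \Cref{table: path2} plus the $L^2$-weighted triangle paths gives total terminal-incident edge weight $W = O(L^2)$, while $\vol(\fset,\delta) \ge \opt = \Theta(L^2)$ with an explicit constant, so $\vol(\fset',\delta') - \vol(\fset,\delta) \le \alpha W \le 30\eta\cdot\vol(\fset,\delta)$. The main obstacle is the numerical balancing: the propagation constant $C_0$ controlling the approximation $|\delta-\bar\delta_{p^*,q^*}|\le C_0\eta$ feeds into the choice of $\alpha$, which in turn must fit within the budget $30\eta\cdot\opt/W$. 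The two constraints $\alpha\ge C_0\eta$ (for the triangle inequalities through intermediate clusters) and $\alpha W\le 30\eta\cdot\vol$ (for the cost bound) are compatible only once one verifies that the ratio $\opt/W$ is bounded below by a constant comfortably exceeding what the propagation of the $\eta$-slacks forces, which is the most delicate part of the argument.
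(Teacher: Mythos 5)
Your proposal is correct in its core structure and captures the essential idea of the paper's proof — snap the near-collinear terminal distances to a nearby collinear-consistent metric, peel each terminal off into its own singleton cluster, and pay only along terminal-incident edges — but the execution differs in two genuine ways from what the paper does.

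First, you shift the terminal metric \emph{up} ($\bar\delta := \bar\delta_{p^*+C_0\eta,\,q^*+C_0\eta} \ge \delta$ entrywise), which makes feasibility of the $\zesn_{\textsf{ave}}$ constraint automatic. The paper instead shifts \emph{down} (sets $A = \delta(b,c)-3\eta$, $B = \delta(a,e)-3\eta$, so $\delta' \le \delta$ on terminals), which breaks the constraint $\sum \dset \cdot \delta' \ge \sum \dset \cdot \dist_\ell$ and forces a final normalization step (their $\delta^*$, costing another $(1+10\eta)$ factor). Your choice eliminates that step, which is a genuine simplification, at the modest price of slightly larger constants in $\bar\delta - \delta$. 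Second, your extension of $\delta'$ to the remaining clusters keeps all non-singleton-to-non-singleton distances identically equal to the old $\delta$ and only bumps singleton-to-anything distances by $\alpha$, whereas the paper re-embeds every cluster vector $x$ via the map $\bar x_t := \norm{x - \bar t}_\infty$ and uses $\ell_\infty$ distances between the $\bar x$'s (their Observations about $\bar x$ and $\norm{\bar x - \bar y}_\infty \le \norm{x-y}_\infty$). Both constructions yield valid semimetrics; I checked the triangle inequalities for yours and they go through precisely because $\bar\delta(t,t') \le \delta(t,t') + 2\alpha$ and the old $\delta$-triangles hold. Both approaches get to call on a rigidity observation; you state it as a two-parameter family $\{pA+qB\}$, the paper proves it piecemeal in their \Cref{clm:close} and then asserts the entrywise bound $\delta - 10\eta \le \delta' \le \delta$ as immediate. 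You correctly flag the constant-chasing as the delicate part: you need the propagation constant $C_0$ to satisfy roughly $4C_0 \le 30 \cdot \vol/W$, and since the path lengths are $\ge 2$ with two terminal-incident edges each, $\vol/W \ge 1$, giving room $C_0 \lesssim 7.5$; the propagation of $\eta$-slacks across two or three collinear equalities produces constants in that range, so the balancing should close, but you would need to exhibit the explicit chain for each of the 15 pairs as the paper partially does in their closeness claim before the bound is airtight.
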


\begin{proof}
    We first prove some properties of $\delta$. We say two number $x \sim_{\eta} y$ if $\card{x-y}<\eta$.
    \begin{claim} \label{clm:close}
        $\delta(a,c) \sim_{2\eta} \delta(b,c) \sim_{2\eta} \delta(d,f) \sim_{2\eta} \delta(e,f)$, $\delta(a,f) \sim_{2\eta} \delta(b,f) \sim_{2\eta} \delta(c,d) \sim_{2\eta} \delta(c,e)$.
    \end{claim}
    \begin{proof}
        By the assumption due to \Cref{obs:equ}, we have $\delta(b,c)+\delta(c,d) \sim_{\eta} \delta(b,f)+\delta(f,d)$ and $\delta(b,c)+\delta(b,f) \sim_{\eta} \delta(c,d)+\delta(d,f)$, which means $\delta(b,c) \sim_{\eta} \delta(d,f)$ and $\delta(b,f) \sim_{\eta} \delta(c,d)$. Similarly, we have $\delta(a,f) \sim_{\eta} \delta(c,d)$, $\delta(a,c) \sim_{\eta} \delta(d,f)$, $\delta(b,f) \sim_{\eta} \delta(c,e)$ and $\delta(b,c) \sim_{\eta} \delta(e,f)$.
    \end{proof}

We now proceed to construct a solution $(\fset',\delta')$ that satisfies the requirements in \Cref{lem:adjust}. We first define its distance between terminal clusters. 
Let $A = \delta(b,c) - 3\eta$ and $B = \delta(a,e) - 3\eta$. We set
\[\delta'(a,c)=\delta'(b,c)=\delta'(d,f)=\delta'(e,f)=A,\] \[\delta'(a,e)=B,\]
\[\delta'(a,b)=\delta'(d,e)=2A,\] \[\delta'(a,f)=\delta'(b,f)=\delta'(c,d)=\delta'(c,e)=A+B, \text{ and}\] \[\delta'(a,d)=\delta'(b,e)=\delta'(c,f)=\delta'(b,d) = 2A+B.\]
It is easy to verify that $\delta'$ satisfies the first condition of \Cref{lem:adjust}. Moreoever, we have the following immediate obseration from \Cref{clm:close} and fact that $(\fset,\delta)$ is a good solution.
    \begin{observation} \label{obs:D''}
        For any terminal pair $t_1,t_2$, we have $\delta(t_1,t_2) -10\eta \le \delta'(t_1,t_2) \le \delta(t_1,t_2)$.
    \end{observation}

%We then place all the steiner nodes into the metric. For any steiner node with vector $x$ in the solution $f$, it is not guarantee that the vector is valid unter the metric $\delta'$. We define a new vector $x'$ such that for any terminal $t$, $x'_t=\norm{x-t_{\delta'}}_{\infty}$ where $t_{\delta'}$ is the vector of $t$ in $\delta'$. Note that we can also define new vectors this way from termainl vectors $t_{\delta}$, and this might not be the same as $t_{\delta'}$. This means that we will add several (at most 6) new steiner nodes. We prove that $x'$ is a valid vector for metric $\delta'$ and is not far from $x$.

We next construct $\fset$ and generalize the definition of $\delta'$ to all pairs of sets in $\fset'$.
For each terminal $t\in T$, define vector $\bar t=(\bar t_s)_{s\in T}$ where $\bar t_s=\delta'(t,s)$.
Let $F$ be a set in $\fset$. We denote $x=(x_t)_{t\in T}$ where $x_t=\delta(F,F(t))$, and define $\bar x=(\bar x_t)_{t\in T}$ as the vector such that
\begin{equation}
\bar x_t=\norm{x-\bar t}_{\infty}.\label{eqn: new vector}
\end{equation}
In the next claim, we prove that $\bar x$ is a valid vector for metric $\delta'$ and is not far from $x$.

\begin{observation} \label{obs:x'}
    For any terminals $t_1$ and $t_2$, $\bar x_{t_1}+\delta'(t_1,t_2) \ge \bar x_{t_2}$ and $\bar x_{t_1}+\bar x_{t_2}\ge \delta'(t_1,t_2)$ hold. Moreover, for any terminal $t$, there is a terminal $t'$ such that $\bar x_t = x_t - \delta'(t,t')$ and $x_t \le \bar x_t \le x_t + 10\eta$.
\end{observation}

\begin{proof}
    The first and second inequalities are due to the fact that $\delta'(t_1,t_2) = \norm{\bar t_1-\bar t_2}_{\infty}$. For any terminal $t$, suppose $\norm{x-\bar t}_{\infty} = \card{x_{t'} - \delta'(t,t')}$. By definition of $\norm{\cdot}_{\infty}$ we have $\card{x_{t'}-\delta'(t,t')} \le x_t$. On the other hand, we have $x_t \ge \delta(t,t') - x_{t'}$ and by \Cref{obs:D''}, we have $\delta'(t,t') \le \delta(t,t')$, thus $x_t \ge \delta'(t,t') - x_{t'}$. So either $\bar x_t = \norm{x-t}_{\infty} = x_t$, or we have $x_{t'} \ge \delta'(t,t')$ and $\bar x_t = x_{t'} - \delta'(t,t')$. In both case, $\bar x_t = x_t - \delta'(t,t')$. Again by \Cref{obs:D''}, we have $\bar x_t \le x_{t'} - \delta(t,t') + 10\eta \le x_t + 10\eta$. Thus in both cases, we have $\card{\bar x_t-x_t} \le 10 \eta$.
\end{proof}

We define the distance in $\delta'$ between two sets $F,F'$ with vectors $\bar x$ and $\bar y$ respectively as $\norm{\bar x-\bar y}_{\infty}$. 
%We prove that distance between any two stiener nodes will not increase.

    \begin{observation} \label{obs:x'y'}
        $\norm{\bar x-\bar y}_{\infty} \le \norm{x-y}_{\infty}$.
    \end{observation}
    \begin{proof}
        For any $t$, without lose of generality assume $\bar x_t\ge \bar y_t$. Assume $\bar x_t = x_{t'}-\delta'(t,t')$ by \Cref{obs:x'}. Then by \Cref{obs:x'},
        $$\bar x_t - \bar y_t = x_{t'} - \delta'(t,t')-\bar y_t \le x_{t'} - \delta'(t,t') - (y_{t'} - \delta'(t,t')) \le x_{t'} - y_{t'}.$$
        Therefore for any terminal $t$, $\card{\bar x_t-\bar y_t} \le x_{t'}-y_{t'} \le \norm{x-y}_{\infty}$.
    \end{proof}

%Now we construct $f'$, for any vertex, if it is projected to a steiner node given by vector $x$ in $f$, then it projected to $\bar x$. If it projected to a terminal, then we project it to the new steiner nodes we create from the terminal, and the terminals still projected to itself. Now consider the cost of $f'$, by \Cref{obs:D''}, \Cref{obs:x'} and \Cref{obs:x'y'}, for the only case when an edge increase weight is the edges between a terminal and non-terminal in $G$. For any path in $G$, there are at most 2 such edges, which means each path is increased by at most $20 \eta$ by \Cref{obs:x'}. Thus the total cost is increased by at most $(1+10\eta)$-factor since the average length of the paths is larger than $2$. Finally, we note that the weighted average distance between the terminals in $\delta'$ might not equal to the weighted average distance in $D$ since we decrease the distance between terminals. Thus we need to normalize the distance between all pairs of points so that the weighted average distance is the same. We denote the new metric as $D^*$. By \Cref{obs:D''}, the factor is at most $1+10\eta$ since the average distance between terminals is more than $1$. Thus, the cost of $(D^*,f')$ is at most $(1+10\eta)^2 \le (1+30\eta)$ factor of the cost of $(\delta,f)$.

We now complete the construction of $(\fset',\delta')$.
We start from the collection $\fset$. For each terminal $t\in T$, we create a singleton set $\set{t}$ and replace the old set $F(t)\in \fset$ that contains it by $F(t)\setminus \set{t}$. Clearly, the resulting collection, which we denote by $\fset'$, is still a partition of all vertices in $G$, and $|\fset'|\le |\fset|+|T|=|\fset|+6$. For metric $\delta'$, 
\begin{itemize}
\item for every pair $F,F'\in \fset'$ that does not contain any terminals, $\delta'(F,F')=\norm{\bar x-\bar y}_{\infty}$, where vectors $\bar{x},\bar{y}$ are defined according to \ref{eqn: new vector} and their original vectors $x,y$;
\item for every pair $t,t'$ of terminals, the distance $\delta'(\set{t},\set{t'})$ has been defined after \Cref{clm:close}; and
\item for a terminal $t$ and a set $F\in \fset'$, the distance $\delta'(\set{t},F)$ is simply defined as $\bar x_t$, where $\bar{x}$ is defined according to \ref{eqn: new vector}.
\end{itemize}
From \Cref{obs:D''}, \Cref{obs:x'} and \Cref{obs:x'y'}, the only edges whose length are increased are edges between a terminal and a non-terminal. Each path in $G$ contains at most $2$ such edges, which means its cost is increased by at most $20 \eta$ by \Cref{obs:x'}. Thus the total cost is increased by at most a factor of $(1+10\eta)$. Finally, we note that the weighted average distance between the terminals in $\delta'$ might not equal the weighted average distance in $\delta$, as we have decreased the distance between some terminal pairs. Thus we need to normalize the distance between all pairs of points so that the weighted average distance is the same. We denote the new metric as $\delta^*$. By \Cref{obs:D''}, the normalizing factor is again at most $(1+10\eta)$. Altogether, the cost of the final solution $(\fset',\delta^*)$ is at most $(1+10\eta)^2 \le (1+30\eta)$ times the cost of $(\fset,\delta)$.
\end{proof}

%Since in metric $D^*$ satisfies the first condition in \Cref{lem:adjust}, it has the same tight span structure as $D$ (\Cref{obs:equ}). Using a similar analysis as before, we can prove that if $D^*(a,e)$ is not very small, say at least $1/2$, the cost of $(D^*,f')$ is at least $(1+10^{-4})$-factor of the original graph weight. However, if $D^*(a,e)$ is small, we cannot get the lower bound, and add $\gamma L^2$ demand between $a$ and $e$ can solve this problem. If $\delta(a,e) \le 1/2$, then the average length of the paths in $G$ in $\delta$ is increased by a $(1+\Omega(1))$-factor, which means the cost of the solution is also $(1+\Omega(1))$-factor as before. If $\delta(a,e) \ge 1/2$, we first argue that $\delta(a,e)$ cannot be too large. Note that we have at least $L^2$ paths between $(a,b)$ and $(b,e)$, if $\delta(a,e)>100$, the cost of the solution is at least $100L^2$, which is much larger than before. If $1/2 \le \delta(a,e) \le 100$, the average path length is at least $(1-100\gamma)$ factor as before. By the analysis before, the cost of the solution is at least $(1-100\gamma)(1+10^{-4})/(1+30\eta) = 1+\Omega(1)$ factor of the original graph weight.

Consider now the solution $(\fset',\delta')$ given by \Cref{lem:adjust}. As the metric $\delta'$ satisfies the first condition in \Cref{lem:adjust}, the restriction of $\delta'$ onto $T$ has the same tight span structure as that of $D$ (via similar analysis in \Cref{apd: structure of TS}). Therefore, via similar analysis, we can show that if $\delta'(a,e)>1/2$, then $\vol(\fset',\delta')\ge (1+10^{-6})\cdot \opt$, and consequently
\[\vol(\fset,\delta)\ge (1+30\eta)^{-1}\cdot (1+10^{-6})\cdot \opt\ge (1+10^{-11})\cdot\opt.\]
Together with \Cref{obs:equ}, in this case we get that $\vol(\fset,\delta)\ge (1+10^{-11})\cdot \opt$.

%However, if $D^*(a,e)$ is small, we cannot get the lower bound, and add $\gamma L^2$ demand between $a$ and $e$ can solve this problem.
In order to handle the case where $\delta'(a,e) \le 1/2$, we now consider the last demand, which is $\gamma L^2$ unit between $a$ and $e$. Note that $\delta(a,e)=1$.
If $\delta'(a,e) \le 1/2$, then the average length of the paths in $G$ in $\delta$ is increased by a $(1+\gamma/200)$-factor, which means that $\vol(\fset',\delta')\ge (1+\gamma/200)\cdot \opt$, and therefore 
\[\vol(\fset,\delta)\ge (1+\gamma/200)\cdot \opt\ge (1+10^{-18})\cdot\opt.\]

Assume now that $\delta'(a,e) \ge 1/2$.
\begin{itemize}
\item if $\delta'(a,e)>10^3$, note that we have $L^2$ paths between $(a,b)$ and $(b,e)$, and the total cost of these paths is at least $10^3L^2$, which is more than $2\cdot \opt$;
\item if $1/2 \le \delta'(a,e) \le 10^3$, the average path length is at least $(1-10^3\gamma)$ factor as before. By the analysis before, the cost of the solution is at least \[\vol(\fset,\delta)\ge (1-10^3\gamma)\cdot(1+10^{-11})\cdot \opt  \ge (1+10^{-18})\cdot\opt.\]
\end{itemize}

\paragraph{Acknowledgement.} We would like to thank Julia Chuzhoy for introducing this problem to us and for helpful discussions. We also thank Arnold Filtser for the information on previous works on this problem.

\appendix

\section{Comparison between Variants of $0$-Extension Problems}

\label{apd: comparison}

In the classic 0-Extension problem \cite{karzanov1998minimum}, the input consists of 
\begin{itemize}
	\item an edge-capacitated graph $G=(V,E,c)$; 
	\item a subset $T\subseteq V$ of $k$ vertices, that we call \emph{terminals}; and
	\item a metric $D$ on terminals in $T$. 
\end{itemize}
A solution to the 0-Extension problem is a partition $\fset$ of the vertex set $V$ into $|T|$ subsets, such that distinct terminals of $T$ belong to different sets in $\fset$; we call sets in $\fset$ \emph{clusters}, and for each vertex $u\in V$, we denote by $t(u)$ terminal that lies in the same cluster as $u$.
The goal is to minimize the cost \underline{$\vol(\fset)=\sum_{(u,v)\in E(G)}c(u,v)\cdot D(t(u),t(v))$}.

A natural approach towards constructing low-cost solutions is by considering the following semi-metric relaxation LP.
\begin{eqnarray*}
	\mbox{(\textsf{semi-metric LP})}\quad &  \text{minimize}\quad  \sum_{(u,v)\in E}c(u,v)\cdot \delta(u,v)\\
	& s.t. \quad (V,\delta) \text{ is a semi-metric space}\\
	& \delta(t,t')=D(t,t'), \quad\forall t,t'\in T
\end{eqnarray*}

Assume that we have solved the LP relaxation, and now want to round the obtained solution into a feasible solution to the 0-Extension problem. The \emph{rounding problem} that we face can be formulated as follows.
The input consists of 
\begin{itemize}
	\item an edge-capacitated graph $G=(V,E,c)$; 
	\item a subset $T\subseteq V$ of $k$ vertices, that we call \emph{terminals}; and
	\item lengths $\{\ell_e\}_{e\in E}$ of edges of $G$. 
\end{itemize}
Recall that $\dist_{\ell}(\cdot,\cdot)$ is the shortest-path distance metric on $V$ induced by the lengths $\{\ell_e\}_{e\in E}$.
A solution is a partition $\fset$ of the vertex set $V$ into $|T|$ subsets, such that distinct terminals of $T$ belong to different sets in $\fset$.
The goal is to minimize the cost \underline{$\vol(\fset)=\sum_{(u,v)\in E(G)}c(u,v)\cdot \dist_{\ell}(t(u),t(v))$}. In particular, the ratio between $\vol(\fset)$ and the value $\sum_{(u,v)\in E(G)}c(u,v)\cdot\ell(u,v)$ is a central measure to be minimized, which is called the \emph{average stretch}.

Intuitively, initially each edge in $G$ has length $\ell(e)$, which induces a shortest-path distance metric on the terminals. The rounding problem aims to find a way of ``moving'' all non-terminals to terminals, so each edge $(u,v)\in E(G)$, assuming $u$ is moved to $t(u)$ and $v$ is moved to $t(v)$, is stretched to an edge connecting $t(u)$ and $t(v)$, and will therefore have resulting length $\dist_{\ell}(t(u),t(v))$. The cost of such a moving schedule is simply the total resulting length of all edges, while $\sum_{(u,v)\in E(G)}\ell(u,v)$ can be viewed as the cost of the original graph.

$\ $

The most standard ``Steiner node version'' of the classic $0$-Extension problem should be as follows. Its input consists of 
\begin{itemize}
	\item an edge-capacitated graph $G=(V,E,c)$; 
	\item a subset $T\subseteq V$ of $k$ vertices, that we call \emph{terminals}; and
	\item a metric $D$ on terminals in $T$. 
\end{itemize}
A solution consists of
\begin{itemize}
	\item a partition $\fset$ of $V$, such that distinct terminals of $T$ belong to different sets in $\fset$; for each vertex $u\in V$, we denote by $F(u)$ the cluster in $\fset$ that contains it;
	\item a semi-metric $\delta$ on the clusters in $\fset$, such that \underline{for all pairs $t,t'\in T$, $\delta(F(t),F(t'))= D(t,t')$}.
\end{itemize}
The cost of a solution $(\fset,\delta)$ is defined as $\vol(\fset,\delta)=\sum_{(u,v)\in E}c(u,v)\cdot\delta(F(u),F(v))$, and its \emph{size} is defined as $|\fset|$.
The goal is to compute a solution $(\fset,\delta)$ with small size and cost. 

Similarly, we can write down a semi-metric LP relaxation for this generalized version of $0$-Extension, \underline{the $\zesn$ problem defined in \Cref{sec: tech_overview} and \Cref{sec: variants} is the \emph{rounding problem} of this version}.

Comparing their rounding problems, the $\zesn$ problem is essentially generalized from the classic 0-Extension problem in two aspects. First, we do not enforce all non-terminals to be moved to terminals, but allow them to be moved to some other ``Steiner nodes''. Specifically, in $\zesn$ we allow that some sets in $\fset$ do not contain any terminal (and therefore $|\fset|$ can be strictly greater than $|T|$), and so when such a set is contracted, they form a Steiner node in the sparsifier. Second, observe that we can equivalently view a solution to the 0-Extension problem as a pair $(\fset,\delta)$ instead of $\fset$, whose cost is defined as $\vol(\fset,\delta)=\sum_{(u,v)\in E(G)}c(u,v)\cdot \delta(F(u),F(v))$, same as $\zesn$. Only, in $0$-Extension $\delta$ has to be exactly the shortest-path distance metric $\dist_{\ell}(\cdot,\cdot)$ on $T$, while in $\zesn$, we only require that the restriction of $\delta$ onto $T$ dominates $\dist_{\ell}(\cdot,\cdot)$ entry-wise, and in $\zesn_{\textsf{ave}}$, it is required that
some average distance between terminal clusters (clusters that contain a terminal) in $\delta$ is at least the corresponding average distance between terminal in $\dist_{\ell}(\cdot,\cdot)$, an arguably weaker condition.

%\znote{to modify}

%It is also interesting to study the problem with only the first relaxation but not the second. Specifically, the input is the same and a solution is a pair $(\fset,\delta)$, with the only restriction that metric $\delta$, when restricted on terminal clusters, matches exactly the metric $\dist_{\ell}(\cdot,\cdot)$ on terminals. Since the $\zesn$ problem is a relaxation of this intermediate problem, our lower bound result (\Cref{lem: 0-ext}) for $\zesn$  automatically implies a same lower bound for this intermediate problem.

\section{Missing Proofs}

\subsection{Proof of \Cref{clm: feasible}}
\label{apd: Proof of clm: feasible}

	First we verify that, throughout the algorithm, all inequalities in $\set{\hat x_t+\hat x_{t'}\ge D(t,t')}_{t,t'}$ are satisfied. This is true for the initial $\hat x=x$. Now in each iteration, consider a pair $t,t'\in T$.
	\begin{itemize}
		\item If $t,t'$ are both inactive, then we will not update $\hat x_t$ or $\hat x_{t'}$, and the inequality $\hat x_t+\hat x_{t'}\ge D(t,t')$ continues to hold.
		\item If $t,t'$ are both active, then in this iteration $\Delta\le \Delta_{t,t'}=\frac{1}{2}\cdot (\hat x_t+\hat x_{t'}-D(t,t'))$, so after this iteration the new $\hat x_t$ and $\hat x_{t'}$, which we denote by $\hat x'_t$ and $\hat x'_t$, satisfy that
		\[\hat x'_t+\hat x'_{t'}= \hat x_t+\hat x_{t'}-2\Delta\ge \hat x_t+\hat x_{t'}-2\Delta_{t,t'}=  x_t+\hat x_{t'}-2\cdot \frac{1}{2}\cdot \bigg(\hat x_t+\hat x_{t'}-D(t,t')\bigg)\ge D(t,t').\]
		\item If $t$ is active and $t'$ is inactive (the case where $t'$ is active and $t$ is inactive is symmetric), then (i) in this iteration $\Delta\le \Delta_{t,t'}=(\hat x_t+\hat x_{t'}-D(t,t'))$; and (ii) after this iteration the new $\hat x_t$ and $\hat x_{t'}$, which we denote by $\hat x'_t$ and $\hat x'_t$, satisfy that $\hat x'_t= \hat x_t-\Delta$ and $\hat x'_{t'}= \hat x_{t'}$. Therefore,
		\[\hat x'_t+\hat x'_{t'}=\hat x_t+\hat x_{t'}-\Delta\ge \hat x_t+\hat x_{t'}-\Delta_{t,t'}=  x_t+\hat x_{t'}-\bigg(\hat x_t+\hat x_{t'}-D(t,t')\bigg)\ge D(t,t').\]
	\end{itemize}
	
	Next we show that, for each $t\in T$, some inequality $\set{\hat x_t+\hat x_{t'}\ge D(t,t')}_{t'}$ is tight for $\hat x$ at the end of the algorithm. Consider the iteration when $t$ becomes inactive. From the algorithm, there exists $t'$ such that $\Delta=\Delta_t=\Delta_{t,t'}$. Then:
	
	\begin{itemize}
		\item If $t'$ was active before this iteration, then in this iteration $\Delta= \Delta_{t,t'}=\frac{1}{2}\cdot (\hat x_t+\hat x_{t'}-D(t,t'))$, and so after this iteration, the new $\hat x_t$ and $\hat x_{t'}$, which we denote by $\hat x'_t$ and $\hat x'_t$, satisfy that
		\[\hat x'_t+\hat x'_{t'}= \hat x_t+\hat x_{t'}-2\Delta= \hat x_t+\hat x_{t'}-2\Delta_{t,t'}=  x_t+\hat x_{t'}-2\cdot \frac{1}{2}\cdot \bigg(\hat x_t+\hat x_{t'}-D(t,t')\bigg)= D(t,t').\]
		\item If $t'$ was inactive before this iteration, then $\Delta=\Delta_{t,t'}=(\hat x_t+\hat x_{t'}-D(t,t'))$, and after this iteration the new $\hat x'_t$ and $\hat x'_t$, satisfy that $\hat x'_t= \hat x_t-\Delta$ and $\hat x'_{t'}= \hat x_{t'}$. Therefore,
		\[\hat x'_t+\hat x'_{t'}=\hat x_t+\hat x_{t'}-\Delta= \hat x_t+\hat x_{t'}-\Delta_{t,t'}=  x_t+\hat x_{t'}-\bigg(\hat x_t+\hat x_{t'}-D(t,t')\bigg)= D(t,t').\]
	\end{itemize}
	In both cases, after this iteration, $t$ and $t'$ become inactive, so the coordinates $\hat x_t,\hat x_{t'}$ are no longer updated, implying that the equality $\hat x_t+\hat x_{t'}= D(t,t')$ continues to hold for the following iterations and for the ending $\hat x$.

\subsection{Completing the proof of \Cref{thm: upper 5}}
\label{apd: type 2 and 3}

In this section, we complete the proof of \Cref{thm: upper 5} for the remaining two types of tight spans.

\paragraph{Type 2.} An illustration of a type-$2$ metric is shown in \Cref{fig: type_2}, with its $1$-dimensional sets (line metrics $a$-$a'$,$b$-$b'$,$c$-$c'$,$d$-$d'$,$e$-$e'$) omitted for simplicity.

\begin{figure}[h]
	\centering
	\scalebox{0.1}{\includegraphics{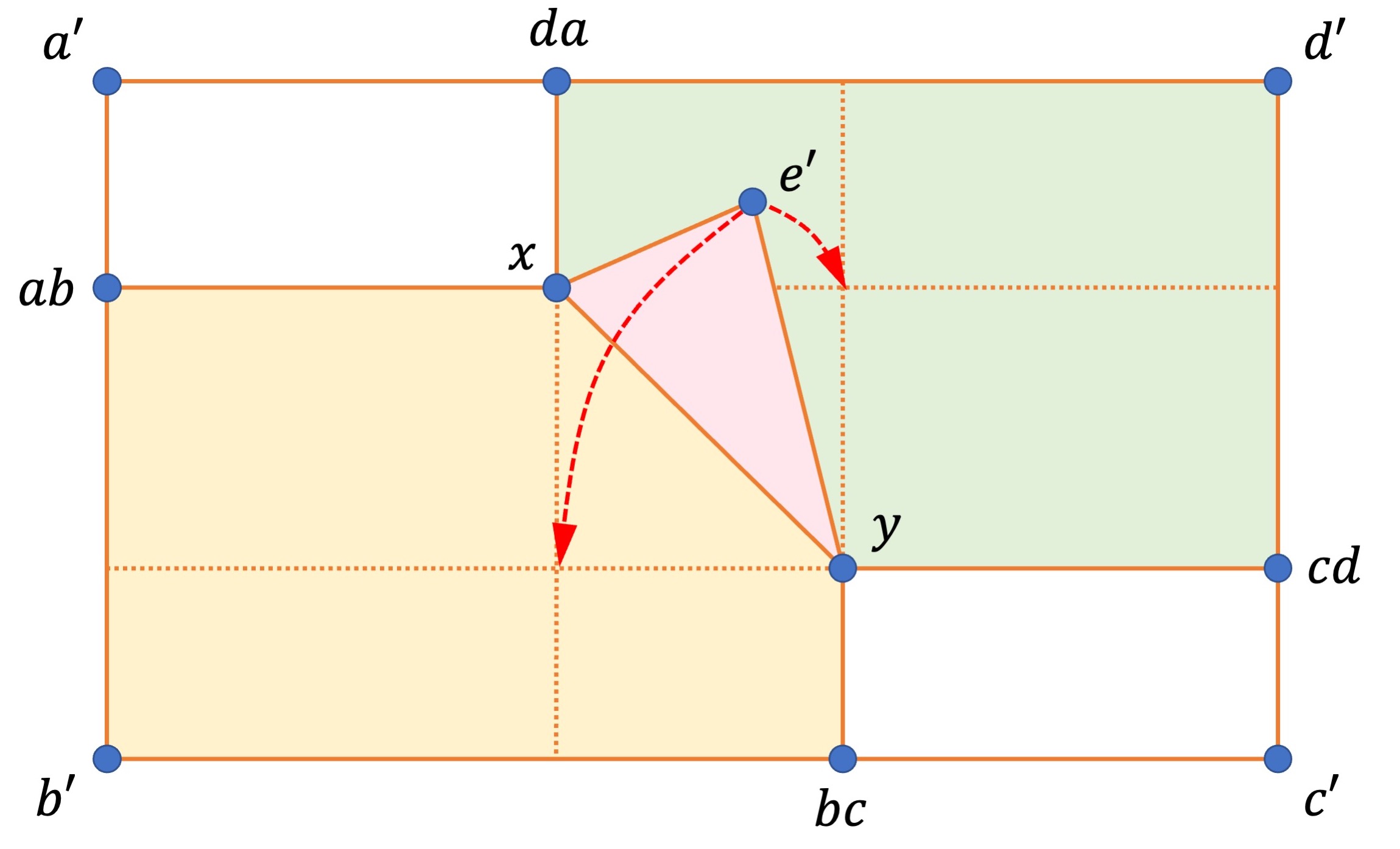}}
	\caption{An illustration of a type-$2$ tight span on $\set{a,b,c,d,e}$ (with pendant line metrics omitted).\label{fig: type_2}}
\end{figure}

The $2$-dimensional sets in this tight span are not as ``non-overlapping'' as they are in type-$1$ tight spans. 
Overall, the shape is the union of a rectangle $\overline{a'b'c'd'}$ and an isosceles right triangle $\overline{xe'y}$, both equipped with $\ell_1$ metric (with $a'b'$ and ${a'd'}$ as axis for the rectangle and $e'x$ and $e'y$ as axis for the triangle).
To better understand its structure, we look at its restriction onto its $4$-point subsets (that is, the tight span of a $4$-point subset of $\set{a,b,c,d,e}$).

\begin{figure}[h]
	\centering
	\subfigure[The restricted tight span on $\set{a,c,d,e}$.]
	{\scalebox{0.096}{\includegraphics{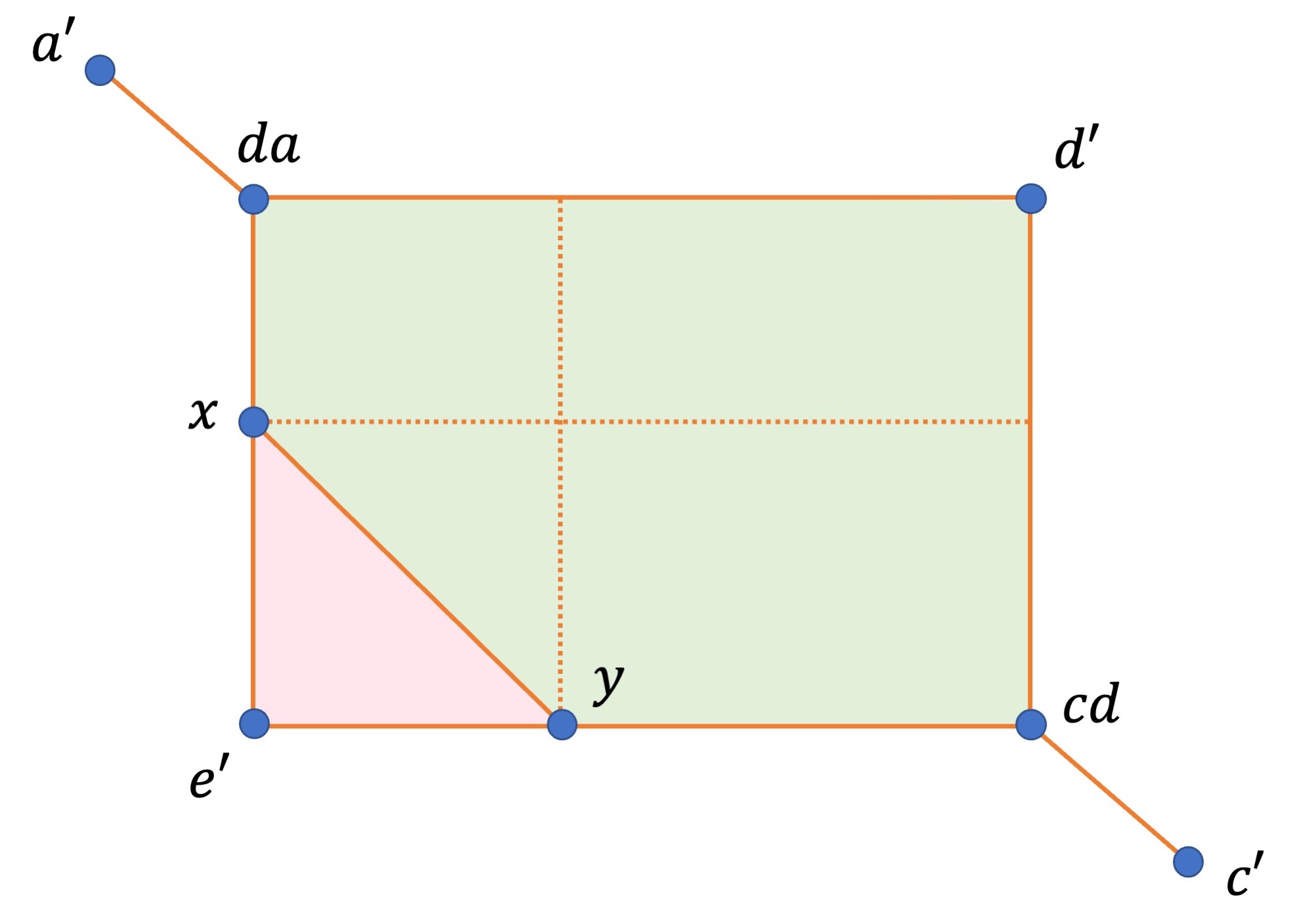}}}
	\hspace{0.5cm}
	\subfigure[The restricted tight span on $\set{a,b,c,e}$.]
	{
		\scalebox{0.1}{\includegraphics{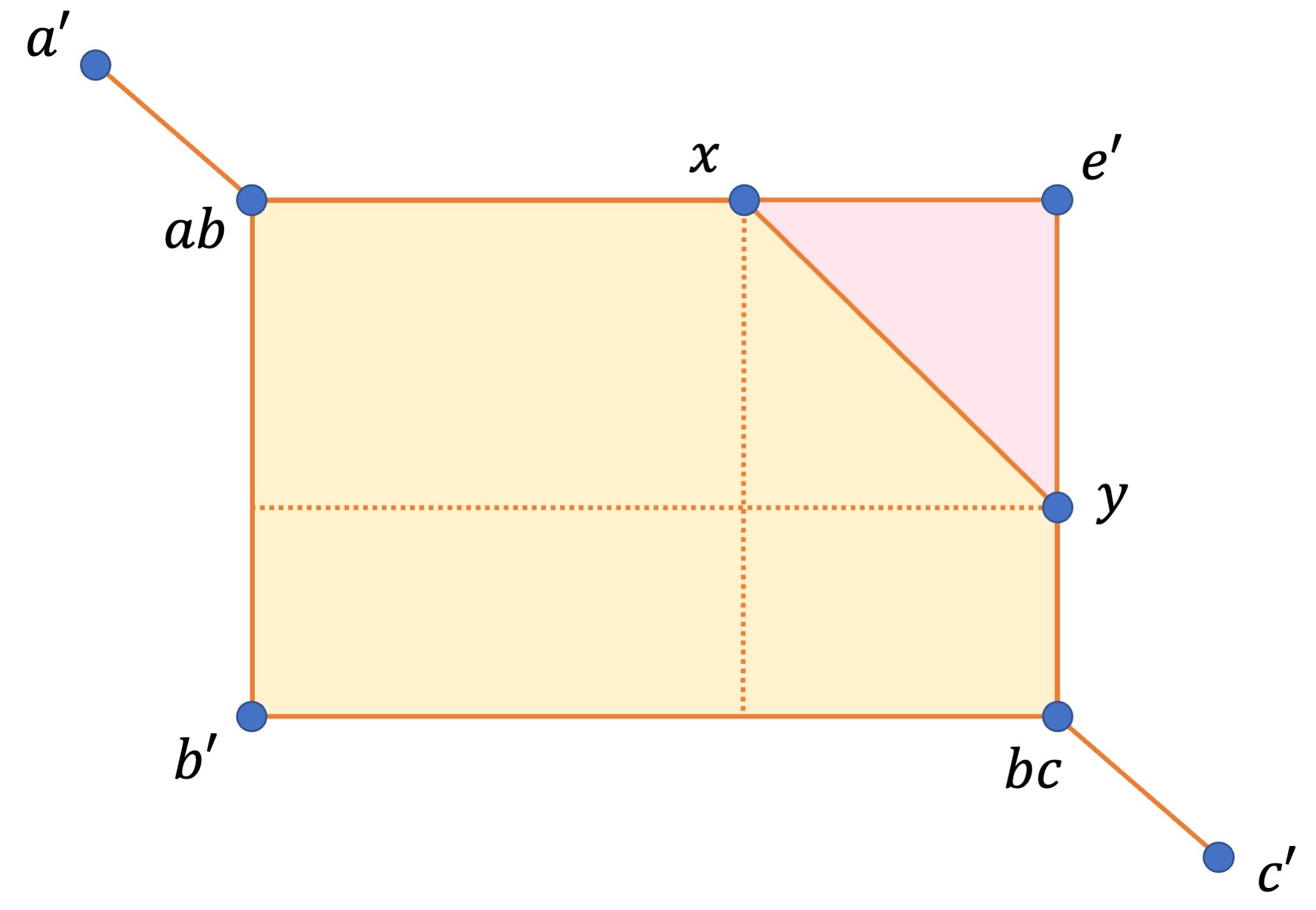}}}
	\caption{Restrictions of a type-$2$ tight span on $\set{a,b,c,d,e}$ onto its $4$-point subsets (pendants omitted).\label{fig: type_2_detail}}
\end{figure}

Its restriction onto $\set{a,b,c,d}$ is the rectangle $\overline{a'b'c'd'}$ (we still omit its $1$-dimensional sets for simplicity). Its restriction onto other $4$-point subsets are less obvious to see, as the triangle $\overline{xe'y}$ need to be ``properly folded into the rectangle'' in order to form the right rectangles.
As an example, the restriction onto $\set{a,c,d,e}$ is the rectangle $da$-$e'$-$cd$-$d'$, where rectangle is completed by folding the triangle to its lower left along line $x$-$y$. Similarly, the restriction onto $\set{a,b,c,e}$ is the rectangle $ab$-$b'$-$bc$-$e'$, where rectangle is completed by folding the triangle to its upper right along line $x$-$y$. See \Cref{fig: type_2_detail} for an illustration. The other two restrictions are similar.

\begin{figure}[h]
	\centering
	\scalebox{0.1}{\includegraphics{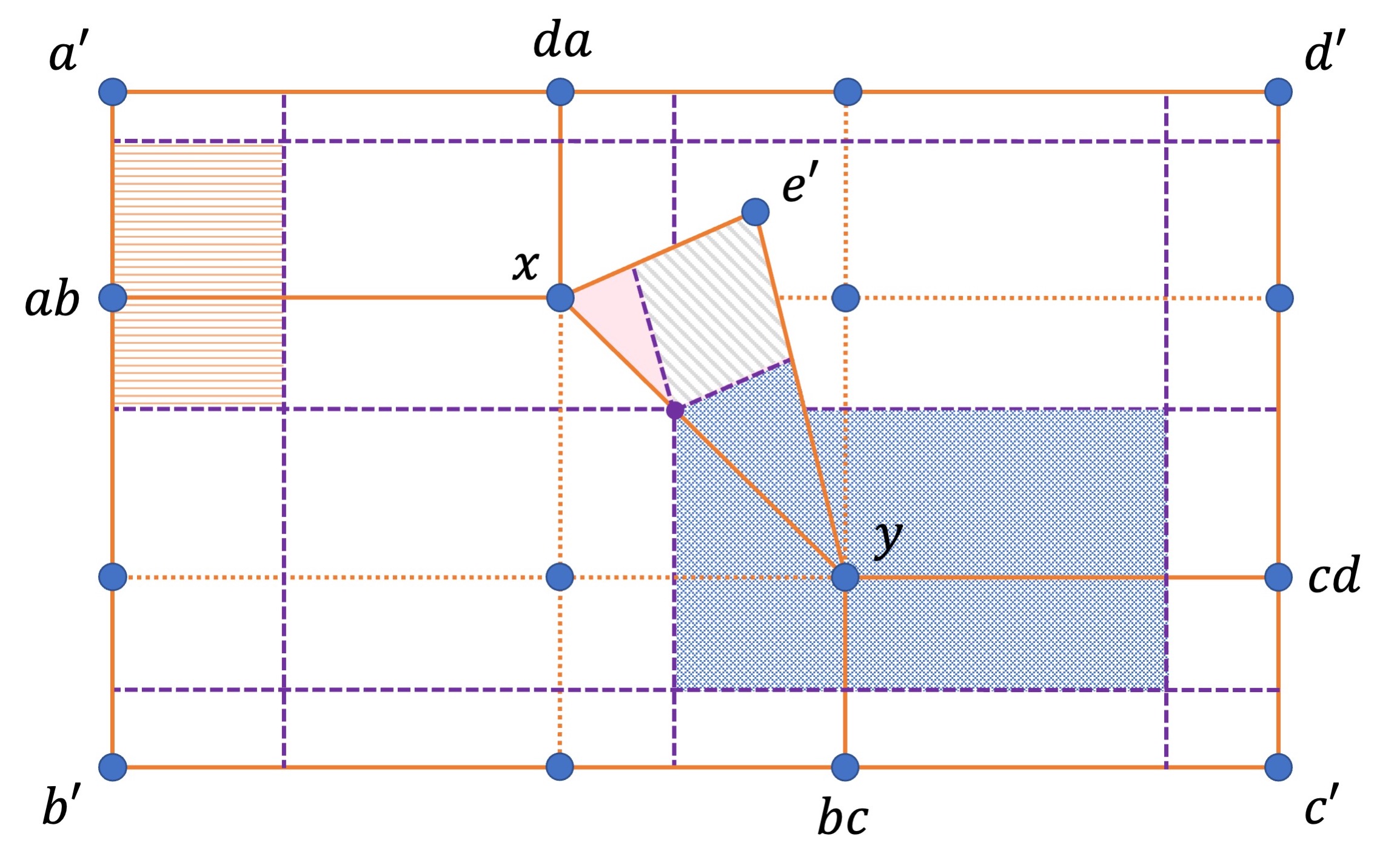}}
	\caption{An illustration of the decomposition of a type-$2$ tight span (and its corresponding partition $\fset$). The regions corresponding to sets $F_{ab},F_{e'},F_{y}$ are shaded.\label{fig: type_2_deco}}
\end{figure}

The construction of solution $(\fset,\delta)$, similar to type-$1$ tight spans, is based on a random decomposition of the tight span. Essentially, we can still prove \Cref{lem: non-expanding} as the metric space $(\TS(D),(\cdot,\cdot)_{\TS})$ is a $\ell_1$ metric.
Here we only describe the decomposition to avoid  redundancy.
For each individual boundary segment, we pick a random point on it and form a partition line with it. For example, the leftmost vertical purple line is determined by a random point on the $a'$-$da$ line, and the bottom horizontal purple line is determined by a random point on the $c'$-$cd$ line.
The horizontal partition line in rectangle $ab$-$cd$ and the vertical partition line in rectangle $da$-$bc$ needs to be ``entangled'', as they both contain line $xy$ from the triangle. So we will sample a point from the $xy$ line (the purple dot in the center), and form all partition lines according to it, including the $ab$-$cd$ horizontal line and the $da$-$bc$ vertical line in rectangle $\overline{a'b'c'd'}$, the $e'y$-parallel line and the $e'x$-parallel line in the triangle.
See \Cref{fig: type_2_deco} for an illustration.
The number of non-terminal sets in the partition $\fset$ is $17$ (all blue nodes), so $|\fset|\le 22$.

\paragraph{Type 3.} An illustration of a type-$3$ metric is shown in \Cref{fig: type_2}, with its $1$-dimensional sets (line metrics $a$-$a'$,$b$-$b'$,$c$-$c'$,$d$-$d'$,$e$-$e'$) omitted for simplicity.

\begin{figure}[h]
	\centering
	\subfigure
	{\scalebox{0.09}{\includegraphics{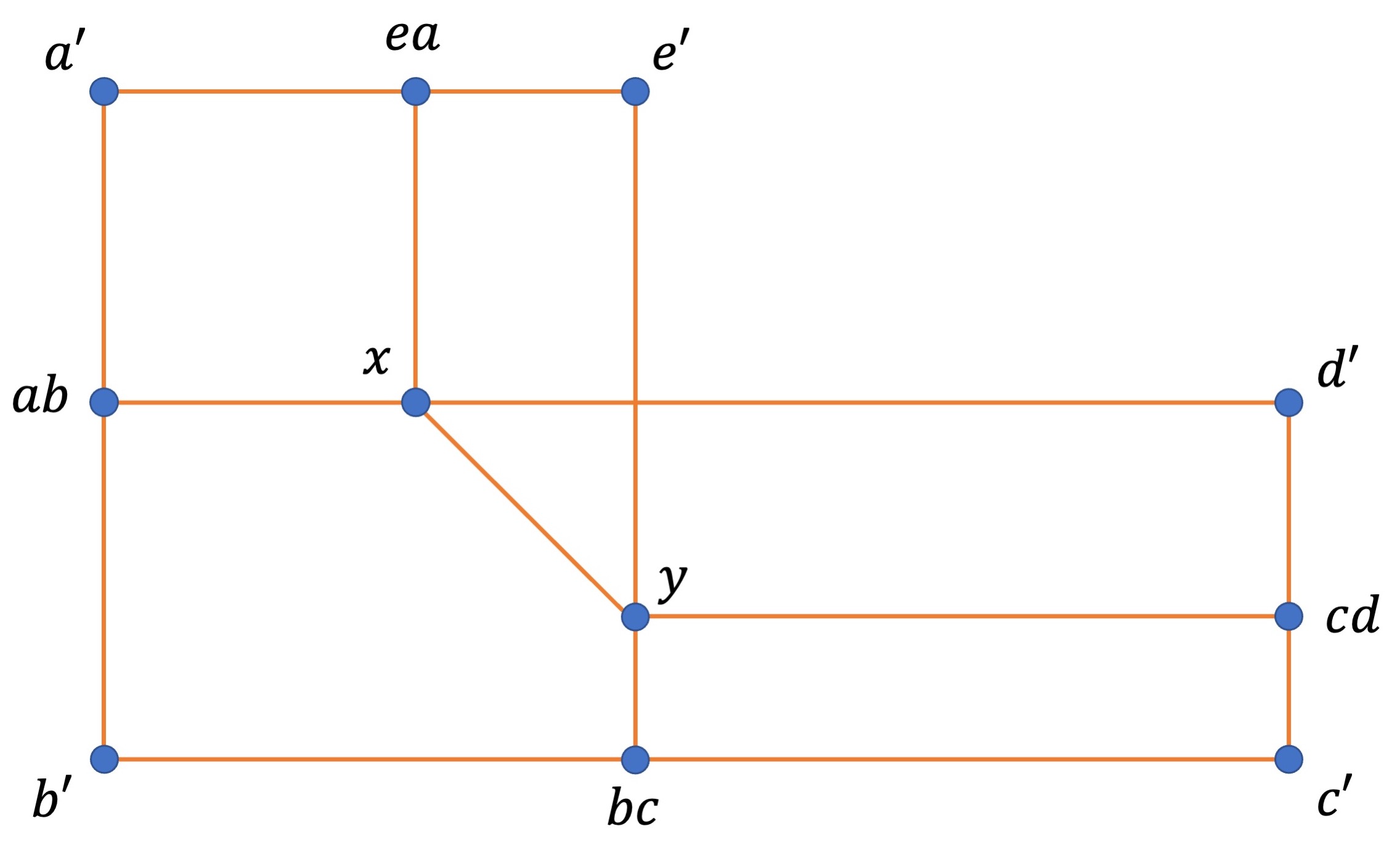}}}
	\hspace{0.5cm}
	\subfigure
	{
		\scalebox{0.09}{\includegraphics{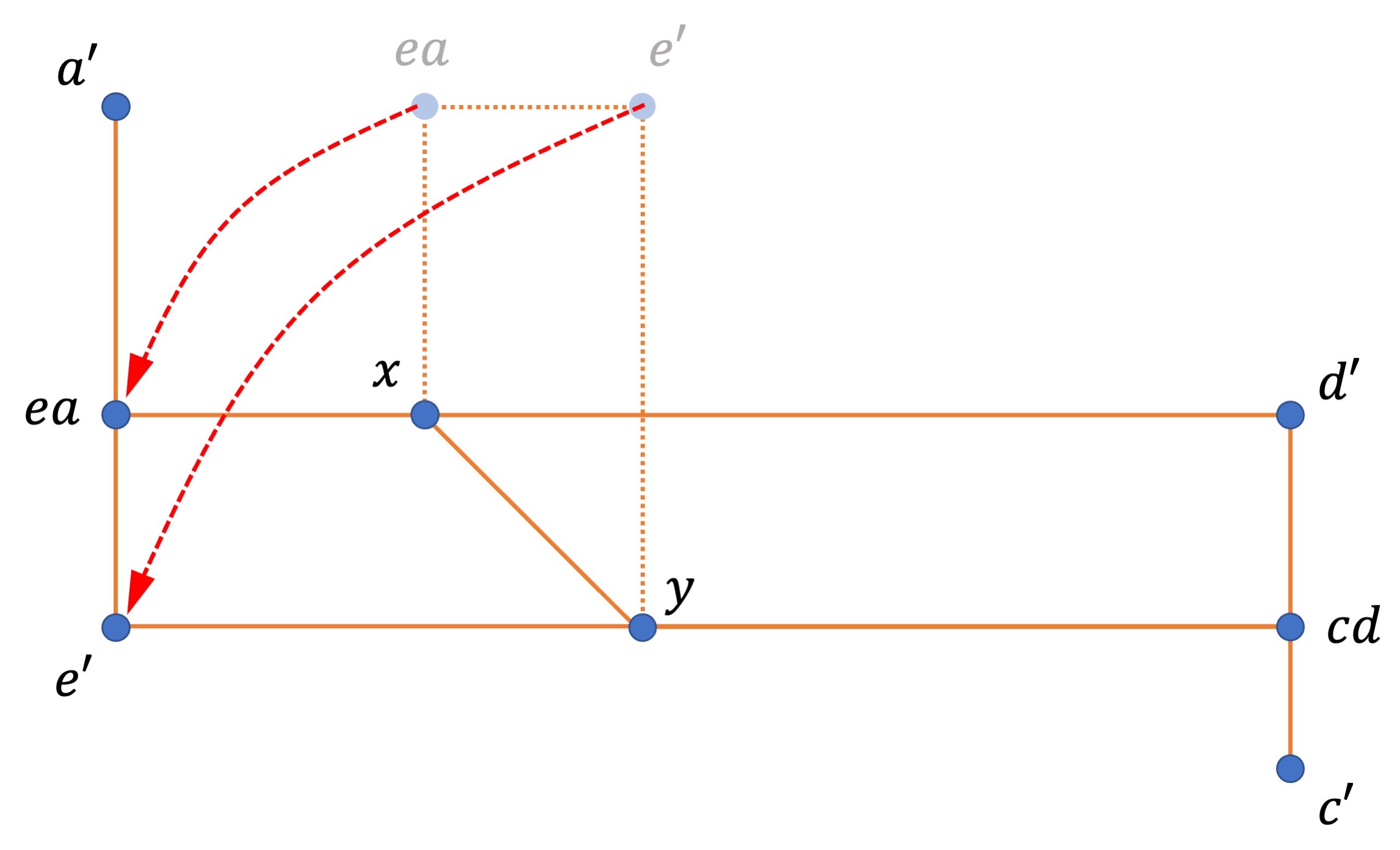}}}
	\caption{An illustration of a type-$3$ tight span on $\set{a,b,c,d,e}$ with pendant line metrics omitted (left), and its restrictions of onto its $4$-point subset $\set{a,c,d,e}$ (right).\label{fig: type_3}}
\end{figure}

Its overall shape is the union of rectangles $a'$-$b'$-$bc$-$e'$ and  $ab$-$b'$-$c'$-$d'$ with their intersection being the $ab$-$x$-$y$-$bc$-$b'$ area,
(that is, the $ea$-$x$-$y$-$e'$ area and the $d'$-$x$-$y$-$cd$ area are disjoint, although they look overlapping in the figure).
Both rectangles are equipped with $\ell_1$ metric.
Its restriction onto 
\begin{itemize}
\item $\set{a,b,c,d}$ is the rectangle $ab$-$b'$-$c'$-$d'$;
\item $\set{a,b,c,e}$ is the rectangle $a'$-$b'$-$bc$-$e'$;
\item $\set{a,b,d,e}$ is the rectangle $a'$-$ab$-$x$-$ea$;
\item $\set{b,c,d,e}$ is the rectangle $y$-$bc$-$c'$-$cd$; and
\item $\set{a,c,d,e}$ is the rectangle $ea$-$e'$-$cd$-$d'$, obtained by folding the area $ea$-$x$-$y$-$e'$ to its lower left along $xy$ line (see \Cref{fig: type_2}).
\end{itemize}

\begin{figure}[h]
	\centering
	\scalebox{0.1}{\includegraphics{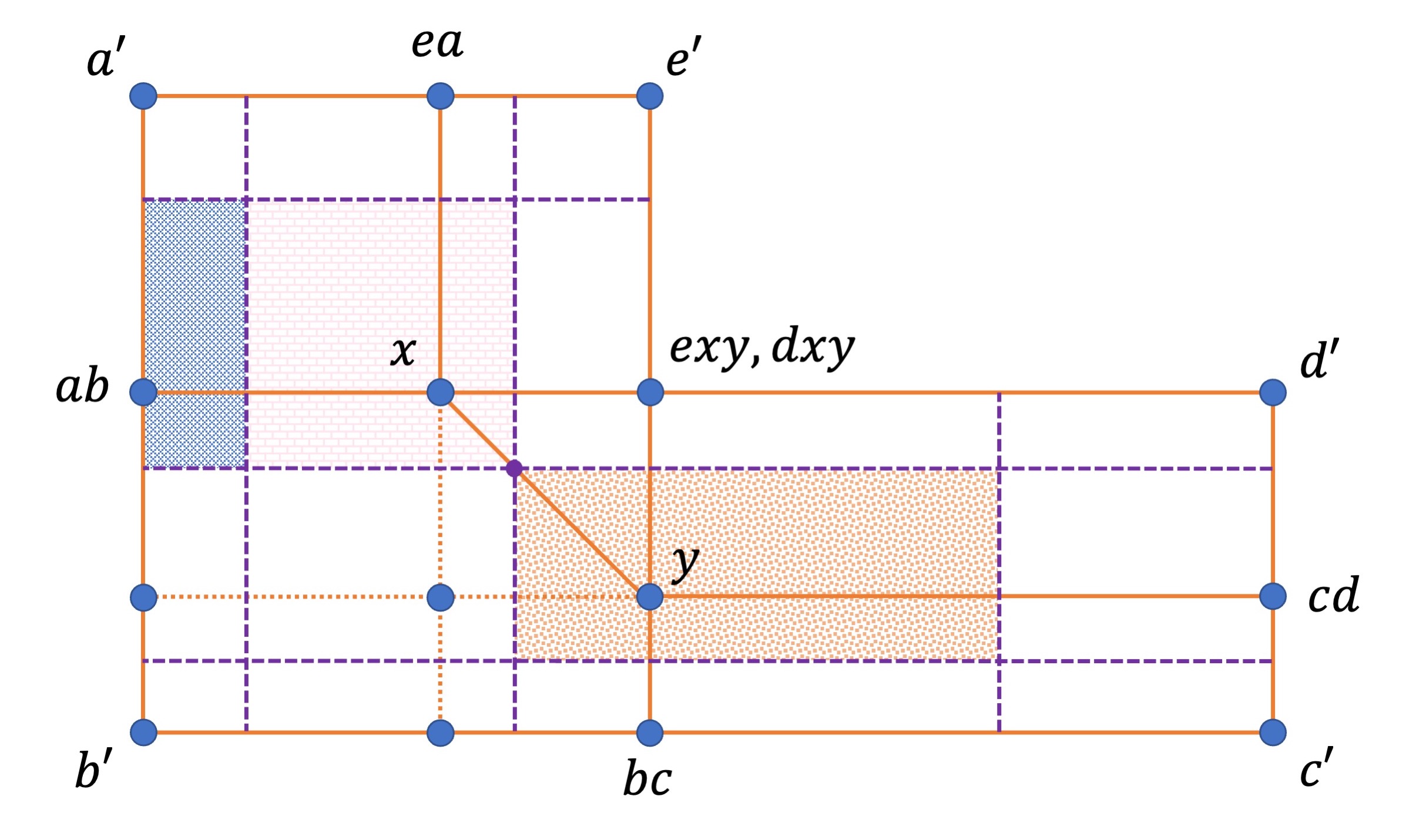}}
	\caption{An illustration of the decomposition of a type-$3$ tight span (and its corresponding partition $\fset$). The regions corresponding to sets $F_{ab},F_x,F_{y}$ are shaded.\label{fig: type_3_deco}}
\end{figure}

The construction of $(\fset,\delta)$ is similar to previous types, as essentially the metric space $(\TS(D),(\cdot,\cdot)_{\TS})$ is a $\ell_1$ metric.
We now describe the decomposition.
We pick a random vertex from $a'$-$ea$ segment to form the leftmost purple line and a random vertex from $c'$-$cd$ segment to form the bottom purple line.
Then we pick a random vertex from $x$-$y$ segment to form the horizontal line in $ab$-$cd$ rectangle and the vertical line in $ea$-$bc$ rectangle. Note that, as the areas $ea$-$x$-$y$-$e'$ and $d'$-$x$-$y$-$cd$ are disjoint, the partition line are also disjoint (although they look overlapping in \Cref{fig: type_3_deco}). A more detailed illustration of the partition $\fset$ is presented in \Cref{fig: type_3_deco_detail}. We can see that number of non-terminal sets in the partition $\fset$ is $16$ (all blue nodes), so $|\fset|\le 21$.

\begin{figure}[h]
	\centering
	\subfigure
	{\scalebox{0.12}{\includegraphics{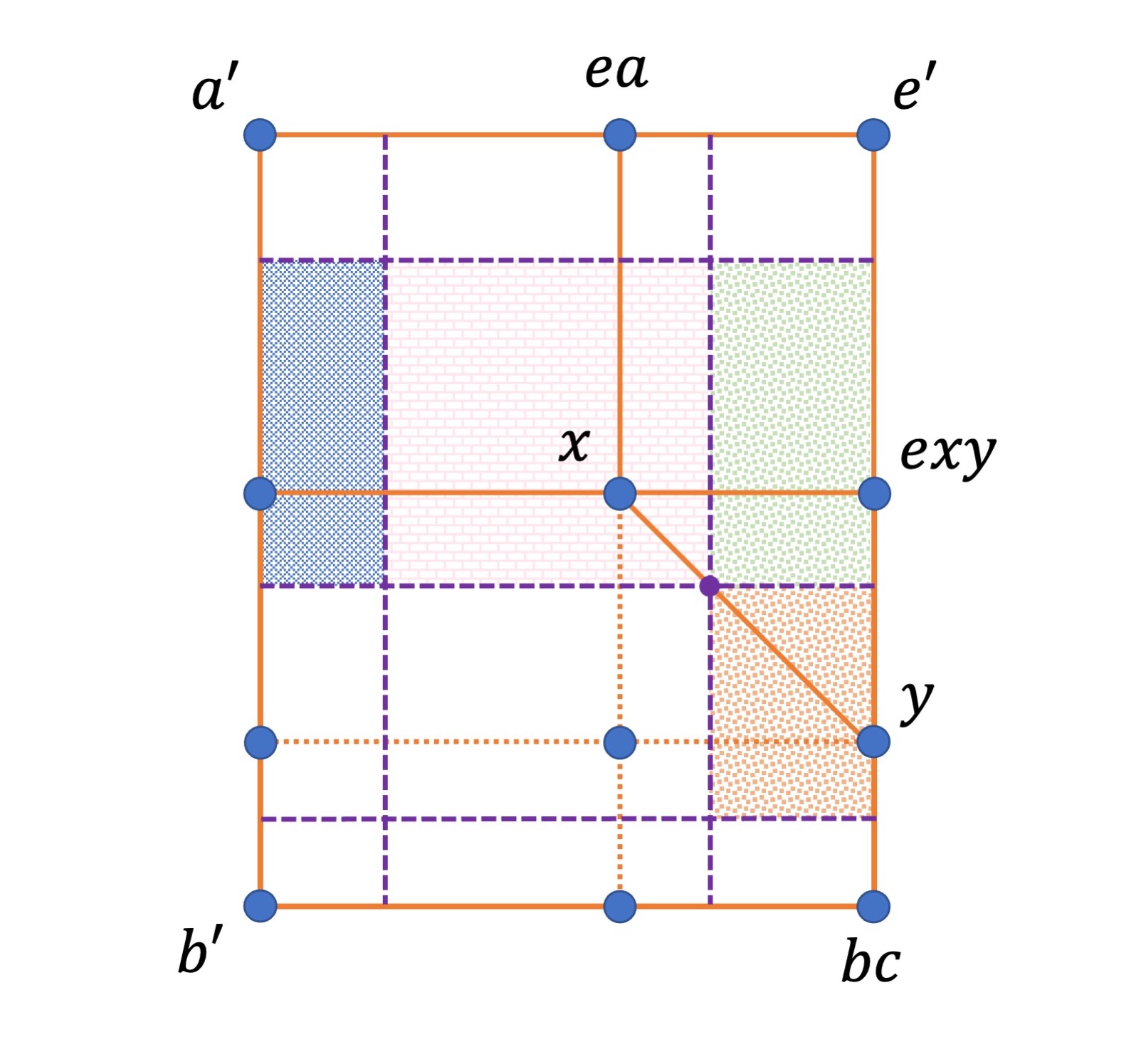}}}
	\hspace{0.5cm}
	\subfigure
	{
		\scalebox{0.1}{\includegraphics{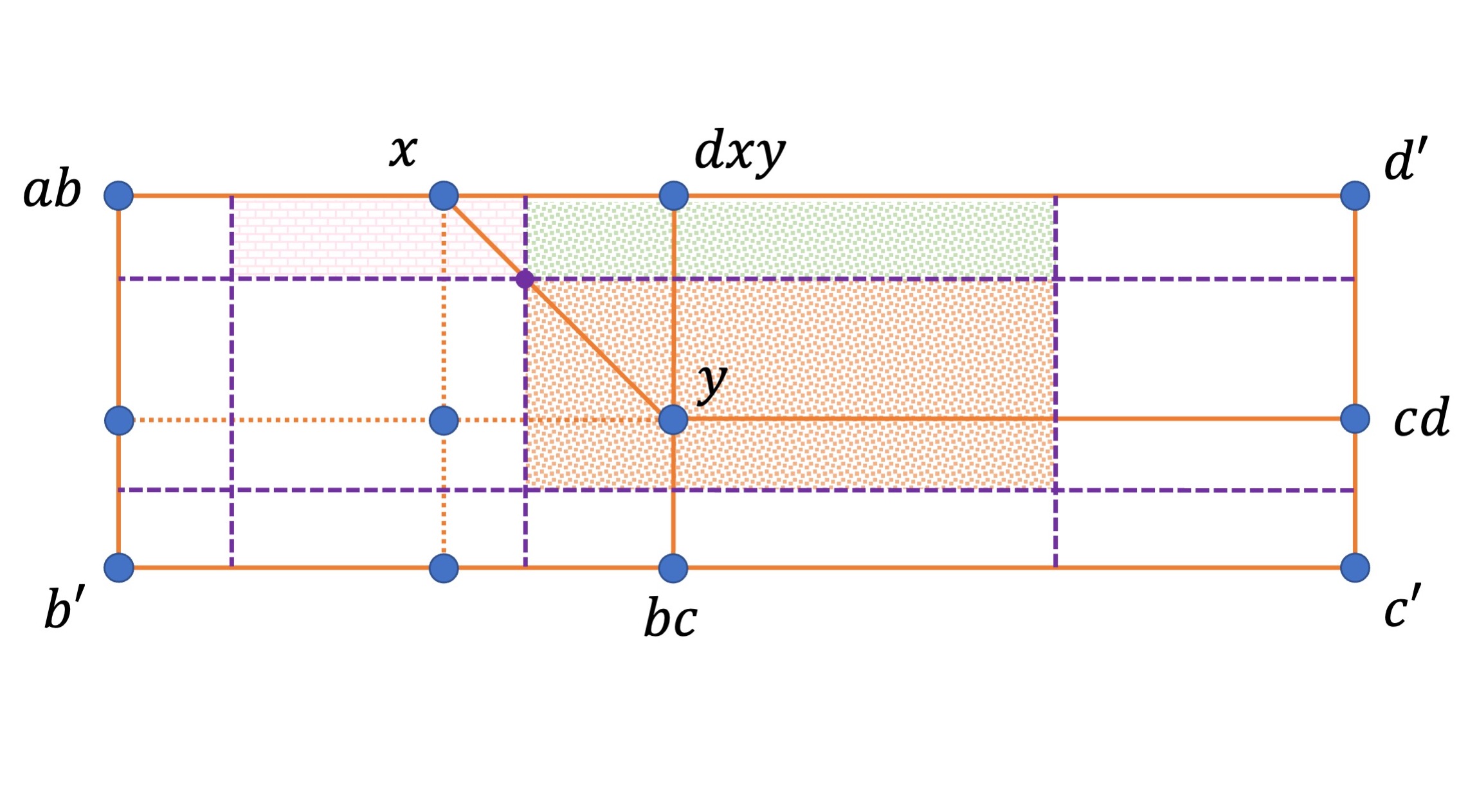}}}
	\caption{The restriction of the decomposition in \Cref{fig: type_3_deco} onto rectangle $a'$-$b'$-$bc$-$e'$ (left) and rectangle $ab$-$b'$-$c'$-$d'$ (right). $F_x$ is the union of two pink regions, $F_y$ is the union of two yellow regions, and $F_{exy}$ and $F_{dxy}$ are shown in green (their regions are disjoint).\label{fig: type_3_deco_detail}}
\end{figure}

\subsection{Determining the structure of $\TS(D)$ in \Cref{subsec: instance}}
\label{apd: structure of TS}

In this section, we explain why the tight span $\TS(D)$ of the metric $D$ is as described \Cref{subsec: instance}.
Recall that the metric $D$ is defined on the terminal set $\set{a,b,c,d,e,f}$, and each point $v\in \TS(D)$ is represented by a $6$-dimensional real vector $v=(v_a,v_b,v_c,v_d,v_e,v_f)$ where $v_t$ indicates the distance from $v$ to terminal $t$.
We start with the following observation.

\begin{observation} \label{obs:tight}
	For any three terminals $t_1,t_2,t_3 \in \set{a,b,c,d,e,f}$ such that $D(t_1,t_2)+D(t_2,t_3)=D(t_1,t_3)$, then for any point $v \in \TS(D)$, if $v_{t_1}+v_{t_2}=D(t_1,t_2)$, then $v_{t_1}+v_{t_3}=D(t_1,t_3)$.
\end{observation}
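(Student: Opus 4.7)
My plan is to derive the claim directly from the triangle inequalities guaranteed by membership in $\TS(D)$, together with the hypothesized tight constraint and the collinearity condition $D(t_1,t_2)+D(t_2,t_3)=D(t_1,t_3)$.

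The core sub-lemma I would establish first is the ``coordinate-wise $1$-Lipschitz'' property of points in the tight span: for any $v\in \TS(D)$ and any two terminals $s,s'$, we have $|v_s-v_{s'}|\le D(s,s')$. To prove this, I would use the defining equation $v_s=\max_{t\in T}\{D(s,t)-v_t\}$. For any $t\in T$, the triangle inequality $D(s,t)\le D(s,s')+D(s',t)$ in the metric $D$ gives $D(s,t)-v_t\le D(s,s')+(D(s',t)-v_t)\le D(s,s')+v_{s'}$, and taking the maximum over $t$ yields $v_s\le v_{s'}+D(s,s')$. Symmetry gives the other direction.

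With this sub-lemma in hand, the main argument is a short chain of inequalities. From the hypothesized tight equality $v_{t_1}+v_{t_2}=D(t_1,t_2)$, I rewrite
\[
v_{t_1}+v_{t_3}=D(t_1,t_2)+(v_{t_3}-v_{t_2}).
\]
The Lipschitz sub-lemma bounds $v_{t_3}-v_{t_2}\le D(t_2,t_3)$, so combining with the collinearity assumption $D(t_1,t_2)+D(t_2,t_3)=D(t_1,t_3)$ yields $v_{t_1}+v_{t_3}\le D(t_1,t_3)$. The reverse inequality $v_{t_1}+v_{t_3}\ge D(t_1,t_3)$ is just the generic triangle inequality that every $v\in\TS(D)$ satisfies (it is immediate from the defining $\max$-expression for either coordinate). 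Together these give equality, which is what is claimed.

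No step looks like a genuine obstacle here; the only subtlety is remembering that the Lipschitz property is a consequence of the tight span definition rather than something one has to assume separately, and that the collinearity in $D$ is exactly what lets us convert the additive slack into zero. So the whole proof amounts to the sub-lemma plus a two-line calculation.
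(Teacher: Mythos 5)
Your proposal is correct and follows essentially the same route as the paper: the paper's proof is the one-line chain $D(t_1,t_3)-v_{t_1}\le v_{t_3}\le v_{t_2}+D(t_2,t_3)=v_{t_2}+D(t_1,t_3)-D(t_1,t_2)=D(t_1,t_3)-v_{t_1}$, which is exactly your combination of the generic triangle inequality, the $1$-Lipschitz bound $v_{t_3}\le v_{t_2}+D(t_2,t_3)$, the collinearity assumption, and the tight constraint. The only difference is that you isolate the Lipschitz property as an explicit sub-lemma while the paper uses it implicitly.
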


\begin{proof}
	We have $D(t_1,t_3)-v_{t_1} \le v_{t_3} \le v_{t_2}+D(t_2,t_3) = v_{t_2}+D(t_1,t_3)-D(t_1,t_2) = D(t_1,t_3)-v_{t_1}$, so $v_{t_1}+v_{t_3}=D(t_1,t_3)$.
\end{proof}

Consider now any point $v \in \TS(D)$, by definition there is a terminal $t \in \{a,b,c,d,e,f\}$ such that $v_c+v_t=D(c,t)$. Note that for any $t \in \{a,b,c,d,e,f\}$, we have $D(c,t)+D(t,f)=D(c,f)$, thus by \Cref{obs:tight}, we have $v_c+v_f=D(c,f)$. By the same reason, we have either $v_b+v_e=D(b,e)$ or $v_b+v_d=D(b,d)$, and similarly we have either $v_a+v_d=D(a,d)$ or $v_b+v_d=D(b,d)$. Thus for any $v \in \TS(D)$,  either 
\begin{itemize}
\item equalities $v_c+v_f=D(c,f)$, $v_a+v_d=D(a,d)$ and $v_b+v_e=D(b,e)$ hold, and $v$ is in the triangular prism; or
\item equalities $v_c+v_f=D(c,f)$ and $v_b+v_d=D(b,d)$ hold, and $v$ is in the rectangle. In this case, $v_a = \max\{2-v_b, 3-v_d\}$ and $v_e= \max\{2-v_d,3-v_b\}$.
\end{itemize}

\bibliographystyle{alpha}
\bibliography{REF}

\end{document}